\newcommand{\ind}{\mathbb{I}} 
\newcommand{\U}{U_{n,2}}
\newcommand{\field}[1]{\mathbb{#1}}
\newcommand{\R}{\field{R}}
\newcommand{\F}{\field{F}}
\newcommand{\N}{\field{N}}
\newcommand{\E}{\field{E}}
\theoremstyle{example}
\theoremstyle{remark}
\theoremstyle{lemma}
\theoremstyle{definition}
\theoremstyle{corol}
\theoremstyle{proposition}
\theoremstyle{condition}
\theoremstyle{assumption}
\newtheorem{assumption}{\n{Assumption}}[section]
\newtheorem{theorem}{\n{Theorem}}[section]
\newtheorem{remark}{\n{Remark}}[section]
\newtheorem{lemma}{\n{Lemma}}[section]
\newtheorem{proposition}{\n{Proposition}}[section]
\def\lf{\lfloor}
\def\rf{\rfloor}
\font\n=cmcsc10
\def\var{{\mbox{var}}}
\def\diag{{\mbox{diag}}}
\def\cum{{\mbox{cum}}}
\title{Adaptive Inference for Change Points in High-Dimensional Data }
\author{Yangfan Zhang, Runmin Wang and Xiaofeng Shao \footnote{ Yangfan Zhang is Ph.D. student, Xiaofeng Shao is  Professor at Department of Statistics, University of Illinois at Urbana Champaign. Runmin Wang is Assistant Professor at  Department of Statistical Science, Southern Methodist University.   Emails: yangfan3@illinois.edu, runminw@mail.smu.edu and xshao@illinois.edu.  We would like to thank two anonymous referees  for constructive comments, which led to substantial improvements. We are also grateful to  Dr. Farida Enikeeva for sending
us the code used in Enikeeva and Harchaoui (2019). Shao's research is partially supported by NSF-DMS
1807023 and NSF-DMS-2014018. }}
\date{}
\begin{document}

\maketitle

{\bf Abstract}: In this article, we propose a class of test statistics for a change point in the mean of high-dimensional independent data. Our test integrates the U-statistic based approach in a recent work by \cite{hdcp} and the $L_q$-norm based high-dimensional test in \cite{he2018}, and inherits several appealing features such as being tuning parameter free and asymptotic independence for test statistics corresponding to even $q$s.  A simple combination of test statistics corresponding to several different $q$s leads to a test with adaptive power property, that is, it can be powerful against both sparse and dense alternatives. On the estimation front, we obtain the convergence rate of the maximizer of our test statistic  standardized by sample size when there is one change-point in mean and $q=2$, and propose to   combine our tests with a wild binary segmentation (WBS) algorithm to estimate the change-point number and locations when there are  multiple change-points. Numerical comparisons using both simulated and real data demonstrate the advantage of our adaptive test and its corresponding estimation method.

{\bf Keywords}: asymptotically pivotal,  segmentation, self-normalization, structural break, U-statistics

\section{Introduction}
\label{sec:intro}

Testing and estimation of change points in a sequence of time-ordered data is a classical problem in statistics. There is a rich literature for both univariate and multivariate data of low dimension; see \cite{csorgo1997}, \cite{chen2011} and \cite{tar2014}, for some book-length introductions and \cite{pe2006}, \cite{au2013}, and \cite{ami2017} for recent reviews of the subject. This paper addresses the testing and estimation for change points of high-dimensional data where the dimension $p$ is high and can exceed the sample size $n$.

As high-dimensional data becomes ubiquitous due to technological advances in science, engineering and other areas, change point inference under the high-dimensional setting has drawn great interest in recent years. When the dimension $p$ is greater than sample size $n$, traditional methods are often no longer applicable. Among recent work that addresses change point inference for the mean of high-dimensional data, we mention \cite{ho2012}, \cite{chan2013}, \cite{ji2015}, \cite{cho2016},  \cite{wang2018},  \cite{en2013},  \cite{hdcp} and \cite{yu20}. In most of these papers, the proposed methods are powerful either when the alternative is sparse and strong, i.e., there are a few large non-zero values in the components of mean difference, or when the alternative is weak and dense, i.e., there are many small values in the components of mean difference. 
Among some of these papers, the sparsity appeared either explicitly in the assumptions, e.g. \cite{wang2018}, who proposed to project the data to some informative direction related to the mean change, to which univariate change point detection algorithm can be applied,  or implicitly in the methodology, e.g. \cite{ji2015}, who took the maximal CUSUM statistic and therefore essentially targeted at the sparse alternative. \cite{yu20} recently introduced a Gaussian multiplier bootstrap to calibrate critical values of the sup norm of CUSUM test statistics in high dimensions and their test is also specifically for sparse alternative.
On the contrary, \cite{ho2012} aggregated the univariate CUSUM test statistics using the sum and their test is supposed to capture the dense alternative, but the validity of their method required the cross-sectional independence assumption. 
\cite{hdcp} aimed at dense alternatives by extending the U-statistic based approach pioneered by \cite{chenqin2010} in the two-sample testing problem.
An exception is the test developed in \cite{en2013}, which was based on a combination of a linear statistic and a scan statistic, and can be adaptive to both sparse and dense alternatives. However, its critical values were obtained under strong Gaussian and independent components assumptions and they do not seem to work when these assumptions are not satisfied; see Section~\ref{sec:simulation} for numerical evidence. 

In practice, it is often unrealistic to assume a particular type of alternative and there is little knowledge about the type of changes if any. Thus there is a need to develop new test that can be adaptive to different types of alternatives, and have good power against a broad range of alternatives. In this article, we shall propose a new class of tests that can have this adaptive power property, which holds without the strong Gaussian and independent components assumptions. 
Our test is built on two recent advances in the high-dimensional testing literature: \cite{hdcp} and \cite{he2018}. In \cite{hdcp}, they developed a mean change point test based on a U-statistic that is an unbiased estimator of the squared $L_2$ norm of the mean difference. They further used the idea of self-normalization [see \cite{shao2010a}, \cite{shao2010}, \cite{shao2015}] to eliminate the need of estimating the unknown nuisance parameter. \cite{he2018} studied both one sample and two sample high-dimensional testing problem for the mean and covariance matrix using $L_q$ norm, where $q\in [2,\infty]$ is some integer. They showed that the corresponding U-statistics at different $q$s are asymptotically independent, which facilitates a simple combination of the tests based on several values of $q$ (say $2$ and $\infty$) and their corresponding $p$-values, and that the resulting combined test is adaptive to both dense and sparse alternatives.

Building on these two recent advances, we shall propose a new $L_q$ norm based test for a change point in the mean of high-dimensional independent data. Our contributions to the literature is threefold. 
On the methodological front, we develop a new class of test statistics (as indexed by $q\in 2\N$) based on the principle of self-normalization in the high-dimensional setting. Our test is tuning parameter free when testing for a single change point. A simple combination of tests corresponding to different $q$s can be easily implemented due to the asymptotic independence and results in an adaptive test that has well-rounded power against a wide range of alternatives.  
On the theory front, as \cite{he2018} proved the asymptotic independence of one-sample and two-sample U-statistics corresponding to different $q$s, we  derive the asymptotic independence for several stochastic processes corresponding to different $q$s under significantly weaker assumptions. More precisely, we can define two-sample test statistics on different sub-samples for each $q\in 2\N$. These statistics can be viewed as smooth functionals of stochastic processes indexed by the starting and ending points of the sub-samples, which turn out to be asymptotically independent for different $q$s.
Compared to the adaptive test in \cite{en2013}, which relied on the Gaussian and independent components assumptions, our technical assumptions are much weaker, allowing non-Gaussianity and weak dependence among components. Furthermore, we obtained the convergence rate of the argmax of our SN-based test statistic  standardized by sample size when there is one change point and $q=2$. 
Lastly, in terms of empirical performance, we show in the simulation studies that the adaptive test can have accurate size  and high power for both sparse and dense alternatives. Their power is always close to the highest one given by a single statistic under both dense and sparse alternatives. 

The rest of the paper is organized as follows. In Section 2, we define our statistic, derive the limiting null distribution and analyze the asymptotic power when there is one change point. We also propose an adaptive procedure combining several tests of different $q\in 2\N$. In Section 3, we study the asymptotic behavior of change-point location estimators when there is a single change-point and combine the WBS algorithm with our test to estimate the location when there are multiple change points. In Section 4, we present some simulation results for both testing and estimation and apply the WBS-based estimation method to a real data set. Section 5 concludes. All technical details and some additional simulation results are gathered in the supplemental material.






\section{Test Statistics and Theoretical Properties}
\label{sec:test}

 Mathematically, let  $\{Z_t\}_{t=1}^n\in\R^p$ be i.i.d random vectors with mean 0 and covariance $\Sigma$. Our observed data is $X_t=Z_t+\mu_t$, where $\mu_t=E(X_t)$ is the mean at time $t$. The null hypothesis is that there is no change point in the mean vector $\mu_t$ and the alternative is that there is at least one change point, the location of which is unknown, i.e., we want to test
$$
\mathcal{H}_{0} : \mu_{1}=\mu_{2}=\cdots=\mu_{n} \quad v . s \quad \mathcal{H}_{1} : \mu_{1}=\cdots=\mu_{k_{1}} \neq \mu_{k_{1}+1}=\cdots=\mu_{k_{s}} \neq \mu_{k_{s}+1} \cdots=\mu_{n},
$$
where $k_1<k_2<\cdots<k_s$ and $s$ are unknown. Note that we assume temporal independence, which seems to be commonly adopted in change point analysis for genomic data; see \cite{zhang10}, \cite{jeng10}, and  \cite{zhang12} among others.

In this section, we first construct our two-sample U-statistic for a single change point alternative, which is the cornerstone for the estimation method we will introduce later. Then we derive the theoretical size and power results for our statistic. We also form an adaptive test that combines tests corresponding to different $q$s. Throughout the paper, we assume $p\wedge n\rightarrow+\infty,$ and we may use $p=p_n$ to emphasize that $p$ can depend on $n$. For a vector or matrix $A$ and $q\in2\N$, we use $\|A\|_q$ to denote $\big(\sum_{i,j}A_{ij}^q\big)^{1/q}$, and in particular, for $q=2,\|\cdot\|_q=\|\cdot\|_F$ equals the Frobenius norm. We use $\|\Sigma\|_s$ to denote the spectral norm. Denote the number of permutations $P_{q}^{k}=k!/(k-q)!$, and define $\sum^*$ to be the summation over all pairwise distinct indices. If $\lim_n a_n/b_n=0$, we denote $a_n=o(b_n)$, and if $0<\liminf_na_n/b_n\le\limsup_na_n/b_n<+\infty$, we denote $a_n\asymp b_n$.
Throughout the paper, we use ``$\stackrel{D}{\rightarrow}$" to denote convergence in distribution, ``$\stackrel{P}{\rightarrow}$" for convergence in probability, and ``$\leadsto$" for process convergence in some suitable function space. We use $\ell_{\infty}\left([0,1]^{3}\right)$ to denote the set of bounded functions on $[0,1]^3$.

\subsection{U-statistic and Self-normalization}
In this subsection, we shall develop our test statistics for one change point alternative, i.e.,
\[\mathcal{H}_{1} : \mu_{1}=\cdots=\mu_{k_{1}} \neq \mu_{k_{1}+1}= \cdots=\mu_{n},
\]
where $k_1$ is unknown. In  \cite{hdcp}, a U-statistic based approach was developed and their test targets at the dense alternative since the power is a monotone function of $\sqrt{n}\|\Delta\|_2/\|\Sigma\|_F^{1/2}$, where $\Delta$ is the difference between pre-break and post-break means, i.e., $\Delta=\mu_{n}-\mu_{1}$, and 
$\Sigma$ is the covariance matrix of $X_i$. Thus their test may not be powerful if the change in mean is sparse and $\|\Delta\|_2$ is small. Note that several tests have been developed to capture sparse alternatives as mentioned in Section~\ref{sec:intro}. In practice, when there is no prior knowledge of the alternative for a given data set at hand, it would be helpful to have a test that can be adaptive to different types and 
magnitudes of the change. To this end, we shall adopt the $L_q$ norm-based approach, as initiated by \cite{xu2016} and \cite{he2018}, and develop a class of test statistics indexed by $q\in2\N$, and then combine these tests to achieve the adaptivity.

Denote $X_i=(X_{i,1},\ldots,X_{i,p})^T$. For any positive even number $q\in2\N$, consider the following two-sample U-statistic of order $(q,q)$,
$$
T_{n, q}(k)=\frac{1}{P_{q}^{k} P_{q}^{n-k}} \sum_{l=1}^{p} \sum^*_{1 \leq i_{1},\ldots,i_{q} \leq k}\sum^*_{k+1 \leq j_{1},\ldots,j_{q} \leq n}\left(X_{i_{1}, l}-X_{j_{1}, l}\right) \cdots\left(X_{i_{q}, l}-X_{j_{q}, l}\right),
$$
for any $k=q, \cdots, n-q$.
Simple calculation shows that $\mathbb{E}\left[T_{n,q}(k)\right]=0$ for any $k=q,\cdots,n-q$ under the null hypothesis, and $\mathbb{E}\left[T_{n,q}(k_1)\right]=\|\Delta\|_q^q$ under the alternative.  When $q\in 2\N+1$ (i.e., $q$ is odd) and under the alternative, $\mathbb{E}\left[T_{n,q}(k_1)\right]=\sum_{j=1}^{p}\delta_j^q\not=\|\Delta\|_q^q$ where $\Delta=(\delta_1,\cdots,\delta_p)^T$. This is the main reason we focus on the statistics corresponding to even $q$s since for an odd $q$, $\sum_{j=1}^{p}\delta_j^q=0$ does not imply $\Delta=0$.

If the change point location $k_1=\lf \tau_1n\rf$, $\tau_1\in (0,1)$ is known, then we would use $T_{n,q}(k_1)$ as our test statistic. As implied by the asymptotic results shown later, we have that under the null,
\[\left(\frac{\tau_1(1-\tau_1)}{n\|\Sigma\|_q}\right)^{q/2}\frac{T_{n,q}(k_1)}{\sqrt{q!}}\stackrel{D}{\rightarrow} N(0,1),\]
under suitable moment and weak dependence assumptions on the components of $X_t$. In practice, a typical approach is to replace $\|\Sigma\|_q$ by a ratio-consistent estimator, which is available for $q=2$ [see \cite{chenqin2010}], but not for general $q\in 2\N$. In practice, the location $k_1$ is unknown, which adds additional complexity to the variance estimation and motivates \cite{hdcp} to use the idea of self-normalization [\cite{shao2010a}, \cite{shao2010}] in the case $q=2$. Self-normalization is a nascent inferential method [\cite{lobato2001}, \cite{shao2010a}] that has been developed for low and fixed-dimensional parameter in a low dimensional time series. It uses an inconsistent variance estimator to yield an asymptotically pivotal statistic, and does not involve any tuning parameter or involves less number of tuning parameters compared to traditional procedures. See \cite{shao2015} 
for a comprehensive review of recent developments for low dimensional time series. There have been two recent extensions to the high-dimensional setting: \cite{wangshao2019} adopted  a one sample U-statistic with trimming and extended self-normalization to inference for the mean of high-dimensional time series; \cite{hdcp} used a two sample U-statistic and extended the self-normalization (SN)-based change point test in \cite{shao2010} to high-dimensional independent data. Both papers are $L_2$ norm based, and this seems to be the first time that a $L_q$-norm based approach is extended to high-dimensional setting via self-normalization.

Following \cite{hdcp}, we consider the following self-normalization procedure. Define 
\begin{align*}
U_{n,q}(k;s,m)&=\sum_{l=1}^{p} \sum^*_{s \leq i_{1},\ldots,i_{q} \leq k}\sum^*_{k+1 \leq j_{1},\ldots,j_{q} \leq m}\left(X_{i_{1}, l}-X_{j_{1}, l}\right) \cdots\left(X_{i_{q}, l}-X_{j_{q}, l}\right),
\end{align*}
which is an un-normalized version of $T_{n,q}$ applied to the subsample $(X_s,\cdots,X_m)$. Let  
$$
W_{n,q}(k;s,m) :=\frac{1}{m-s+1} \sum_{t=s+q-1}^{k-q} U_{n,q}(t;s, k)^{2}+\frac{1}{m-s+1} \sum_{t=k+q}^{m-q} U_{n,q}(t; k+1,m)^{2},
$$
The self-normalized statistic is given by
$$
\widetilde{T}_{n,q} :=\max _{k=2q, \ldots, n-2q} \frac{U_{n,q}(k ; 1, n)^{2}}{W_{n,q}(k ; 1, n)}.
$$

\begin{remark}
If we want to test for multiple change points, we can use the scanning idea presented in \cite{zhang2018} and  \cite{hdcp} and construct the following statistic:
$$
T_{n,q}^{*}:=\max _{2q\le l_1\le l_2-2q} \frac{U_{n,q}\left(l_{1} ; 1, l_{2}\right)^{2}}{W_{n,q}\left(l_{1} ; 1, l_{2}\right)}+\max _{m_1+2q-1\le m_2\le n-2q} \frac{U_{n,q}\left(m_{2} ; m_1, n\right)^{2}}{W_{n,q}\left(m_{2} ; m_1,n\right)}.
$$
We shall skip further details as the asymptotic theory and computational implementation are fairly straightforward.
\end{remark}


\subsection{Limiting Null Distribution}

Before presenting our main theorem, we need to make the following assumptions.
\begin{assumption}

Suppose $Z_1,\ldots,Z_n$ are i.i.d. copies of $Z_0$ with mean 0 and covariance matrix $\Sigma$, and the following conditions hold.

\begin{enumerate}
    \item There exists $c_{0}>0$ not depending $n$ such that inf $_{i=1, \ldots, p_{n}} \operatorname{Var}\left(Z_{0, i}\right) \geq c_{0}$.
    \item $Z_0$ has up to $8$-th moments, with $\sup_{1\le j \le p}\E[Z_{0,j}^8]\le C,$ and for $h=2, \ldots, 8$ there exist constants $C_{h}$ depending on $h$ only and a constant $r>2$ such that $$\left|\operatorname{cum}\left(Z_{0, l_{1}}, \ldots, Z_{0, l_{h}}\right)\right| \leq C_{h}\left(1 \vee \max _{1 \leq i, j \leq h}\left|l_{i}-l_{j}\right|\right)^{-r}.$$
\end{enumerate}
\label{cumr}
\end{assumption}

\begin{remark}[Discussion of Assumptions]
The above cumulant assumption is implied by geometric moment contraction [cf. Proposition 2 of \cite{wu2004}] or physical dependence measure proposed by \cite{wu2005} [cf. Section 4 of \cite{shaowu2007}], or $\alpha$-mixing [\cite{andrews1991}, \cite{zhur1975}] in the time series setting. It basically imposes weak dependence among the $p$ components in the data. Our theory holds as long as a permutation of $p$ components satisfies the cumulant assumption, since our test is invariant to the permutation within the components. 
\label{assdis}
\end{remark}

To derive the limiting null distribution for $\widetilde{T}_{n,q}$, we need to define some useful intermediate processes. Define
\begin{align*}
D_{n,q}(r;[a,b])&=U_{n,q}(\lf nr\rf;\lf na\rf+1,\lf nb\rf)\\
&=\sum_{l=1}^{p}\sum^*_{\lf na\rf+1 \leq i_{1},\ldots,i_{q} \leq \lf nr \rf}\sum^*_{\lf nr \rf+1 \leq j_{1},\ldots,j_{q} \leq \lf nb \rf}\left(X_{i_{1}, l}-X_{j_{1}, l}\right) \cdots\left(X_{i_{q}, l}-X_{j_{q}, l}\right),
\end{align*}
for any $0\le a<r<b\le 1$. Note that under the null, $X_i$'s have the same mean. Therefore, we can rewrite $D_{n,q}$ as
\begin{align*}
D_{n,q}(r;[a,b])=&\sum_{l=1}^{p} \sum^*_{\lf na \rf+1 \leq i_{1},\ldots,i_{q} \leq \lf nr\rf}\sum^*_{\lf nr\rf+1 \leq j_{1},\ldots,j_{q} \leq \lf bn\rf}\left(X_{i_{1}, l}-X_{j_{1}, l}\right) \cdots\left(X_{i_{q}, l}-X_{j_{q}, l}\right)\\
=&\sum_{l=1}^{p} \sum^*_{\lf na \rf+1 \leq i_{1},\ldots,i_{q} \leq \lf nr\rf}\sum^*_{\lf nr\rf+1 \leq j_{1},\ldots,j_{q} \leq \lf bn\rf}\left(Z_{i_{1}, l}-Z_{j_{1}, l}\right) \cdots\left(Z_{i_{q}, l}-Z_{j_{q}, l}\right)\\
=&\sum_{c=0}^q(-1)^{q-c}\binom{q}{c}P^{\lf nr\rf-\lf na\rf-c}_{q-c}P^{\lf nb\rf-\lf nr\rf-q+c}_cS_{n,q,c}(r;[a,b]).
\end{align*}
In the above expression, considering the summand for each $c = 0,1,\ldots,q,$ we can define, for any $0\le a<r<b\le 1$,
$$
S_{n,q,c}(r;[a,b])=\sum_{l=1}^{p}\sum^*_{\lf na\rf+1\leq i_1,\cdots,i_c\leq\lf nr\rf}\sum^*_{\lf nr\rf+1\leq j_1,\cdots,j_{q-c}\le \lf nb\rf}\left(\prod_{t=1}^{c} Z_{i_{t}, l} \prod_{s=1}^{q-c} Z_{j_{s}, l}\right),
$$
if $\lfloor n r\rfloor \geq\lfloor n a\rfloor+ 1$ and $\lfloor n b\rfloor \geq\lfloor n r\rfloor+ 1,$ and 0 otherwise.

\begin{theorem}
If Assumption \ref{cumr} holds, then under the null and for a finite set $I$ of positive even numbers, we have that 
$$\Big\{a_{n,q}^{-1} S_{n, q, c}(\cdot ;[\cdot, \cdot])\Big\}_{q\in I,0\le c\le q} \leadsto \Big\{Q_{q, c}(\cdot ;[\cdot, \cdot])\Big\}_{q\in I,0\le c\le q}$$
in $\ell_{\infty}\left([0,1]^{3}\right)$ jointly over $q\in I, 0\le c\le q$,  where $a_{n,q}=\sqrt{n^q\sum^p_{l_1,l_2=1}\Sigma^q_{l_1,l_2}}=\sqrt{n^q\|\Sigma\|_q^q}$, and $Q_{q, c}$ are centered Gaussian processes. Furthermore, the covariance of $Q_{q, c_1}$ and $Q_{q, c_2}$ is given by 
$$ 
\operatorname{cov}\left(Q_{q, c_1}(r_1;[a_1, b_1]),Q_{q, c_2}(r_2;[a_2, b_2])\right)=\binom{C}{c}c!(q-c)!(r-A)^c(R-r)^{C-c}(b-R)^{q-C},
$$
where $(r,R)=(\min\{r_1,r_2\},\max\{r_1,r_2\})$,$(a,A)=(\min\{a_1,a_2\},\max\{a_1,a_2\})$,$(b,B)=(\min\{b_1,b_2\},\\ \max\{b_1,b_2\}),$ and $(c,C)=(\min\{c_1,c_2\},\max\{c_1,c_2\})$. 
Additionally, $Q_{q_1,c_1}$ and $Q_{q_2,c_2}$ are mutually independent if $q_{1} \neq q_{2}\in2\N$.
\label{Q}
\end{theorem}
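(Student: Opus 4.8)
The plan is to prove the process convergence by the standard two-step programme—convergence of finite-dimensional distributions (fidis) to the stated Gaussian law, followed by tightness (asymptotic equicontinuity) in $\ell_\infty([0,1]^3)$—and then to read off the independence assertion as a direct consequence of \emph{joint} Gaussianity of the limit together with a vanishing cross-covariance. The structural observation that drives everything is that each $S_{n,q,c}$ is a multilinear statistic in the temporally i.i.d.\ vectors $Z_t$: every summand is a product of exactly $q$ factors $Z_{\cdot,l}$ carried on $q$ \emph{distinct} time indices and a common coordinate $l$. Since $\E[Z_{0,l}]=0$ and the $Z_t$ are independent across $t$, an expectation of a product of such summands is nonzero only when every time index appearing is repeated; and because the indices within a single copy are already distinct (the $\sum^*$), each time can be repeated at most once per copy.

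I would first compute $\cov\big(a_{n,q}^{-1}S_{n,q,c_1}(\cdot),\,a_{n,q'}^{-1}S_{n,q',c_2}(\cdot)\big)$ by expanding the product and grouping terms according to the coincidence pattern of their time indices. By the observation above, the only non-negligible pattern is a \emph{perfect matching} of the time indices of the first copy with those of the second, which is possible only when the two copies carry the same number of factors, i.e.\ $q=q'$. This already forces the cross-covariance to vanish whenever $q_1\neq q_2$, and hence proves the independence clause once joint Gaussianity is established. For $q=q'$, each matched pair contributes a factor $\E[Z_{t,l_1}Z_{t,l_2}]=\Sigma_{l_1,l_2}$, and summing the resulting $\Sigma_{l_1,l_2}^q$ over $l_1,l_2$ yields exactly $\|\Sigma\|_q^q$, which is absorbed by the normalizer $a_{n,q}^2=n^q\|\Sigma\|_q^q$. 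The remainder is a purely combinatorial count: bookkeeping which of the three overlap regions $(A,r]$, $(r,R]$, $(R,b]$ each matched pair falls into (left--left, left--right, right--right), times the number of distinct time indices available in each region (which contributes the interval-length powers $(r-A)^{c}$, $(R-r)^{C-c}$, $(b-R)^{q-C}$), times the number of matchings of each type (the factor $\binom{C}{c}c!(q-c)!$). A diagonal check with $r_1=r_2$, $a_1=a_2$, $b_1=b_2$, $c_1=c_2=c$ collapses this to $c!(q-c)!(r-a)^c(b-r)^{q-c}$, confirming the formula.

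The main obstacle is the joint asymptotic normality of the fidis, which I would establish via the Cram\'er--Wold device and the method of cumulants (a martingale CLT in the time index is a viable alternative route, but the assumptions are phrased in cumulant language). Fix a finite collection of indices $q\in I$, $0\le c\le q$, and spatial points, form an arbitrary linear combination of the normalized $S_{n,q,c}$, and show that all joint cumulants of order $k\ge 3$ vanish as $n\to\infty$ while the second-order structure converges to the covariance computed above. Each such cumulant is a sum over coordinate tuples and time-index tuples of products of joint cumulants $\cum(Z_{\cdot,l_{i_1}},\dots)$; temporal independence factorizes it in time, and Assumption~\ref{cumr}—the uniform $8$th-moment bound together with the polynomial cumulant decay $|\cum(Z_{0,l_1},\dots,Z_{0,l_h})|\le C_h(1\vee\max_{i,j}|l_i-l_j|)^{-r}$ with $r>2$—is precisely what controls the coordinate sums and forces every configuration other than the leading pairwise one to be of smaller order than the product $a_{n,q_{\alpha_1}}\cdots a_{n,q_{\alpha_k}}$ of the associated normalizers. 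This is the most delicate and calculation-heavy step, because one must simultaneously track the time-index coincidences and the coordinate clustering, with all bounds uniform over the growing dimension $p$.

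Finally I would establish tightness in $\ell_\infty([0,1]^3)$ through moment bounds on the increments of $a_{n,q}^{-1}S_{n,q,c}(r;[a,b])$ under small perturbations of $r$, $a$, $b$: such a perturbation inserts or deletes $O(n\delta)$ admissible time indices, and the same cumulant bounds control the variance (and the higher even moments) of the increment, yielding a modulus of continuity strong enough for a chaining / Bickel--Wichura-type maximal inequality across the three arguments. Combining fidi convergence with tightness gives $a_{n,q}^{-1}S_{n,q,c}\leadsto Q_{q,c}$ jointly over $q\in I$, $0\le c\le q$; the covariance computation identifies the within-$q$ law, and since the joint limit is Gaussian with zero cross-covariance whenever $q_1\neq q_2$, the families $\{Q_{q_1,c}\}$ and $\{Q_{q_2,c}\}$ are mutually independent, which is the final assertion.
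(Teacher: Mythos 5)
Your overall architecture --- Cram\'er--Wold plus finite-dimensional convergence, tightness via even-moment bounds on increments and chaining, and the covariance computation via perfect matchings of time indices (which forces $q_1=q_2$ and hence yields the independence claim from joint Gaussianity) --- is the same as the paper's, and those parts are sound. The genuine gap is the tool you chose for finite-dimensional normality: the method of cumulants. To conclude a CLT by that method you must show that the cumulants of \emph{every} order $k\ge 3$ of the normalized linear combination vanish, which presupposes that those cumulants exist, i.e.\ that $Z_0$ has finite moments of all orders. Assumption \ref{cumr} grants only 8th moments ($\sup_j \E[Z_{0,j}^8]\le C$) and controls cross-sectional cumulants only for orders $h=2,\dots,8$. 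A $k$-th order cumulant of $S_{n,q,c}$ involves expectations of products of $k$ summands, in which a single time index can recur up to $k$ times, so its very existence requires $k$-th moments of the coordinates of $Z_0$; for $k\ge 9$ these are not available, and neither is any decay control on the corresponding cross-sectional cumulants. So the cumulant route cannot be completed under the stated hypotheses (short of adding a truncation argument, which you do not propose). Your parenthetical reason for setting the martingale CLT aside --- that ``the assumptions are phrased in cumulant language'' --- misreads the assumption: the cumulant bounds in Assumption \ref{cumr} are \emph{cross-sectional} (over coordinates $l_1,\dots,l_h$), designed to control sums over the $p$ coordinates, not temporal cumulants of arbitrary order.

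The paper's proof does exactly what you relegated to a parenthesis: for each fixed collection of points it writes the Cram\'er--Wold linear combination as a sum of martingale differences $\widetilde\xi_{n,i}$ with respect to $\mathcal{F}_i=\sigma(Z_i,Z_{i-1},\dots)$ and applies Billingsley's Theorem 35.12 (Lemmas \ref{ficonv1} and \ref{ficonv}). The conditional Lindeberg condition is checked with \emph{fourth} moments of $\widetilde\xi_{n,i}$, and the convergence of the conditional variance $V_n$ is checked with moment computations that never go beyond eighth moments (Lemma \ref{sumcum}, together with condition A.1 of Assumption \ref{8cum}, which Assumption \ref{cumr} implies); this is precisely why the theorem can be stated under an 8th-moment hypothesis. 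Tightness is then obtained from the eighth-moment increment bound of Lemma \ref{mom8} and Lemma 7.1 of \cite{kley}, which matches your last paragraph. If you replace your cumulant step by this martingale argument --- keeping your matching computation, which is what identifies the limiting covariance and the vanishing cross-covariance for $q_1\ne q_2$ --- your proof becomes essentially the paper's.
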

For illustration, consider the case when $a_1<a_2<r_1<r_2<b_1<b_2$ and $c_1\le c_2$. We have
$$ 
\operatorname{cov}\left(Q_{q, c_1}(r_1;[a_1, b_1]),Q_{q, c_2}(r_2;[a_2, b_2])\right)=\binom{c_2}{c_1}c_1!(q-c_1)!(r_1-a_2)^{c_1}(r_2-r_1)^{c_2-c_1}(b_1-r_2)^{q-c_2},
$$
which implies, for example,
$$
\operatorname{var}\left[Q_{q, c}(r;[a,b])\right]=c!(q-c)!(r-a)^{c}(b-r)^{q-c}.
$$
The proof of Theorem \ref{Q} is long and is deferred to the supplement.


\begin{theorem}
Suppose Assumption \ref{cumr} holds.  Then for  a finite set $I$ of positive even numbers,
$$\Big\{n^{-q}a_{n,q}^{-1} D_{n,q}(\cdot ;[\cdot, \cdot])\Big\}_{q\in I} \leadsto \Big\{G_{q}(\cdot ;[\cdot, \cdot])\Big\}_{q\in I}$$
in $\ell_{\infty}\left([0,1]^{3}\right)$ jointly over $q\in I$, where
$$
G_{q}=\sum_{c=0}^{q}(-1)^{q-c} \binom{q}{c}(r-a)^{q-c}(b-r)^{c}Q_{q,c}
$$
and $Q_{q,c}$ is given in Theorem \ref{Q}.  Furthermore, for $q_1\not=q_2\in 2\N$, $G_{q_1}$ and $G_{q_2}$ are independent.
Consequently, we have that under the null, 
$$
\widetilde{T}_{n,q} \stackrel{\mathcal{D}}{\longrightarrow} \widetilde{T}_q=\sup _{r \in[0,1]} \frac{G_q(r ; 0,1)^{2}}{\int_{0}^{r} G_q(u ; 0, r)^{2} d u+\int_{r}^{1} G_q(u ; r, 1)^{2} d u}.
$$

\label{convT}
\end{theorem}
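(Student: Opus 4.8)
The plan is to read off the weak convergence of $\{n^{-q}a_{n,q}^{-1}D_{n,q}\}_{q\in I}$ directly from Theorem~\ref{Q} using the algebraic identity already recorded in the excerpt,
$$D_{n,q}(r;[a,b])=\sum_{c=0}^q(-1)^{q-c}\binom{q}{c}P^{\lf nr\rf-\lf na\rf-c}_{q-c}P^{\lf nb\rf-\lf nr\rf-q+c}_{c}\,S_{n,q,c}(r;[a,b]).$$
Multiplying through by $n^{-q}a_{n,q}^{-1}$ and setting $g_{n,c}(r;[a,b])=n^{-q}P^{\lf nr\rf-\lf na\rf-c}_{q-c}P^{\lf nb\rf-\lf nr\rf-q+c}_{c}$, the right-hand side becomes $\sum_{c=0}^q(-1)^{q-c}\binom{q}{c}g_{n,c}(r;[a,b])\big(a_{n,q}^{-1}S_{n,q,c}(r;[a,b])\big)$. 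Each permutation ratio factorizes into a product of $q$ linear terms of the form $(\lf nr\rf-\lf na\rf-c-j)/n$ or $(\lf nb\rf-\lf nr\rf-q+c-j)/n$, and each such term differs from $r-a$ (respectively $b-r$) by $O(q/n)$ uniformly in $(a,r,b)$ because $|\lf n\theta\rf/n-\theta|\le 1/n$ and the shifts are bounded by $q$. Hence $\sup_{[0,1]^3}|g_{n,c}-g_c|\to 0$, where $g_c(r;[a,b])=(r-a)^{q-c}(b-r)^{c}$ is exactly the coefficient appearing in the definition of $G_q$.

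I would then upgrade this to process convergence by a Slutsky argument in $\ell_\infty([0,1]^3)$. By Theorem~\ref{Q} the family $\{a_{n,q}^{-1}S_{n,q,c}\}_{q\in I,\,0\le c\le q}$ converges jointly to $\{Q_{q,c}\}$, so each coordinate is bounded in probability in sup norm. Writing $g_{n,c}(a_{n,q}^{-1}S_{n,q,c})=g_c(a_{n,q}^{-1}S_{n,q,c})+(g_{n,c}-g_c)(a_{n,q}^{-1}S_{n,q,c})$, the first summand converges to $g_c Q_{q,c}$ by the continuous mapping theorem (multiplication by the fixed bounded continuous $g_c$ is sup-norm continuous), while the second is bounded in sup norm by $\|g_{n,c}-g_c\|_\infty\,O_P(1)=o_P(1)$. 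Summing the finitely many terms and applying the continuous mapping theorem jointly over $q\in I$ and $0\le c\le q$ yields $\{n^{-q}a_{n,q}^{-1}D_{n,q}\}_{q\in I}\leadsto\{G_q\}_{q\in I}$ with $G_q$ as stated. The independence of $G_{q_1}$ and $G_{q_2}$ for $q_1\ne q_2$ is then immediate: each $G_q$ is a fixed measurable function of $\{Q_{q,c}\}_{0\le c\le q}$ alone, and these blocks are mutually independent across distinct $q$ by the last clause of Theorem~\ref{Q}.

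For the consequence I would recast $\widetilde{T}_{n,q}$ as a continuous functional of the single process $n^{-q}a_{n,q}^{-1}D_{n,q}(\cdot;[\cdot,\cdot])$. Using $U_{n,q}(k;1,n)=D_{n,q}(k/n;[0,1])$, $U_{n,q}(t;1,k)=D_{n,q}(t/n;[0,k/n])$, and $U_{n,q}(t;k+1,n)=D_{n,q}(t/n;[k/n,1])$, the common factor $n^{-2q}a_{n,q}^{-2}$ cancels between numerator and denominator, so with $r=k/n$ the ratio equals $\big(n^{-q}a_{n,q}^{-1}D_{n,q}(r;[0,1])\big)^2$ divided by $\frac1n\sum_t\big(n^{-q}a_{n,q}^{-1}D_{n,q}(t/n;[0,r])\big)^2+\frac1n\sum_t\big(n^{-q}a_{n,q}^{-1}D_{n,q}(t/n;[r,1])\big)^2$. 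The numerator converges to $G_q(r;0,1)^2$, the two Riemann sums converge uniformly to $\int_0^r G_q(u;0,r)^2\,du$ and $\int_r^1 G_q(u;r,1)^2\,du$, and letting the maximum over $k$ pass to the supremum over $r$ delivers $\widetilde{T}_{n,q}\cod\widetilde{T}_q$.

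The step I expect to be most delicate is the continuity of this self-normalizing functional. The integral operator and the supremum are sup-norm continuous, but the ratio is continuous only where the denominator stays away from zero, so I would need to verify that the map $r\mapsto\int_0^r G_q(u;0,r)^2\,du+\int_r^1 G_q(u;r,1)^2\,du$ is almost surely continuous and strictly positive on $[0,1]$, hence attains a positive minimum; this rests on the non-degeneracy of $G_q$, which follows from the variance formula $\var[Q_{q,c}(r;[a,b])]=c!(q-c)!(r-a)^c(b-r)^{q-c}$ in Theorem~\ref{Q}. A second, more routine, point is the passage from the discrete normalized sums to the integrals, which requires the joint functional convergence of $D_{n,q}$ in all three arguments (supplied by the first part) together with the path-continuity of $G_q$, so that the Riemann sums converge uniformly in the outer variable $r$; the boundary indices $k\in\{2q,\dots,n-2q\}$ and $t\in\{q,\dots\}$ contribute negligibly after normalization and can be absorbed.
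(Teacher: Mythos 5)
Your proposal is correct and follows essentially the same route as the paper: both start from the algebraic decomposition of $D_{n,q}$ into the $S_{n,q,c}$, invoke Theorem \ref{Q} for their joint weak convergence, and conclude by the continuous mapping theorem, with independence of $G_{q_1}$ and $G_{q_2}$ inherited from that of the $Q_{q,c}$ across distinct $q$. The paper's own proof is in fact terser than yours---it spends most of its space deriving the covariance structure of $G_q$ and compresses all of the functional-convergence steps you spell out (uniform convergence of the normalized permutation coefficients $n^{-q}P^{\lf nr\rf-\lf na\rf-c}_{q-c}P^{\lf nb\rf-\lf nr\rf-q+c}_{c}$, the Slutsky step, and the continuity of the self-normalized ratio functional) into a single appeal to the continuous mapping theorem, so your write-up supplies detail the paper omits rather than a different argument.
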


It can be derived that the $G_q(\cdot;[\cdot,\cdot])$ is a Gaussian process with the following covariance structure: 
$$
\operatorname{var}[G_q(r;[a,b])]=\sum^q_{c=0}\binom{q}{c}^2c!(q-c)!(r-a)^{2q-c}(b-r)^{q+c}=q!(r-a)^q(b-r)^q(b-a)^q.
$$
When $r_1=r_2=r$,
\begin{align*}
\operatorname{cov}(G_q(r;[a_1,b_1]),G_q(r;[a_2,b_2]))=q!(r-A)^q(b-r)^q(B-a)^q.  
\end{align*}
When $r_1\not=r_2$,
$$\operatorname{cov}(G_q(r_1;[a_1,b_1]),G_q(r_2;[a_2,b_2]))=q![(r-A)(b-R)(B-a)-(A-a)(R-r)(B-b)]^q,$$
where $(r,R,a,A,b,B,c,C)$ is defined in Theorem~\ref{Q}.
The limiting null distribution $\widetilde{T}_q$ is pivotal and its critical values can be simulated as done in \cite{hdcp} for the case $q=2$. The simulated critical values and their corresponding realizations for $q=2,4,6$  are available upon request. For a practical reason, we did not pursue the larger $q$, such as $q=8,10$, since larger $q$ corresponds to more trimming on the two ends and the finite sample performance when $q=6$ is already very promising for detecting sparse alternatives, see Section \ref{sec:simulation}. An additional difficulty with larger $q$ is the associated computation cost and complexity in its implementation. 

\begin{remark}
Compared to \cite{he2018}, we  assume the $8$th moment conditions, which is weaker than the uniform sub-Gaussian type conditions in their condition A.4(2), although the latter condition seems to be exclusively used for deriving the limit of the test statistic corresponding to $q=\infty$. Furthermore,  since their strong mixing condition with exponential decay rate [cf. condition A.4(3) of \cite{he2018}] implies our cumulant assumption \ref{cumr} [see \cite{andrews1991}, \cite{zhur1975}], our overall assumption is weaker than condition A.4 in \cite{he2018}. Despite the weaker assumptions, our results are stronger, as we derived the asymptotic independence of 
several stochastic process indexed by $q\in 2\N$, which implies the asymptotic independence of U-statistics indexed by $q\in 2\N$.

Note that our current formulation does not include the $q=\infty$ case, which corresponds to $L_{\infty}$ norm of mean difference $\|\Delta\|_{\infty}$.  The $L_{\infty}$-norm based test was developed by  \cite{yu20} and  their test statistic is based on CUSUM statistics 
\[Z_n(s)=\sqrt{\frac{s(n-s)}{n}}\left(\frac{1}{s}\sum_{i=1}^{s}X_i-\frac{1}{n-s}\sum_{i=s+1}^{n}X_i\right)\]
and takes the form $T_n=\max_{s_0\le s\le n-s_0}\|Z_n(s)\|_{\infty}$, where $s_0$ is the boundary removal parameter. They did not obtain the asymptotic distribution of $T_n$ but showed that a bootstrap CUSUM test statistic  is able to approximate the finite sample distribution of $T_n$ using a modification of Gaussian and bootstrap approximation techniques developed by \cite{cck13,cck17}. Given the asymptotic independence between $L_q$-norm based U statistic and $L_{\infty}$-norm based test statistic [\cite{he2018}] in the two-sample testing text, we would conjecture that $T_n$ test statistic in \cite{yu20} is asymptotically independent of our $\widetilde{T}_{n,q}$ for any $q\in 2\N$ under suitable moment and weak componentwise dependence conditions. A rigorous investigation is left for future work.


\end{remark}

\subsection{Adaptive Test}
Let $I$ be a set of $q\in 2\N$ (e.g. \{2,6\}).  Since $\widetilde{T}_{n,q}$s are asymptotically independent for different $q\in I$ under the null, we can combine their corresponding $p$-values and form an adaptive test. For example, we may use $p_{ada}=\min_{q\in I}p_q$, where $p_q$ is the $p$-value corresponding to $\widetilde{T}_{n,q}$,  
as a new statistic. Its $p$-value is equal to $1-(1-p_{ada})^{|I|}$. Suppose we want to perform a  level-$\alpha$ test, it is equivalent to conduct tests based on $\widetilde{T}_{n,q},\forall q\in I$ at level $1-(1-\alpha)^{1/|I|}$, and reject the null if one of the statistics exceeds its critical value. Therefore, we only need to compare each $\widetilde{T}_{n,q}$ with its $(1-\alpha)^{1/|I|}$-quantile of the corresponding limiting null distribution. 

As we explained before, a smaller $q$ (say $q=2$) tends to have higher power under the dense alternative, which is also the main motivation for the proposed method in \cite{hdcp}. On the contrary, a larger $q$ has a higher power under the sparse alternative, as $\lim_{q\rightarrow\infty}\|\Delta_n\|_q=\|\Delta_n\|_{\infty}$. 
Therefore, with the adaptive test, we can achieve high power under both dense and sparse alternatives with asymptotic size still equal to $\alpha$. This adaptivity will be confirmed by our asymptotic power analysis presented in Section~\ref{sec:power}  and simulation results  presented in Section~\ref{sec:simulation}.

\subsection{Power Analysis}
\label{sec:power}

\begin{theorem}
Assume that the change point location is at $k_1=\lf \tau_1n\rf$ with the change in the mean equal to $\Delta_n=(\delta_{n,1},\ldots,\delta_{n,p})^T$. Suppose Assumption \ref{cumr}, and the following conditions on $\Delta_{n}$ hold. We have

\begin{enumerate}
    \item If $n^{q/2}\left\|\Delta_{n}\right\|_{q}^q /\|\Sigma\|_{q}^{q / 2} \rightarrow \infty, \text { then } \widetilde{T}_{n,q} \stackrel{\mathcal{P}}{\longrightarrow}\infty$;
    \item If $n^{q/2}\left\|\Delta_{n}\right\|_{q}^q /\|\Sigma\|_{q}^{q / 2} \rightarrow 0, \text { then } \widetilde{T}_{n,q} \stackrel{\mathcal{D}}{\longrightarrow} \widetilde{T}_q$;
    \item If $n^{q/2}\left\|\Delta_{n}\right\|_{q}^q /\|\Sigma\|_{q}^{q / 2} \rightarrow \gamma\in(0,+\infty)$, then 
    $$
    \widetilde{T}_{n,q} \stackrel{\mathcal{D}}{\longrightarrow} \sup _{r \in[0,1]} \frac{\{G_q(r ; 0,1)+\gamma J_q(r, 0,1)\}^{2}}{\int_{0}^{r}\{G_q(u ; 0, r)+\gamma J_q(u, 0, r)\}^{2} d u+\int_{r}^{1}\{ G_q(u ; r, 1)+\gamma J_q(u, r, 1)\}^{2} du},
    $$
    where
    $$
   J_q(r, a, b) :=\left\{\begin{array}{lr}{\left(\tau_1-a\right)^{q}(b-r)^{q}} & {a<\tau_1 \leq r<b} \\ {(r-a)^{q}\left(b-\tau_1\right)^{q}} & {a<r<\tau_1<b} \\ {0} & {\tau_1<a \text { or } \tau_1>b}\end{array}\right..$$
\end{enumerate}
\label{power}
\end{theorem}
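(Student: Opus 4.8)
The plan is to establish a joint functional central limit theorem for the normalized process $n^{-q}a_{n,q}^{-1}U_{n,q}(\lf nr\rf;\lf na\rf+1,\lf nb\rf)$ under the alternative, and then read off all three regimes by applying the continuous mapping theorem to the self-normalized functional. The algebraic starting point is the splitting $X_{i,l}-X_{j,l}=(Z_{i,l}-Z_{j,l})+(\mu_{i,l}-\mu_{j,l})$. Expanding the product $\prod_{t=1}^{q}(X_{i_t,l}-X_{j_t,l})$ over all subsets $A\subseteq\{1,\dots,q\}$, where the factors indexed by $A$ retain the mean part $\mu_{i_t,l}-\mu_{j_t,l}$ and the rest retain the noise part $Z_{i_t,l}-Z_{j_t,l}$, decomposes $U_{n,q}(k;s,m)$ into a pure-noise term ($A=\emptyset$), a pure-signal term ($A=\{1,\dots,q\}$), and $2^{q}-2$ mixed terms.

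First I would treat the two extreme terms. The pure-noise term is literally $D_{n,q}$ computed from the $Z_i$'s, so by Theorem~\ref{convT} its normalization by $n^{q}a_{n,q}$ converges as a process to $G_q$. For the pure-signal (deterministic drift) term, observe that $\mu_i-\mu_j$ equals $\pm\Delta_n$ only when $i$ and $j$ lie on opposite sides of $k_1=\lf\tau_1 n\rf$ and vanishes otherwise; since $q$ is even the surviving products equal $\delta_{n,l}^{q}$. Writing $k=\lf nr\rf$ and counting the admissible index arrangements gives a drift of order $P_q^{k}P_q^{n-k_1}\|\Delta_n\|_q^{q}\approx n^{2q}r^{q}(1-\tau_1)^{q}\|\Delta_n\|_q^{q}$ when $r<\tau_1$, and of order $P_q^{k_1}P_q^{n-k}\|\Delta_n\|_q^{q}\approx n^{2q}\tau_1^{q}(1-r)^{q}\|\Delta_n\|_q^{q}$ when $r>\tau_1$. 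Dividing by $n^{q}a_{n,q}=n^{3q/2}\|\Sigma\|_q^{q/2}$ and using $n^{q/2}\|\Delta_n\|_q^{q}/\|\Sigma\|_q^{q/2}\to\gamma$ produces exactly $\gamma J_q(r,0,1)$, and the same bookkeeping on an arbitrary sub-interval $[a,b]$ yields $\gamma J_q(r,a,b)$.

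The main obstacle is showing that every mixed term is asymptotically negligible after normalization, uniformly in $(r;a,b)$. Each mixed term has mean zero, since any factor carrying a noise part contributes an independent mean-zero $Z$-difference, so it suffices to bound its variance. A mixed term with $q-c$ mean factors and $c$ noise factors ($1\le c\le q-1$) produces, upon squaring and taking expectations, sums of the type $\sum_{l_1,l_2}\delta_{n,l_1}^{q-c}\delta_{n,l_2}^{q-c}\Sigma_{l_1l_2}^{c}$ weighted by index counts; controlling these against $(n^{q}a_{n,q})^{2}$ is precisely what suitable interaction conditions on $\Delta_n$ (the analogues of the conditions (B.1)--(B.2)), together with the cumulant Assumption~\ref{cumr}, are designed to deliver. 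I would first obtain a pointwise variance bound of order $o(1)$ at each $(r;a,b)$, and then upgrade to uniform negligibility through a maximal/chaining inequality and a stochastic equicontinuity argument analogous to the one used for $G_q$ in Theorem~\ref{convT}; carrying this tightness step jointly with the pure-noise part is the most delicate portion of the argument.

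Combining the three pieces gives the joint convergence $n^{-q}a_{n,q}^{-1}U_{n,q}(\lf nr\rf;\lf na\rf+1,\lf nb\rf)\leadsto G_q(r;[a,b])+\gamma J_q(r,a,b)$ in $\ell_{\infty}([0,1]^{3})$. Since the self-normalizer $W_{n,q}(k;1,n)$ is a Riemann sum of squared sub-sample statistics, its normalization converges to $\int_{0}^{r}\{G_q(u;0,r)+\gamma J_q(u,0,r)\}^{2}du+\int_{r}^{1}\{G_q(u;r,1)+\gamma J_q(u,r,1)\}^{2}du$, so the continuous mapping theorem applied to $\sup_{r}(\cdot)^{2}/(\cdot)$ delivers statement~3. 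Statement~2 ($\gamma=0$) is the special case in which the drift disappears, reducing the limit to the null limit $\widetilde{T}_q$ of Theorem~\ref{convT}. For statement~1 ($\gamma=\infty$) I would lower-bound $\widetilde{T}_{n,q}$ by its value at $k=k_1$: there the two sub-samples $[1,k_1]$ and $[k_1+1,n]$ contain no change point, so $J_q$ vanishes on the relevant sub-intervals and the normalized $W_{n,q}(k_1;1,n)$ stays $O_p(1)$, while the numerator is dominated by a drift of exact order $\big(n^{q/2}\|\Delta_n\|_q^{q}/\|\Sigma\|_q^{q/2}\big)^{2}\tau_1^{2q}(1-\tau_1)^{2q}$; hence the ratio diverges and $\widetilde{T}_{n,q}\cop\infty$.
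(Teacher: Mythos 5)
Your decomposition and overall architecture match the paper's proof exactly: under the alternative you split $U_{n,q}$ into the pure-noise term $D^Z_{n,q}$ (handled by Theorem \ref{convT}), a deterministic drift whose normalization converges to $\gamma J_q$ by the same index counting you describe, and mean-zero cross terms to be killed by variance bounds plus a tightness argument; you then conclude cases 2 and 3 by the continuous mapping theorem, and case 1 by lower-bounding $\widetilde{T}_{n,q}$ at $k=k_1$, where the self-normalizer behaves as under the null while the normalized numerator diverges at rate $\gamma_{n,q}^{2}$. All of this is sound and is precisely the paper's route.

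The genuine gap is in how you dispose of the cross terms. You invoke ``suitable interaction conditions on $\Delta_n$ (the analogues of (B.1)--(B.2))'' as if they were hypotheses of the theorem. They are not: in the final statement those displays are commented out, and the theorem is asserted under Assumption \ref{cumr} alone (together with the rate condition on $\gamma_{n,q}$ that defines the three regimes). A complete proof must therefore show that the bounds you need, namely
\begin{equation*}
\Bigl|\sum_{l_1,l_2=1}^p\delta_{n,l_1}^{q-c}\,\delta_{n,l_2}^{q-c}\,\Sigma_{l_1l_2}^{c}\Bigr|
= o\bigl(\|\Delta_n\|_q^{2(q-c)}\|\Sigma\|_q^{c}\bigr),\qquad c=1,\ldots,q-1,
\end{equation*}
as well as the eighth-moment analogues required for your chaining/tightness step, are themselves consequences of the cumulant decay assumption. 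This is exactly what the paper's Lemma \ref{sumcum} supplies: part (3) is proved by writing the left-hand side as $(\Delta_n^{\circ(q-c)})^{T}\Sigma^{\circ c}\Delta_n^{\circ(q-c)}$, showing $\sigma_{\max}(\Sigma^{\circ c})=O(1)$ from the banded cumulant structure, comparing $\|\Delta_n\|_{2(q-c)}$ with $\|\Delta_n\|_q$, and using $\|\Sigma\|_q^q\gtrsim p$; part (1), proved via the generalized H\"older inequality and the cumulant bound, delivers the uniform eighth-moment control $\lesssim \|\Delta_n\|_q^{8(q-c)}\|\Sigma\|_q^{4c}$ that makes the tightness of the normalized cross terms go through. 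Without an argument of this kind, your proposal establishes the theorem only under additional, unstated conditions on $\Delta_n$, which is a strictly weaker result than the one claimed.
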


\begin{remark}
The following example illustrates the power behavior using different $q\in 2\N$. For simplicity, we assume $\Sigma=I_p$ and consider a change in the mean equal to $\Delta_n=\delta\cdot(\bm{1}_d,\bm{0}_{p-d})^T$. In addition to demonstrating that large (small) $q$ is favorable to  the sparse (dense) alternatives, our local asymptotic power results stated in Theorem \ref{power} also allow us to provide a rule to classify an alternative, which is given by
$$
\left\{\begin{array}{lr}{sparse} & {d=o(\sqrt{p})}   \\{in~between} &{d\asymp\sqrt{p}} \\ {dense}&  {\sqrt{p}=o(d)} \end{array}\right..
$$
To have a nontrivial power, it suffices to have  $n^{q/2}\left\|\Delta_{n}\right\|_{q}^q /\|\Sigma\|_{q}^{q / 2}=dn^{q/2}\delta^q/\sqrt{p}=\gamma\in(0,+\infty)$, which implies $\delta\asymp (\sqrt{p}/d)^{1/q}n^{-1/2}$. Therefore, when $d=o(\sqrt{p})$, a smaller $\delta$ corresponds to a larger $q$. On the contrary, when $\sqrt{p}=o(d)$, a smaller $q$ that yields a larger $\delta$ is preferable to have higher power. Similar argument still holds for more general $\Delta_n$ and $\Sigma$, as long as we have a similar order for $\|\Delta_n\|_q^q$ and $\|\Sigma\|_q^q$, and the latter one is guaranteed by Assumption \ref{cumr}. 

We can summarize the asymptotic powers of the tests under different alternatives in the following table.
 Note that when at least one single-$q$ based test obtains asymptotically nontrivial power (power 1), our adaptive test can also achieve nontrivial power (power 1). 
\begin{table}[H]
\centering
\begin{tabular}{|c|c|c|c|c|}
\hline
Alternative & $\delta$ & $I=\{2\}$ & $I=\{q\}$ & $I=\{2,q\}$ \\ \hline
\multirow{6}{*}{\shortstack{Dense \\$\sqrt{p}=o(d)$}} & $\delta=o(p^{1/4}d^{-1/2}n^{-1/2})$ & $\alpha$ & $\alpha$ & $\alpha$  \\ \cline{2-5} 
 & $\delta\asymp p^{1/4}d^{-1/2}n^{-1/2}$ & $\beta_1\in(\alpha,1)$ & $\alpha$ &  $(\alpha,\beta_1)$ \\ \cline{2-5} 
 & {$p^{1/4}d^{-1/2}n^{-1/2}=o(\delta),$ } & \multirow{2}{*}{1} & \multirow{2}{*}{$\alpha$} & \multirow{2}{*}{1} \\ 
 & \& $\delta=o(p^{1/2q}d^{-1/q}n^{-1/2})$ &  & & \\ \cline{2-5}
 & $\delta\asymp p^{1/2q}d^{-1/q}n^{-1/2}$ & 1 & $(\alpha,1)$ & 1 \\ \cline{2-5}
 & $p^{1/2q}d^{-1/q}n^{-1/2}=o(\delta)$ & 1 & 1 & 1 \\ \hline
\multirow{6}{*}{\shortstack{Sparse \\$d=o(\sqrt{p}$)}} 
& $\delta=o(p^{1/2q}d^{-1/q}n^{-1/2})$  & $\alpha$ & $\alpha$ & $\alpha$  \\ \cline{2-5} 
& $\delta\asymp p^{1/2q}d^{-1/q}n^{-1/2}$  & $\alpha$ & $\beta_2\in(\alpha,1)$ & $(\alpha,\beta_2)$  \\ \cline{2-5} 
 & {$p^{1/2q}d^{-1/q}n^{-1/2}=o(\delta),$ } & \multirow{2}{*}{$\alpha$}  & \multirow{2}{*} {1}& \multirow{2}{*}{1} \\ 
 & \& $\delta=o(p^{1/4}d^{-1/2}n^{-1/2})$ &  &  & \\ \cline{2-5}
 & $\delta\asymp p^{1/4}d^{-1/2}n^{-1/2}$ & $(\alpha,1)$ & 1 & 1 \\ \cline{2-5}
 & $p^{1/4}d^{-1/2}n^{-1/2}=o(\delta)$ & 1 & 1 & 1 \\ \hline
\end{tabular}
\caption{Asymptotic powers of single-$q$ and adaptive tests}
\end{table}

\end{remark}

\cite{lgs20} recently studied the detection of a sparse change in the high-dimensional mean vector under the Gaussian assumption as a minimax testing problem. Let $\rho^2=\min(k_1,n-k_1)\|\Delta_n\|_2^2$. In the fully dense case, i.e., when $\|\Delta_n\|_0=p$, where $\|\Delta_n\|_0$ denotes the $L_0$ norm, Theorem 8 in \cite{lgs20} stated that the minimax rate is given by $\rho^2\asymp \|\Sigma\|_F\sqrt{\log\log(8n)}\vee \|\Sigma\|_{s}\log\log(8n)$. Thus under the assumption that $k_1/n=\tau_1\in (0,1)$, the $L_2$-norm based test in \cite{hdcp} achieves the rate optimality up to a logarithm factor. Consequently, any adaptive test based on $I$ is rate optimal (up to a logarithm factor) as long as $2\in I$. 

In the special case $\Sigma=I_p$, the minimax rate is given by
$$
\rho^{2} \asymp\left\{\begin{array}{lr}
\sqrt{p \log \log (8 n)} & \text { if } d \geq \sqrt{p \log \log (8 n)} \\
d\log\left(\frac{e p \log \log (8 n)}{d^{2}}\right) \vee \log \log (8 n) & \text { if } d<\sqrt{p \log \log (8 n)}
\end{array}\right..
$$
Recall that $\Delta_n=\delta\cdot(\bm{1}_d,\bm{0}_{p-d})^T$. 
In the sparse setting $d=o(\sqrt{p})$ and under the assumptions that $d\asymp p^{-v}$,$v\in (0,1/2)$ and $d>\log\log(8n)$,  the minimax rate is $d$ (up to a logarithm factor), which corresponds to $\delta\asymp n^{-1/2}$. Our $L_q$-norm based test is not minimax rate optimal since the detection boundary is $(\sqrt{p}/d)^{1/q} n^{-1/2}$, which gets closer to $n^{-1/2}$ as $q\in 2\N$ gets larger. 
In the dense setting $\sqrt{p}=o(d)$ and under the assumptions that $d\asymp p^{-v}$,$v\in (1/2,1)$ and $d>\sqrt{p\log\log(8n)}$, the minimax rate is $\sqrt{p}$ (up to a logarithm factor), which corresponds to $\delta\asymp p^{1/4}/\sqrt{nd}$. Therefore the $L_2$-norm based test in \cite{hdcp} is again rate optimal (up to a logarithm factor).


\section{Change-point Estimation}
\label{sec:estimation}

In this section, we investigate the change-point location estimation based on change-point test statistics we proposed in Section~\ref{sec:test}. Specifically, Section~\ref{subsec:singleestimate} presents convergence rate for the argmax of SN-based test statistic  upon suitable standardization. Section~\ref{subsec:WBS} proposes a combination of wild binary segmentation (WBS, \cite{fry2014}) algorithm with our SN-based test statistics for both single-$q$ test and adaptive test to estimate multiple change points. 

\subsection{Single Change-point Estimation}
\label{subsec:singleestimate}

In this subsection, we propose to estimate  the location of a change point assuming that the data is generated from the following single change-point model,
\[X_t=\mu_1+\Delta_n {\bf 1}(t>k^*)+Z_t,~t=1,\cdots,n,\]
where $k^*=k_1=\lfloor \tau^* n\rfloor $ is the location of change point. In the literature, it is common to focus on the convergence rate of the estimators of the relative location $\tau^* \in (0,1)$, that is, we shall focus on the convergence rate of $\hat{\tau}=\hat{k}/n$, where $\hat{k}$ is an estimator for $k^*$.


Given the discussions about size and power properties of the SN-based test statistic in Section~\ref{sec:test}, it is natural to use the argmax of the test statistic as the estimator for $k^*$.
That is, we define 

$$\hat{k}=\operatorname{argmax}_{k=2q, \ldots, n-2q} \frac{U_{n,q}(k ; 1, n)^{2}}{W_{n,q}(k ; 1, n)}.$$ 
To present the convergence rate for $\hat{\tau}$, we shall introduce the following assumptions. 

\begin{assumption}\label{ass}

\begin{enumerate}
    \item $tr(\Sigma^4) = o(\|\Sigma\|_F^4)$;
    \item $\sum_{l_1,...,l_h = 1}^p cum(Z_{0,l_1},...,Z_{0,l_h})^2 \leq  C\|\Sigma\|_F^h$, for $h = 2,...,6$;
    \item $\|\Sigma\|_F = o(n\|\Delta_n\|_2^2)$.
\end{enumerate}
\end{assumption}


Let $\gamma_{n,q} = n^{q/2}\|\Delta_n\|_q^q/\|\Sigma\|_q^{q/2}$ so $\gamma_{n,2}= n\|\Delta_n\|^2/\|\Sigma\|_F$. We have the following convergence rate of $\hat \tau$ for the case $q=2$.

\begin{theorem}\label{thm:consistency}
Suppose Assumption \ref{ass} holds and $q=2$. It holds that $\hat{\tau} - \tau^*=o_p( \gamma_{n,2}^{-1/4+\kappa})$ as $n \wedge p \rightarrow \infty$, for any $0 < \kappa < 1/4$. 
\end{theorem}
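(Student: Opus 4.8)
The plan is to treat $\hat k$ as the maximiser of the self-normalised objective $M_n(k):=U_{n,2}(k;1,n)^2/W_{n,2}(k;1,n)$ and to run a standard argmax/peeling argument: since $M_n(\hat k)\ge M_n(k^*)$, it suffices to show that with probability tending to one $M_n(k)<M_n(k^*)$ simultaneously for every $k$ with $|k/n-\tau^*|>\gamma_{n,2}^{-1/4+\kappa}$. The proof therefore splits into a deterministic \emph{signal-profile} analysis that quantifies how much smaller $M_n$ is, in mean, away from $\tau^*$, and a \emph{uniform fluctuation} bound showing that the random deviations of $M_n$ from this profile are, after a union bound over the $O(n)$ candidate locations, dominated by the profile separation on the relevant shells. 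Throughout I would write $X_t=\mu_1+\Delta_n\mathbf{1}(t>k^*)+Z_t$ and expand each $U_{n,2}(t;\cdot,\cdot)$ (both the one in the numerator and all those entering $W_{n,2}$) into a deterministic part equal to its mean, a signal$\times$noise cross part, and a pure-noise part.

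First I would compute the mean profiles. Because $T_{n,2}$ is an unbiased estimator of the squared $L_2$ distance between the two subsample means, $\mathbb{E}[U_{n,2}(k;1,n)]\asymp n^4\|\Delta_n\|_2^2\,h(k/n)$ with $h$ a tent-shaped function peaking at $\tau^*$ with a genuine kink there, so the numerator mean drops only linearly in $|\tau-\tau^*|$. The point specific to self-normalisation is the denominator: when $k\ne k^*$ exactly one of the two subsamples $[1,k]$, $[k+1,n]$ straddles the change point, so one of the two sums defining $W_{n,2}(k;1,n)$ acquires a signal contamination, whereas at $k=k^*$ both subsamples are change-free and $W_{n,2}$ behaves like a null normaliser of order $n^6\|\Sigma\|_F^2$ (bounded below using the variance lower bound in Assumption \ref{cumr} and the null scaling from Theorem \ref{convT}). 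Carrying out the elementary but tedious computation of this contamination gives, up to constants, $\mathbb{E}[W_{n,2}(k;1,n)]\asymp n^6\|\Sigma\|_F^2\big(1+c\,\gamma_{n,2}^2|\tau-\tau^*|^4\big)$, whence a profile separation $\bar M(\tau^*)-\bar M(\tau)\gtrsim \gamma_{n,2}^2\min\{1,\gamma_{n,2}^2|\tau-\tau^*|^4\}$ with $\bar M(\tau^*)\asymp\gamma_{n,2}^2$. Thus it is the inflation of the self-normaliser, not the decay of the numerator, that drives the separation at the scales relevant for the rate.

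Second I would bound the fluctuations. Writing $M_n=\bar N(1+\eta_N)/[\bar W(1+\eta_W)]$, the deviation of $M_n$ from its profile is governed by the relative deviations $\eta_N,\eta_W$ of the numerator and of each squared U-statistic appearing in $W_{n,2}$. These are quadratic-type U-statistics, and Assumption \ref{ass} is tailored to bound their moments: $\mathrm{tr}(\Sigma^4)=o(\|\Sigma\|_F^4)$ ensures the pure-noise quadratic forms concentrate, while the cumulant-sum bounds $\sum\mathrm{cum}(\cdots)^2\le C\|\Sigma\|_F^h$ for $h\le 6$ control the variances (and the higher moments needed for the maximal inequality) of the cross and pure-noise pieces in terms of $\|\Sigma\|_F$. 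Combining a moment bound of the largest available order with Markov's inequality and a union bound over the $O(n)$ locations $k$ (and the endpoints entering $W_{n,2}$) yields a uniform control of $\max_k|M_n(k)-\bar M(k)|$; it is the finiteness of the available moments that forces a polynomial rather than exponential tail, and hence the loss reflected in the exponent $-1/4$ together with the extra $\kappa$-slack.

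Finally I would assemble the two pieces by peeling. Splitting $\{|\tau-\tau^*|>\gamma_{n,2}^{-1/4+\kappa}\}$ into dyadic shells and comparing, on each shell, the profile separation of step one against the maximal fluctuation of step two, the separation dominates once $|\tau-\tau^*|\gtrsim\gamma_{n,2}^{-1/4+\kappa}$, so $M_n(k)<M_n(k^*)$ uniformly on that event with probability $\to 1$; summing the shell probabilities gives $\mathbb{P}(|\hat\tau-\tau^*|>\gamma_{n,2}^{-1/4+\kappa})\to 0$, which is the claim. The main obstacle is the denominator in steps one and two: unlike in a fixed-normaliser CUSUM analysis, the self-normaliser $W_{n,2}$ is itself random and signal-contaminated, and changes character as $k$ crosses $k^*$, so one must simultaneously track its mean inflation and its fluctuations and verify that their interplay with the linearly decaying numerator produces exactly the stated rate. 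Getting the bookkeeping of orders right here — in particular identifying that the quartic denominator contamination, throttled by the finite-moment maximal inequality, is what sets the exponent — is the delicate heart of the argument, and the $+\kappa$ in the exponent signals that the resulting rate is not claimed to be sharp.
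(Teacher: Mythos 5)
Your high-level picture gets one important thing right: it is the signal contamination of the self-normalizer $W_{n,2}$, scaling quartically as $\gamma_{n,2}^2|\tau-\tau^*|^4$ in units of the null level $n^6\|\Sigma\|_F^2$, and not the decay of the numerator, that is the binding constraint — this matches the paper, where part (2) of Proposition \ref{prop:UW} is what forces the exponent. But two load-bearing steps of your plan fail as stated. First, your fluctuation step presumes that $M_n(k)=\U(k;1,n)^2/W_{n,2}(k;1,n)$ concentrates uniformly around a deterministic profile $\bar M(k)$, i.e.\ that $\eta_W$ is uniformly small. This is false at (and near) $k^*$: $W_{n,2}(k^*;1,n)$ is a pure-noise self-normalizer, and by design — this is the entire point of self-normalization — its normalized version converges to a nondegenerate random variable (an integral of squared Gaussian processes), so $M_n(k^*)/\bar M(k^*)$ does not converge to $1$ and $\max_k|M_n(k)-\bar M(k)|$ is itself of order $\gamma_{n,2}^2$, the same order as your whole separation budget. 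The paper never needs concentration of $W_{n,2}$ at $k^*$: from $M_n(\hat k)\ge M_n(k^*)$ and positivity it deduces that at least one of the events $\{\U(\hat k;1,n)^2\ge \U(k^*;1,n)^2\}$ or $\{W_{n,2}(k^*;1,n)\ge W_{n,2}(\hat k;1,n)\}$ must occur, and Proposition \ref{prop:UW} bounds each probability separately — the numerator event via mean separation, the denominator event by showing $\inf_{|k-k^*|>\epsilon_n}W_{n,2}(k;1,n)\gtrsim \epsilon_n^4n^4\|\Delta_n\|_2^4(1-o_p(1))$ while $W_{n,2}(k^*;1,n)=O_p(n^6\|\Sigma\|_F^2)$, using only stochastic boundedness of $W_{n,2}(k^*;1,n)$, never its concentration. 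If you insist on a direct ratio comparison, you must restructure it accordingly: lower-bound $M_n(k^*)$ by upper-bounding $W_{n,2}(k^*;1,n)$ stochastically, and upper-bound $M_n(k)$ on shells by lower-bounding $W_{n,2}(k;1,n)$ by its (there genuinely concentrating) signal part.

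Second, your explanation of where $-1/4+\kappa$ comes from is incorrect, and the tool you propose cannot deliver the theorem under the stated assumptions. The exponent is not an artifact of "polynomial tails from finitely many moments plus a union bound"; it arises at the variance scale, and would persist even with Gaussian tails. On a shell $|k-k^*|\asymp n\epsilon$, the signal part of $W_{n,2}(k;1,n)$ is of order $\gamma_{n,2}^2\epsilon^4$ (in units of $n^6\|\Sigma\|_F^2$) while the signal$\times$noise cross term is $O_p(\gamma_{n,2}^{3/2}\epsilon^2)$; domination requires exactly $\epsilon\gg\gamma_{n,2}^{-1/4}$, and the role of $\kappa$ is only to make this strict since $\gamma_{n,2}^{2\kappa}\to\infty$. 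Moreover, a union bound with Markov over the $O(n^2)$ index pairs entering $W_{n,2}$ necessarily loses a positive power of $n$, and this loss is fatal here: Assumption \ref{ass}(3) only guarantees $\gamma_{n,2}\to\infty$, possibly arbitrarily slowly, so no power of $n$ can be absorbed (with eighth moments you would need roughly $\|\Sigma\|_F=o(n^{1/2}\|\Delta_n\|_2^2)$, strictly stronger than Assumption \ref{ass}(3)). The paper avoids any such loss by importing uniform fluctuation control directly from process convergence: $\sup_{t,k}|\U^Z(t;k+1,n)|=O_p(n^3\|\Sigma\|_F)$ from the functional CLT for the triple-indexed U-process (Theorem 2.1 of \cite{hdcp}, i.e.\ the $q=2$ case of Theorem \ref{convT}), together with $\sup_{1\le a\le b\le n}\bigl|\sum_{i=a}^b\Delta_n^TZ_i\bigr|=O_p(\sqrt{n\|\Sigma\|_F}\,\|\Delta_n\|_2)$ for the partial-sum process, both at the exact variance scale. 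Unless your union-bound step is replaced by such process-level maximal bounds, your argument does not prove the stated rate.
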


\begin{remark}
Assumption \ref{ass} (1) and (2) have been assumed in \cite{hdcp}, and they are implied by Assumption \ref{cumr}; see Remark 3.2 in \cite{hdcp}. Assumption \ref{ass}(3) is equivalent to  $\gamma_{n,2} \rightarrow \infty$, which implies that $\hat{\tau}$ is a consistent estimator of $\tau^*$. Note that even in the low-dimensional setting, no convergence rate for the argmax of SN-based statistic (standarized by the sample size) is obtained in \cite{shao2010}. Thus this is the first time the asymptotic rate for the argmax of a SN-based test statistic is studied. On the other hand, the proof for the more general case $q\in 2\N$ is considerably more involved than the special case $q=2$ and is deferred to future investigation.


\end{remark}

\subsection{Multiple Change-point Estimation}
\label{subsec:WBS}

In practice, the interest is often in the change point estimation or segmentation, when the presence of change points is confirmed by testing or based on prior knowledge. In the high-dimensional context, the literature on change point estimation is relatively scarce; see \cite{cho2016}, \cite{wang2018} and \cite{hdcp}. Here we shall follow the latter two papers and use the wild  binary segmentation [\cite{fry2014}] coupled with our test developed for a single $q$ or adaptive test to estimate the number and location of change points. Note that the standard binary segmentation procedure may fail when the change in means is not monotonic, as shown in \cite{hdcp}
via simulations.

For any integers $s,e$ satisfying $2q\leq s+2q-1\leq e-2q\leq n-2q$, define 
$$
Q_{n,q}(s,e):=\max_{b=s+2q-1,...,e-2q}\frac{U_{n,q}^2(b;s,e)}{W_{n,q}(b;s,e)},
$$

Note that $Q_{n,q}(s,e)$ is essentially the statistic $T_{n,q}$ based on the sub-sample $(X_s,...,X_e)$. Denote a random sample of $(s_m,e_m)$ s.t. $2q\leq s_m+2q-1\leq e_m-2q\leq n-2q$ as $F_n^M$, where the sample is drawn independently with replacement of size $M$. In practice, we may require the segments to be slightly longer to reduce unnecessary fluctuations of the critical values.
Then define $\hat{\xi}_{n,M,q}=\max_{m=1,\cdots,M} Q_{n,q}(s_m,e_m)$ and we stop the algorithm if $\hat{\xi}_{n,M,q}\le \xi_{n,q}$, where $\xi_{n,q}$ is some threshold to be specified below, and estimate the change point otherwise; see Algorithm~\ref{WBS} for details. 

One anonymous reviewer asked whether it is possible to derive the limiting distribution of $\hat{\xi}_{n,M,q}$ under the null, which turns out to be challenging for two reasons: (1) The SN-based test statistic for different intervals could be highly dependent, especially when the two intervals overlap by a lot; (2) the number of such randomly generated intervals is usual large, and it would be more valuable to develop an asymptotic distribution under the assumption that both sample size and number of intervals go to infinity. It seems difficult to use the classical argument for this problem, and we shall leave this for future investigation.

To obtain the threshold value $\xi_{n,q}$ as needed in the Algorithm 1, we generate $R$ standard Gaussian samples each of which has sample size $n$ and  dimension $p$.
For the $r$-th sample ($r=1,\ldots,R)$, we calculate
$$
\hat{\xi}^{(r)}_{n,M,q}=\max_{m=1,\cdots,M} Q_{n,q}^{(r)}(s_m,e_m),
$$
where $Q_{n,q}^{(r)}(s_m,e_m)$ is the SN-based test statistic applied to the $r$th Gaussian simulated sample. We can take $\xi_{n,q}$ to be the 95\% quantile of $\{\hat{\xi}^{(r)}_{n,M,q}\}_{r=1}^{R}$. Since the self-normalized test statistic is asymptotically pivotal, the above threshold  $\xi_{n,q}$  is expected to approximate the 95\% quantile of the finite sample distribution of maximized SN-based test statistic applied to $M$ randomly drawn sub-samples from the original data. 


\begin{algorithm} [H]
    \begin{algorithmic}[1] 
    \Function{WBS}{$S,E$}
    \If{$E-S<4q-1$}
    \State STOP
    \Else 
    \State $\mathcal{M}_{s,e}\leftarrow$ set of those $1\leq m\leq M$ s.t. $S\leq s_m,e_m\leq E,e_m-s_m\geq 4q-1$
    \State $m_{q}\leftarrow$argmax$_{m\in \mathcal{M}_{s,e}}Q_{n,q}(s_m,e_m)$
    \If{$Q_{n,q}(s_{m_q},e_{m_q})>\xi_{n,q}$}
    \State add $b_0\leftarrow$argmax$_bU_{n,q}(b;s_{m_q},e_{m_q})/W_{n,q}(b;s_{m_q},e_{m_q})$ to set of estimated CP
    \State WBS$(S,b_0)$
    \State WBS$(b_0+1,E)$
    \Else
    \State STOP
    \EndIf
    \EndIf
    \EndFunction
    \end{algorithmic}  
\caption{WBS Algorithm for a given $q\in 2\N$}
\label{WBS}
\end{algorithm}  

To apply the adaptive test, we calculate $\hat{\xi}_{n,M,q}^{(r)}$ with $r$-th sample using different $q\in I$. Denote $q_I:=\max_{q\in I} q$ We calculate $p$-value for each single-$q$ based statistic and select the most significant one for location estimation, which gives the adaptive version; see Algorithm \ref{WBSada}. 

\begin{algorithm} [H]
    \begin{algorithmic}[1]  
    \Function{WBS}{$S,E$}
    \If{$E-S<4q_I-1$}
    \State STOP
    \Else 
    \State $p_0=0.05$
    \For{$q$ in $I$}
    \State $\mathcal{M}_{s,e}\leftarrow$ set of those $1\leq m\leq M$ s.t. $S\leq s_m,e_m\leq E,e_m-s_m\geq 4q_I-1$
    \State $m_q\leftarrow$argmax$_{m\in \mathcal{M}_{s,e}}Q_{n,q}(s_m,e_m)$
    \State $p_q=R^{-1}\#\Big\{Q_{n,q}(s_{m_q},e_{m_q})>\xi^{(r)}_{n,M,q}\Big\}_{r=1}^R$
    \If{$p_q<p_0$ for current $q$}
    \State $b_0\leftarrow$argmax$_bU_{n,q}(b;s_{m_q},e_{m_q})/W_{n,q}(b;s_{m_q},e_{m_q})$
    \State $p_0\leftarrow p_q$
    \State NEXT
    \EndIf
    \EndFor
    \State add $b_0$ to set of estimated CP
    \State WBS$(S,b_0)$
    \State WBS$(b_0+1,E)$
    \EndIf
    \EndFunction
    \end{algorithmic}  
\caption{Adaptive WBS Algorithm}
\label{WBSada}
\end{algorithm}  

\section{Numerical Studies}
\label{sec:simulation}

In this section, we present numerical results to examine the finite sample performance of our testing and estimation method in comparison with the existing alternatives. Section~\ref{sec:sim1} shows the size and power for the single change point tests; Section~\ref{sec:sim1.5} presents the estimation result when there is one single change-point; Section~\ref{sec:sim2} compares several WBS-based estimation methods for multiple change point estimation, including the INSPECT method in \cite{wang2018}. Finally, we apply our method to a real data set in Section~\ref{sec:data}.

\subsection{Single Change Point Testing}
\label{sec:sim1}


In this subsection, we examine the size and power property of our single-$q$ and adaptive tests in comparison with the one in \cite{en2013} (denoted as EH), which seems to be the only adaptive 
method in the literature. The data  $X_i\sim N(\mu_i,\Sigma)$, where $\mu_i=0$ for $i=1,\cdots,n$ under the null. We set $(n,p)=(200, 100)$ and $(400,200)$ and performed 2000 Monte carlo replications. 
We consider four different configurations of $\Sigma=(\sigma_{ij})$ as follows,
$$
\sigma_{ij}=\left\{\begin{array}{lr}{\mathds{1}_{i=j}} & {\text {Id}} \\{0.5^{|i-j|}} & {\text {AR(0.5)}} \\{0.8^{|i-j|}} & {\text {AR(0.8)}} \\{\mathds{1}_{i=j}+0.25\cdot\mathds{1}_{i\not=j}} & {\text {CS}} \\\end{array}\right..
$$
They correspond to independent components (Id), auto-regressive model with order $1$ (AR(0.5) and AR(0.8)) and compound symmetric (CS), respectively. The first three configurations imply weak dependence among components so satisfy Assumption \ref{cumr}, whereas the compound symmetric covariance matrix corresponds to 
strong dependence among components and violates our assumption. The size of our tests, including $\widetilde{T}_{n,q}$ at a single $q=2,4,6$ and combined tests with $\mathcal{I}=(2,4)$, $(2,6)$ and $(2,4,6)$ are presented  in Table \ref{sizeonent}. It appears that all tests are oversized when $\Sigma$ is compound symmetric, which is somewhat expected since the strong dependence among components brings non-negligible errors in asymptotic approximation. As a matter of fact, we conjecture that our limiting null distribution $\widetilde{T}_q$ no longer holds in this case.
Below we shall focus our comments on the first three configurations (Id, AR(0.5) and AR(0.8)).

The size for $q=2$ (i.e., the test in \cite{hdcp}) appears quite accurate except for some degree of  under-rejection in the Id case. For $q=4$, it is oversized and its size seems inferior to the case  $q=6$, which also shows some over-rejection for the AR(1) models when $(n,p)=(200,100)$, but the size distortion improves quite a bit when we increase $(n,p)$ to $(400, 200)$. Among the three combined tests, there are apparent over-rejections for $\mathcal{I}=(2,4)$, $(2,4,6)$ for the AR(1) models, and the test corresponding to $\mathcal{I}=(2,6)$ exhibits the most accurate size overall. By contrast, the EH shows serious size distortions in all settings with some serious over-rejection when the componentwise dependence is strong (e.g., AR(0.8) and CS), which is consistent with the fact that its validity strongly relies on the Gaussian and componentwise independence assumptions. We also checked the sensitivity of the size with respect to nonGaussian assumptions and observe serious distortion for EH when the data is generated from a nonGaussian distribution (results not shown). Overall,  the adaptive test with $\mathcal{I}=(2,6)$ seems preferred to all other tests (including the adaptive test with $\mathcal{I}=(2,4,6)$) in terms of size accuracy.

 


\centerline{Please insert Table \ref{sizeonent} here!}

To investigate the power, we let $\mu_i=0$ for $i\le n/2$ and $\mu_i=\sqrt{\delta/d}\cdot(\bm{1}_{d},\bm{0}_{p-d})^T$ for $i>n/2$. We take $\delta=1,2$ and $d=3$, which corresponds to a sparse alternative; and let $d=p$ to examine the power under the dense alternative; see Table~\ref{poweronent}. In the case of sparse alternative, we can see that the powers corresponding to $q=4$ and $q=6$ are much higher than that for $q=2$, which is consistent with our intuition. When $q=4$, the power is slightly higher than that for $q=6$, which might be explained by the over-rejection with $q=4$ (in the case of AR(1) models),  and we expect no power gain as we increase $q$ to $8,10$ etc, so the results for these larger $q$ are not included. Also for larger $q$, there is more trimming involved as the maximum runs from $2q$ to $n-2q$ in our test statistics, so if the change point occurs outside of the range $[2q, n-2q]$, our test has little power. In the dense alternative case, the power for $q=2$ is the highest as expected, and the  power for $q=4$ is again slightly higher than that for $q=6$.

The power of the combined tests (i.e., ${\mathcal I}=(2,4)$ or $(2,6)$ or $(2,4,6)$) is always fairly close to the best single one within the set. For example, the power for $(2,6)$ is very close to the power for $q=6$ in the sparse case and is quite close to the power for $q=2$ in the dense case, indicating the adaptiveness of the combined test. In the sparse case, the powers for ${\mathcal I}=(2,4)$ and  $(2,4,6)$ are slightly higher than that for $(2,6)$, which could be related to the over-rejection of  the tests with ${\mathcal I}=(2,4)$ and  $(2,4,6)$, especially when the data is generated from AR(1) models. Overall, the adaptive tests (i.e., (2,4), (2,6) or (2,4,6)) have a good all-around power behavior against both sparse and dense alternatives and are preferred choices when there is no prior knowledge about the type of alternative the data falls into. Since the size for $(2,6)$ is more accurate than that for (2,4) and (2,4,6), we slightly favor the $(2,6)$ combination.
EH exhibits high power for all settings, but it is at the cost of serious size distortion. We shall not present size-adjusted power as the serious distortion is too great to recommend its use when there are componentwise dependence in the data. 


\centerline {Please insert Tables \ref{poweronent} here!}

\subsection{Estimation for Single Change-point}
\label{sec:sim1.5}

In this subsection, we present the square root of mean-square-error (RMSE, multiplied by 1000 for readability) of SN-based location estimators and compare with the EH-based estimator under the same settings as we used in Section~\ref{sec:sim1}.

For both dense and sparse alternatives, the proposed estimators (i.e., SN(2), SN(4) and SN(6)) perform better than the EH method when the signal is relatively weak (i.e., $\delta=1,2$). However, as the signal becomes stronger (i.e., $\delta=4$), the  EH method
can outperform ours in the identity covariance matrix case. On the other other hand, the performance of the EH estimator apparently deterioates as the cross-sectional dependence gets stronger, indicating its strong reliance on the componentwise independence assumption. It is interesting to note that the SN-based method performs fairly well, even in the case of compound symmetric covariane matrix, and SN(6) outperforms the other two in all settings. A theoretical justification for the latter phenomenon would be intriguing. 

\centerline {Please insert Tables \ref{loconent} here!}

\subsection{Estimation for Multiple Change Points}
\label{sec:sim2}

In the following simulations, we compare our WBS-based method with the INSPECT method proposed by \cite{wang2018}. Following \cite{hdcp}, we generate 100 samples of i.i.d. standard normal variables $\{Z_t\}_{t=1}^n$ with $n=120,p=50$. The 3 change points are located at $30, 60$ and $90$. Denote the changes in mean by $\bm{\theta_1},\bm{\theta_2},\bm{\theta_3},$  with $\bm{\theta_1}=-\bm{\theta_2}=2\sqrt{k_1/d_1}\cdot(\bm{1}_{d_1},\bm{0}_{p-d_1}),\bm{\theta_3}=2\sqrt{k_2/d_2}\cdot(\bm{1}_{d_2},\bm{0}_{p-d_2})$. We use, e.g., Dense(2.5) to denote dense changes with $d_i=p=50,k_i=2.5$ for $i=1,2,3$ and Sparse(4) to denote sparse changes with $d_i=5,k_i=4$ for $i=1,2,3$. In particular, Dense($2.5$) \& Sparse($4$) refers to $k_1=2.5,k_2=4,d_1=5,d_2=50$, where we have a mixture of dense and sparse changes. 

We compare WBS with INSPECT, for which we use default parameters with the "InspectChangepoint" package in R. We use 2 different metrics for evaluating the performance of different methods. One is to calculate the mean square errors (MSE) of the estimated number of change points. The other metric takes the accuracy of location estimation into account. We utilize the correlated rand index (CRI), which can measure the accuracy of change point location estimation. 
See \cite{rand1971}, \cite{hub1985} and \cite{hdcp} for more details. For perfect estimation, the calculated CRI is 1. In general it is a number between 0 and 1 and the more precise we estimate the change point locations, the higher CRI we get. We average the CRI for all Monte Carlo replications and record the average rand index (ARI). We report the MSE and ARI of different methods based on 100 replications in Table \ref{simwbs}.

When there are only sparse changes and $\delta=2.5$, the performance of adaptive procedure (WBS(2,6)) is similar to WBS(6), whose estimation is much more accurate than WBS(2) and INSPECT. When we strengthen the signal by increasing $\delta$ from $2.5$ to $4$, the detection power of all methods increase, but instead INSPECT has the best estimation accuracy in this case, closely followed by WBS(6) and WBS(2,6). In the case of purely dense changes with $\delta=2.5$, the performance of WBS(2) dominates the others and WBS(2,6) is the second best. When we increase $\delta$ from $2.5$ to $4$ in this setting, the adaptive test slightly outperforms INSPECT, and its performance is comparable to WBS(2). For both dense settings, the performance of WBS(6) is rather poor. We can see that under all these four settings, the performance of WBS(2,6) is always close to the best, indicating its adaptiveness to different types of change points. Moreover, when there is a mixture of dense and sparse changes, the adaptive method outperforms all the others. In practice, the type of changes is often unknown, and therefore our adaptive procedure could be appealing for practitioners.

\centerline{Please insert Table \ref{simwbs} here!}

\subsection{Real data illustration}
\label{sec:data}
In this subsection, we study the genomic micro-array data set that contains log intensity ratios of 43 individuals with bladder tumor, measured at 2215 different loci. The data was available in R package \texttt{ecp} and was also studied by \cite{hdcp} and \cite{wang2018}. We compare our results with theirs.

We take the first 200 loci for our study. For the WBS algorithm, we generate 10000 samples from i.i.d. standard normal distributions with $(n,p)=(200,43)$, and draw  5000 random intervals  to calculate the supremum statistics and get the 98\%-quantile as our critical value. The change points detected by WBS at 0.98 level and the 20 most significant points detected by INSPECT are given as follows.
$$
\begin{array}{ll} 
q=2 & 33,39,46,74,97,102,135,155,173,191\\  
q=6 & 15,32,44,59,74,91,116,134,158,173,186\\  
q=2,6 & 15,32,38,44,59,74,91,97,102,116,134,158,173,186,191\\
\text{INSPECT} & 15,26,28,33,36,40,56,73,91,97,102,119,131,134,135,146,155,174,180,191
\end{array}
$$
We can see that the set of change points detected by the adaptive WBS method is roughly a union of the sets corresponding to two single WBS methods (that is, for a single $q$), which suggests that the adaptive WBS method captures both sparse and dense alternatives as expected. In particular, 32(33),44(46),74,134(135), 158(155),173 are detected by both single methods, 38(39),97,102,191 are detected only by $q=2$, and 15, 59, 91,116,186 only by $q=6$. The set of the change points detected by adaptive WBS method overlaps with the set for INSPECT by a lot, including 15, 32(33), 38(36), 74(73), 91, 97, 102, 116(119), 134, 158(155), 173(174), and 191. 
It is worth noting that the change points at locations 91, 97, 191 were only detected by one of two single WBS methods and INSPECT, whereas the adaptive WBS method is able to capture with its good all-round power property again a broad range of alternatives.  
In Figure \ref{acgh}, we plot the log intensity ratios of the first 10 individuals at first 200 loci, and the locations of the change points estimated by the adaptive method. 

\centerline{Please insert Figure \ref{acgh} here!}

This example clearly demonstrates the usefulness of the proposed adaptive test and corresponding WBS-based estimation method. An important practical choice is the threshold, which can be viewed as a tuning parameter in the implementation of WBS algorithm. We shall leave its choice for future investigation.

\section{Conclusion}
In this paper, we propose a class of asymptotically pivotal statistics for testing a mean change in high-dimensional independent data. The test statistics are formed on the basis of an unbiased estimator of $q$-th power of the $L_q$ norm of the mean change via U-statistic and self-normalization. They are asymptotically independent for different $q\in 2\N$, and therefore, we can form an adaptive test by taking the minimum of $p$-values corresponding to test statistics indexed by $q\in 2\N$. The resulting test is shown to have good overall power against both dense and sparse alternatives via theory and simulations. On the estimation front, we obtain the convergence rate for the argmax of SN-based test statistic  standardized by sample size under the one change-point model and $q=2$.  We also combine our tests with WBS algorithm to estimate multiple change points. As demonstrated by our simulations, the WBS-based estimation method inherits the advantage of the adaptive test, as it outperforms other methods under the setting where there is a mixture of dense and sparse change points, and has close-to-best performance for purely dense and purely sparse cases.

To conclude, we mention that it would be interesting  to extend our adaptive test to the high-dimensional time series setting, for which a trimming parameter seems necessary to accommodate weak temporal dependence in view of recent work by \cite{wangshao2019}. 
In addition, the focus of this paper is on mean change, whereas in practice the interest could be on other high-dimensional parameters, such as vector of marginal quantiles, variance-covariance matrix, and even high-dimensional distributions. It remains to be seen whether some extensions to these more general parameters are possible in the high-dimensional environment.  We shall leave these open problems for future research.
\bigskip

\newpage
\begin{table}[H]
\centering
\begin{tabular}{|c|c|c|c|c|c|c|c|c|}
\hline
\multirow{2}{*}{DGP} & \multirow{2}{*}{$(n,p)$} & \multicolumn{7}{c|}{$\alpha=$5\%} \\ \cline{3-9} 
 &  & $q=2$ & $q=4$ & $q=6$ & $q=2,4$ & $q=2,6$ & $q=2,4,6$ & EH\\ \hline
\multirow{2}{*}{Id} & (200,100) & 0.028 & 0.065 & 0.056 & 0.052 & 0.032 & 0.045 & 0.01\\ \cline{2-9} 
 & (400,200) & 0.036 & 0.068 & 0.051 & 0.055 & 0.041 & 0.045 & 0 \\ \hline
 \multirow{2}{*}{AR(0.5)} & (200,100) & 0.049 & 0.109 & 0.077 & 0.087 & 0.063 & 0.085  & 0.111 \\ \cline{2-9} 
 & (400,200) & 0.043 & 0.089 & 0.058 & 0.081 & 0.056 & 0.074  & 0.093

 \\ \hline
  \multirow{2}{*}{AR(0.8)} & (200,100) & 0.051 & 0.12 & 0.079 & 0.097 & 0.063 & 0.086 & 0.613 \\ \cline{2-9} 
 & (400,200) & 0.045 & 0.094 & 0.046 & 0.082 & 0.047 & 0.069 & 0.66\\ \hline
  \multirow{2}{*}{CS} & (200,100) & 0.095 & 0.103 & 0.081 & 0.109 & 0.088 & 0.098 & 0.729\\ \cline{2-9} 
 & (400,200) & 0.11 & 0.085 & 0.061 & 0.116 & 0.099 & 0.104 & 0.898\\ \hline
\end{tabular}
\caption{Size for one change point test}
\label{sizeonent}
\end{table}

\begin{table}[H]
\centering
\begin{tabular}{|c|c|c|c|c|c|c|c|c|c|c|}
\hline
\multirow{2}{*}{$\mathcal{H}_1$} & \multirow{2}{*}{DGP} & \multirow{2}{*}{$\delta$} &
\multirow{2}{*}{$(n,p)$} & \multicolumn{7}{c|}{$\alpha=$5\%} \\ \cline{5-11} 
 & &  &  & $q=2$ & $q=4$ & $q=6$ & $q=2,4$ & $q=2,6$ & $q=2,4,6$ & EH \\ \hline
\multirow{16}{*}{Sparse} & \multirow{4}{*}{Id} &\multirow{2}{*} {1} & (200,100)& 0.742 & 0.981 & 0.962 & 0.982 & 0.967 & 0.981 & 0.844 \\ \cline{4-11} 
 & &  & (400,200) & 0.94 & 1 & 1 & 1 & 1 & 1 & 0.998 \\ \cline{3-11} 
 & & \multirow{2}{*}{2} & (200,100) & 0.995 & 1 & 1 & 1 & 1 & 1 & 1 \\ \cline{4-11} 
 & &  & (400,200) & 1 & 1 & 1 & 1 & 1 & 1 & 1  \\  \cline{2-11}

 & \multirow{4}{*}{AR(0.5)} & \multirow{2}{*}{1} & (200,100) & 0.566  & 0.947 & 0.91 & 0.933 & 0.894 & 0.93 &  0.809\\ \cline{4-11} 
 & & & (400,200) & 0.82 & 1 & 1 & 1 & 0.996 & 1 & 0.994 \\ \cline{3-11}
 & & \multirow{2}{*}{2} & (200,100) & 0.94 & 1 & 1 & 1 & 0.999 & 1 & 0.999 \\ \cline{4-11} 
 & & & (400,200) & 0.998  & 1 & 1 & 1 & 1 & 1 & 1  \\ \cline{2-11}
 
 & \multirow{4}{*}{AR(0.8)} & \multirow{2}{*}{1} & (200,100)  & 0.298 & 0.887 & 0.84 & 0.883 & 0.82 & 0.876 & 0.912  \\ \cline{4-11} 
 & &  & (400,200) & 0.522 & 0.99 & 0.994 & 0.99 & 0.988 & 0.992 & 1 \\ \cline{3-11} 
 & &\multirow{2}{*}{2} & (200,100) & 0.703 & 0.997 & 0.995 & 0.997 & 0.994 & 0.997 & 0.994 \\ \cline{4-11} 
 & & & (400,200) & 0.928 & 1 & 1 & 1 & 1 & 1 & 1 \\ \cline{2-11}
  & \multirow{4}{*}{CS} & \multirow{2}{*}{1} & (200,100)  & 0.231 & 0.971 & 0.937 & 0.966 & 0.927 & 0.962 & 0.964 \\ \cline{4-11} 
 & & & (400,200) & 0.226 & 1 & 1 & 1 & 1 & 1 & 1 \\ \cline{3-11} 
 & & \multirow{2}{*}{2} & (200,100) & 0.592 & 1 & 1 & 1 & 1 & 1 & 1 \\ \cline{4-11} 
 & & & (400,200) & 0.656 & 1 & 1 & 1 & 1 & 1 & 1 \\ \hline

\multirow{16}{*}{Dense} & \multirow{4}{*}{Id} & \multirow{2}{*}{1} & (200,100)& 0.718 & 0.326 & 0.292 & 0.677 & 0.661  & 0.645 & 0.444 \\ \cline{4-11} 
 & & & (400,200) & 0.94 & 0.348 & 0.282 & 0.912 & 0.896 & 0.89 & 0.82 \\ \cline{3-11} 
 & & \multirow{2}{*}{2} & (200,100) & 0.995 & 0.567 & 0.612 & 0.991 & 0.99 & 0.985 & 0.978\\ \cline{4-11} 
 & & & (400,200) & 1 & 0.682 & 0.628 & 1 & 1 & 1 & 1 \\ \cline{2-11}

 & \multirow{4}{*}{AR(0.5)} & \multirow{2}{*}{1} & (200,100)& 0.589 & 0.357 & 0.312 & 0.581 & 0.554 & 0.552 & 0.566\\ \cline{4-11} 
 & & & (400,200) & 0.808 & 0.36 & 0.338 & 0.762 & 0.754 & 0.732 & 0.832 \\ \cline{3-11} 
 & & \multirow{2}{*}{2} & (200,100) & 0.927 & 0.616 & 0.573 & 0.917 & 0.909 & 0.899 & 0.93\\ \cline{4-11} 
 & & & (400,200) & 0.994 & 0.68 & 0.608 & 0.99 & 0.988 & 0.984 & 0.998 \\  \cline{2-11}
  
  & \multirow{4}{*}{AR(0.8)} & \multirow{2}{*}{1} & (200,100)  & 0.385 & 0.33 & 0.262 & 0.41 & 0.358 & 0.376 & 0.831  \\ \cline{4-11} 
 & & & (400,200) & 0.502 & 0.32 & 0.242 & 0.474 & 0.436 & 0.422 & 0.912 \\ \cline{3-11} 
 & & \multirow{2}{*}{2} & (200,100) & 0.693 & 0.537 & 0.474 & 0.699 & 0.656 & 0.667 & 0.94 \\ \cline{4-11} 
 & & & (400,200) & 0.872 & 0.564 & 0.51 & 0.866 & 0.848 & 0.84 & 0.986 \\  \cline{2-11}
    
    & \multirow{4}{*}{CS} & \multirow{2}{*}{1} & (200,100)  & 0.345 & 0.277 & 0.235 & 0.363 & 0.33 & 0.338 & 0.84 \\ \cline{4-11} 
 & & & (400,200) & 0.36 & 0.284 & 0.214 & 0.368 & 0.334 & 0.348 & 1 \\ \cline{3-11} 
 & & \multirow{2}{*}{2} & (200,100) & 0.544 & 0.455 & 0.414 & 0.551 & 0.526 & 0.532 & 0.919 \\ \cline{4-11} 
 & & & (400,200) & 0.588 & 0.474 & 0.424 & 0.584 & 0.55 & 0.562  & 1 \\ \hline
\end{tabular}
\caption{Power for one change point test under different alternatives}
\label{poweronent}
\end{table}

\begin{table}[H]
\centering
\begin{tabular}{|c|c|c|c|c|c|c|c|c|c|}
\hline
\multirow{2}{*}{$\delta$} & \multirow{2}{*}{Method} & \multicolumn{4}{c|}{Sparse} & \multicolumn{4}{c|}{Dense} \\ \cline{3-10} 
                          &       & Id    & AR(0.5) & AR(0.8) & CS    & Id    & AR(0.5) & AR(0.8) & CS    \\ \hline
\multirow{4}{*}{1}  & SN(2) & 38.7  & 53.7    & 72.3    & 84.5  & 41.1  & 50.6    & 72.6    & 91.4  \\ \cline{2-10} 
                          & SN(4) & 20.3  & 24.5    & 26.7    & 20.9  & 44.6  & 43.0    & 49.1    & 49.6  \\ \cline{2-10} 
                          & SN(6) & 18.5  & 22.0    & 22.3    & 19.6  & 33.1  & 31.6    & 32.6    & 31.4  \\ \cline{2-10} 
                          & EH    & 150.3 & 214.9   & 300.6   & 326.8 & 155.6 & 216.9   & 291.3   & 332.3 \\ \hline
\multirow{4}{*}{2}  & SN(2) & 26.0  & 33.5    & 41.7    & 44.3  & 27.5  & 36.6    & 54.8    & 76.8  \\ \cline{2-10} 
                          & SN(4) & 14.4  & 17.5    & 19.6    & 16.3  & 37.9  & 38.3    & 45.8    & 48.8  \\ \cline{2-10} 
                          & SN(6) & 12.1  & 14.1    & 14.9    & 11.9  & 29.7  & 28.6    & 31.9    & 30.5  \\ \cline{2-10} 
                          & EH    & 41.4  & 90.2    & 196.8   & 272.7 & 40.1  & 110.8   & 210.2   & 286.6 \\ \hline
\multirow{4}{*}{4}  & SN(2) & 21.8  & 24.8    & 29.5    & 30.4  & 20.7  & 26.1    & 39.2    & 64.2  \\ \cline{2-10} 
                          & SN(4) & 12.1  & 14.7    & 16.5    & 14.1  & 22.8  & 27.8    & 39.0    & 44.3  \\ \cline{2-10} 
                          & SN(6) & 9.9   & 10.8    & 11.6    & 10.1  & 26.1  & 26.5    & 29.1    & 33.6  \\ \cline{2-10} 
                          & EH    & 8.7   & 16.7    & 66.8    & 153.2 & 9.7   & 29.9    & 109.4   & 209.6 \\ \hline

\end{tabular}
\caption{RMSE (multiplied by $10^3$) for one change point location estimation under different alternatives}
\label{loconent}
\end{table}

\begin{figure}[H]
\centering
\includegraphics[width=17cm]{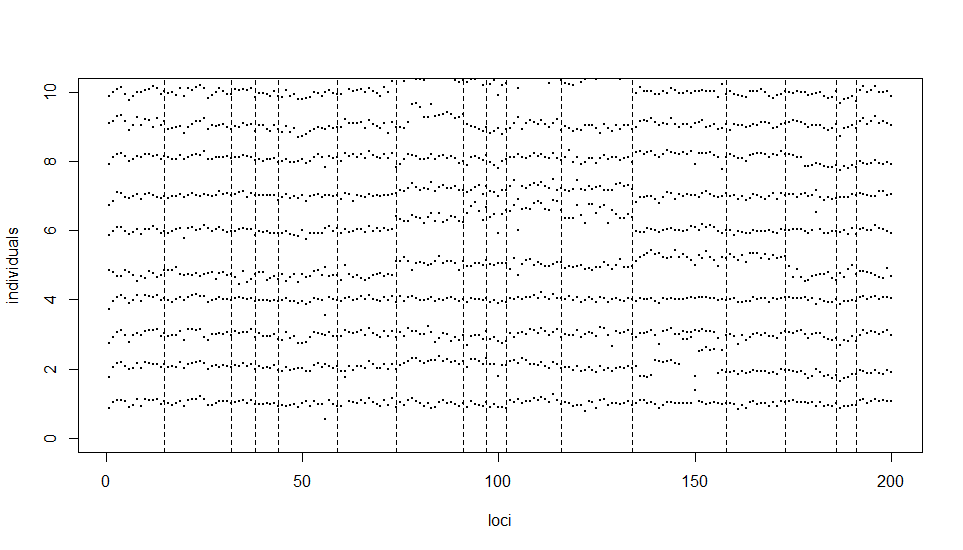}
\caption{ACGH data of the first 10 individuals at first 200 loci. The dashed lines represent the locations of the change points detected.}
\label{acgh}
\end{figure}

\begin{table}[H]
    \small
    \centering
    \begin{tabular}{c|c|ccccccc|c|c}
    \hline
    \multirow{2}{*}{}  &  \multirow{2}{*}{} & \multicolumn{7}{c|}{$\hat{N}-N$} & \multirow{2}{*}{MSE} & \multirow{2}{*}{ARI}  \\
    \cline{3-9}
     & & -3 & -2 & -1 & 0 & 1 & 2 & 3 & & \\
     \hline
     \multirow{5}{*}{Sparse(2.5)} & WBS-SN(2) & 0 & 1 & 11 & 75 & 13 & 0 & 0 & 0.28 & 0.8667 \\
     \cline{2-11}
      & WBS-SN(4)  & 0 & 0 & 0 & 98 & 2 & 0 & 0 & 0.02 & 0.958 \\
        \cline{2-11}
      & WBS-SN(6)  & 0 & 0 & 0 & 94 & 5 & 1 & 0 & 0.09 & 0.9552 \\
      \cline{2-11}
      & WBS-SN(2,6)  & 0 & 0 & 0 & 90 & 10 & 0 & 0 & 0.1 & 0.9489 \\
     \cline{2-11}
      & INSPECT  & 0 & 26 & 0 & 69 & 5 & 0 & 0 & 1.09 & 0.7951 \\
     \hline
     \multirow{5}{*}{Sparse(4)} & WBS-SN(2) & 0 & 0 & 0 & 86 & 14 & 0 & 0 & 0.14 & 0.9188 \\
     \cline{2-11}
      & WBS-SN(4)  & 0 & 0 & 0 & 98 & 2 & 0 & 0 & 0.02 & 0.9684 \\
        \cline{2-11}
      & WBS-SN(6)  & 0 & 0 & 0 & 94 & 5 & 1 & 0 & 0.09 & 0.9707 \\
      \cline{2-11}
      & WBS-SN(2,6)  & 0 & 0 & 0 & 90 & 10 & 0 & 0 & 0.1 & 0.9678 \\
     \cline{2-11}
      & INSPECT  & 0 & 0 & 0 & 91 & 8 & 1 & 0 & 0.12 & 0.9766 \\
     \hline
          \multirow{5}{*}{Dense(2.5)} & WBS-SN(2) & 0 & 2 & 10 & 74 & 13 & 1 & 0 & 0.35 & 0.8662 \\
     \cline{2-11}
      & WBS-SN(4)  & 94 & 4 & 2 & 0 & 0 & 0 & 0 & 8.64 & 0.0263 \\
        \cline{2-11}
      & WBS-SN(6)  & 70 & 20 & 7 & 3 & 0 & 0 & 0 & 7.17 & 0.1229 \\
      \cline{2-11}
      & WBS-SN(2,6)  & 5 & 5 & 7 & 64 & 9 & 0 & 0 & 0.91 & 0.7809 \\
     \cline{2-11}
      & INSPECT  & 0 & 40 & 0 & 46 & 13 & 0 & 1 & 1.82 & 0.6656 \\
     \hline
     \multirow{5}{*}{Dense(4)} & WBS-SN(2) & 0 & 0 & 0 & 85 & 13 & 2 & 0 & 0.21 & 0.9186 \\
     \cline{2-11}
      & WBS-SN(4) & 47 & 33 & 14 & 6 & 0 & 0 & 0 & 5.69 & 0.2748 \\
        \cline{2-11}
      & WBS-SN(6)  & 46 & 28 & 21 & 5 & 0 & 0 & 0 & 5.47 & 0.2642 \\
      \cline{2-11}
      & WBS-SN(2,6)  & 0 & 0 & 0 & 87 & 13 & 0 & 0 & 0.13 & 0.9214 \\
     \cline{2-11}
      & INSPECT  & 0 & 7 & 0 & 68 & 22 & 2 & 1 & 0.67 & 0.9027 \\
     \hline
       & WBS-SN(2) & 0 & 1 & 12 & 73 & 14 & 0 & 0 & 0.3 & 0.8742 \\
     \cline{2-11}
      Sparse(2.5) & WBS-SN(4) & 0 & 0 & 62 & 37 & 1 & 0 & 0 & 0.63 & 0.7855 \\
        \cline{2-11}
      \& & WBS-SN(6)  & 0 & 0 & 60 & 38 & 2 & 0 & 0 & 0.62 & 0.7743 \\
      \cline{2-11}
     Dense(4) & WBS-SN(2,6)  & 0 & 0 & 0 & 91 & 9 & 0 & 0 & 0.09 & 0.9439 \\
     \cline{2-11}
      & INSPECT  & 0 & 21 & 1 & 70 & 6 & 1 & 1 & 1.04 & 0.8198 \\
     \hline
    \end{tabular}
    \caption{Multiple change-point  estimation}
    \label{simwbs}
\end{table}

\newpage

\bibliographystyle{asa}
\bibliography{ref}

\newpage
{\bf\large Supplement to "Adaptive Inference for Change Points in High-Dimensional Data"}

\vskip 1cm
The supplement contains all the technical proofs in Section~\ref{sec:proofs} and some additional simulation results on network change-point detection in Section~\ref{sec:network}.

\section{Technical Appendix}
\label{sec:proofs}

In the following, we will denote $a_n\lesssim b_n$ and $b_n\succsim a_n$ if $\limsup_n a_n/b_n <\infty$.
\begin{proof}[Proof of Theorem \ref{convT}]

Recall that under the null, as $X_i$'s have the same mean,
\begin{align*}
D_{n,q}(r;[a,b])=\sum_{c=0}^q(-1)^{q-c}\binom{q}{c}P^{\lf nr\rf-\lf na\rf-c}_{q-c}P^{\lf nb\rf-\lf nr\rf-q+c}_cS_{n,q,c}(r;[a,b]).
\end{align*}
Therefore, we can calculate the covariance structure of $G_{q}$ based on that of $Q_{q,c}$ given in Theorem \ref{Q}.
$$
\operatorname{var}[G_q(r;[a,b])]=\sum^q_{c=0}\binom{q}{c}^2c!(q-c)!(r-a)^{2q-c}(b-r)^{q+c}=q!(r-a)^q(b-r)^q(b-a)^q.
$$
When $r_1<r_2$,
\begin{align*}
&\operatorname{cov}(G_q(r_1;[a_1,b_1]),G_q(r_2;[a_2,b_2]))\\
=&\sum_{0\le c_1\le c_2\le q}\Big((-1)^{c_1+c_2}\binom{q}{c_1}\binom{q}{c_2}\binom{C}{c}c!(q-C)!\mathds{1}_{r_1>a_2,r_2<b_1}\\
&\cdot(r_1-a_1)^{q-c_1}(b_1-r_1)^{c_1}(r_2-a_2)^{q-c_2}(b_2-r_2)^{c_2}(r-A)^c(R-r)^{C-c}(b-R)^{q-C}\Big).   
\end{align*}
When $r_1>r_2$,
\begin{align*}
&\operatorname{cov}(G_q(r_1;[a_1,b_1]),G_q(r_2;[a_2,b_2]))\\
=&\sum_{0\le c_2\le c_1\le q}\Big((-1)^{c_1+c_2}\binom{q}{c_1}\binom{q}{c_2}\binom{C}{c}c!(q-C)!\mathds{1}_{r_2>a_1,r_1<b_2}\\
&\cdot(r_1-a_1)^{q-c_1}(b_1-r_1)^{c_1}(r_2-a_2)^{q-c_2}(b_2-r_2)^{c_2}(r-A)^c(R-r)^{C-c}(b-R)^{q-C}\Big).   
\end{align*}
When $r_1=r_2=r$,
\begin{align*}
&\operatorname{cov}(G_q(r;[a_1,b_1]),G_q(r;[a_2,b_2]))\\
=&\sum^q_{c=0}\binom{q}{c}^2c!(q-c)!(r-a_1)^{q-c}(b_1-r)^{c}(r-a_2)^{q-c}(b_2-r)^{c}(r-A)^c(b-r)^{q-c}\\
=&q!(r-A)^q(b-r)^q(B-a)^q.  
\end{align*}
For $r_1\not=r_2$, we have 
$$\operatorname{cov}(G_q(r_1;[a_1,b_1]),G_q(r_2;[a_2,b_2]))=q![(r-A)(b-R)(B-a)-(A-a)(R-r)(B-b)]^q.$$
For $q_1\not=q_2$, since covariance of $Q_{q_1,c_1}$ and $Q_{q_2,c_2}$ is 0, we know the covariance of $G_{q_1}$ and $G_{q_2}$ is also 0, since their arbitrary linear combinations are also Gaussian by previous proofs, they are jointly Gaussian and therefore independence is implied by uncorrelation. The rest follows from an application of the continuous mapping theorem. 
\end{proof}

Note that our Assumption \ref{cumr} is a counterpart to the assumption made by Remark 3.2 in \cite{hdcp}. Their results are derived with some weaker assumption (i.e. Assumption 3.1 therein), whose $L_q$-norm based counterpart for is given as follows.

\begin{assumption}
For any $q\in2\N$, the following statements hold:\\
A.1 $\sum_{l_1,l_2,l_3,l_4=1}^p(\Sigma_{l_1l_2}\Sigma_{l_2l_3}\Sigma_{l_3l_4}\Sigma_{l_4l_1})^{q/2}=o\left(\|\Sigma\|_{q}^{2q}\right)$.\\
A.2 $Z_0$ has up to $8-$th moments and there exists a constant $C$ independent of $n$ such that
$$\sum_{l_1,...,l_h=1}^p|\cum(Z_{0,l_1},...,Z_{0,l_h})|^q\leq C\|\Sigma\|_q^{qh/2},$$
\quad \quad for $h=2,\ldots,8$.
\label{8cum}
\end{assumption}

We claim that Assumption \ref{8cum} is implied by Assumption \ref{cumr}.

\begin{proof}[Proof of the claim]
Define
$$
S_{m, h}\left(l_{1}\right):=\left\{1 \leq l_{2}, \ldots, l_{h} \leq p_{n} : \max _{1 \leq i, j \leq h}\left|l_{i}-l_{j}\right|=m\right\}.
$$ 
By triangular inequality, $|l_1-l_2|+|l_2-l_3|+|l_3-l_4|+|l_4-l_1|\ge2\max _{1 \leq i, j \leq 4}|l_{i}-l_{j}|$, and therefore,
\begin{align*}
\sum_{l_{1}, \cdots, l_{4}=1}^{p}(\Sigma_{l_1l_2}\Sigma_{l_2l_3}\Sigma_{l_3l_4}\Sigma_{l_4l_1})^{q/2}=&\sum_{l_{1}=1}^{p_{n}} \sum_{m=0}^{p_{n}} \sum_{l_{2}, \ldots, l_{4} \in S_{m, 4}\left(l_{1}\right)}(\Sigma_{l_1l_2}\Sigma_{l_2l_3}\Sigma_{l_3l_4}\Sigma_{l_4l_1})^{q/2}\\
\le&\sum_{l_{1}=1}^{p_{n}} \sum_{m=0}^{p_{n}}\left|S_{m, 4}\left(l_{1}\right)\right| C_{2}^{2q}(1 \vee m)^{-q r} \\
\lesssim& p_{n} \sum_{m=0}^{p_{n}}(1 \vee m)^{4-2-q r}.
\end{align*}
On the other hand,
\begin{align*}
\sum_{l_{1}, \cdots, l_{h}=1}^{p} \operatorname{cum}^{q}\left(X_{0, l_{1}, n}, \cdots, X_{0, l_{h}, n}\right)&=\sum_{l_{1}=1}^{p_{n}} \sum_{m=0}^{p_{n}} \sum_{l_{2}, \ldots, l_{h} \in S_{m, h}\left(l_{1}\right)} \operatorname{cum}^{q}\left(X_{0, l_{1}, n}, \cdots, X_{0, l_{h}, n}\right)\\
&\leq \sum_{l_{1}=1}^{p_{n}} \sum_{m=0}^{p_{n}}\left|S_{m, h}\left(l_{1}\right)\right| C_{h}^{q}(1 \vee m)^{-q r} \\ 
&\lesssim p_{n} \sum_{m=0}^{p_{n}}(1 \vee m)^{h-2-q r}.
\end{align*}
RHS has order $O\left(p_{n}^{h-qr}\right)$ if $h-qr-1>0$. Now a simple computation shows that Assumption \ref{8cum} is satisfied if $h-qr<h/2$ for $h=2, \ldots, 8,$ and $q=2,\ldots,$ which is equivalent
to $r>2$.
\end{proof}

We are now ready to introduce the following lemma, which is vital in proving the main result.

\begin{lemma}
Under Assumption 2.1, for any $i_1^{(h)},i_2^{(h)},...,  i_q^{(h)}$ that are all distinct, $h = 1,...,8$, and $c = 1,2,...,q$, 
$$
\left|\sum_{l_1,...,l_8 = 1}^p\delta_{n,l_1}^{q-c}\cdots\delta_{n,l_8}^{q-c}\E[Z_{i_1^{(1)},l_1}\cdots Z_{i_c^{(1)},l_1}\cdots Z_{i_1^{(8)},l_8}\cdots Z_{i_c^{(8)},l_8}]\right| \lesssim \|\Delta_n\|_q^{8(q-c)}\|\Sigma\|_q^{4c}   \quad(1)
$$

In particular, for $c=q$, we have
$$\left|\sum_{l_1,...,l_8 = 1}^p\E[Z_{i_1^{(1)},l_1}\cdots Z_{i_c^{(1)},l_1}\cdots Z_{i_1^{(8)},l_8}\cdots Z_{i_c^{(8)},l_8}]\right| \lesssim \|\Sigma\|_q^{4q}.\quad(2)$$

In addition, for any $c = 1,2,...,q-1$,
$$\left|\sum_{l_1,l_2 = 1}^p\delta_{n,l_1}^{q-c}\delta_{n,l_2}^{q-c}\Sigma_{l_1,l_2}^{c}\right| = o\left(\|\Delta_n\|_q^{2(q-c)}\|\Sigma\|_q^{c}\right).\quad(3)$$
	
\label{sumcum}
\end{lemma}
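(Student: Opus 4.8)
The plan is to treat claims (1) and (2) together, since both are big-$O$ bounds obtained from a moment-cumulant expansion combined with a generalized Hölder (Finner) inequality, and to handle claim (3) separately, as it is a genuine $o(\cdot)$ statement that additionally exploits the componentwise weak dependence. For (1) and (2) the starting observation is that $\{Z_t\}$ are independent across the time index $t$ with mean zero, so the expectation of the $8c$-fold product factorizes over the groups of factors sharing a common time index, and any group consisting of a single factor forces the whole expectation to vanish. Hence only coincidence patterns in which every distinct time index occurs at least twice contribute; moreover, since the times $i_1^{(h)},\dots,i_c^{(h)}$ within a fixed block $h$ are distinct, each time-group meets each block in at most one factor. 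Expanding each group moment by the moment-cumulant formula and discarding partitions with singleton blocks (which vanish by mean zero), one obtains a finite sum (with a number of terms depending on $q$ only) of products of joint cumulants $\cum(Z_{0,l_{h_1}},\dots,Z_{0,l_{h_d}})$ of orders $d\ge 2$ whose orders sum to $8c$, and in which the index $l_h$ attached to block $h$ appears in exactly $c$ of the cumulant factors, once per factor of that block.

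The key step is to bound $\sum_{l_1,\dots,l_8}\prod_h \delta_{n,l_h}^{q-c}\prod_s|\kappa_s|$ for each such term by the generalized Hölder (Finner) inequality for functions depending on subsets of the indices. I would assign to every cumulant factor the exponent $q$ (reciprocal $1/q$) and to every weight $\delta_{n,l_h}^{q-c}$ the exponent $q/(q-c)$ (reciprocal $(q-c)/q$). The cover condition then holds with equality at each index, since $l_h$ sits in exactly $c$ cumulants and one weight: $c\cdot\frac1q+\frac{q-c}{q}=1$. Applying A.2 of Assumption \ref{8cum} to each cumulant marginal gives $\big(\sum|\kappa_s|^q\big)^{1/q}\lesssim\|\Sigma\|_q^{d_s/2}$, whose product over $s$ equals $\|\Sigma\|_q^{(\sum_s d_s)/2}=\|\Sigma\|_q^{4c}$, while each weight marginal contributes $\big(\sum_{l_h}\delta_{n,l_h}^q\big)^{(q-c)/q}=\|\Delta_n\|_q^{q-c}$, giving $\|\Delta_n\|_q^{8(q-c)}$ in total. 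This is precisely (1); claim (2) is the special case $c=q$, where there are no weights and every index lies in exactly $q$ cumulants, so the same argument with all exponents equal to $q$ yields $\|\Sigma\|_q^{4q}$.

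For (3) the naive generalized Hölder bound only delivers the $O(\|\Delta_n\|_q^{2(q-c)}\|\Sigma\|_q^c)$ rate, so the $o(\cdot)$ must come from elsewhere. First I would reduce the double sum to a single one: using $|\delta_{n,l_1}^{q-c}\delta_{n,l_2}^{q-c}|\le\frac12(\delta_{n,l_1}^{2(q-c)}+\delta_{n,l_2}^{2(q-c)})$ (the exponent $2(q-c)$ is even) together with the decay $|\Sigma_{l_1l_2}|\lesssim(1\vee|l_1-l_2|)^{-r}$ from Assumption \ref{cumr} with $rc\ge r>2$, the inner sum $\sum_{l_2}|\Sigma_{l_1l_2}|^c$ is bounded uniformly in $l_1$, so the left-hand side is $\lesssim\sum_l\delta_{n,l}^{2(q-c)}$. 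It then suffices to prove $\sum_l\delta_{n,l}^{2(q-c)}=o(\|\Delta_n\|_q^{2(q-c)}\|\Sigma\|_q^c)$, for which I would use the crude lower bound $\|\Sigma\|_q\gtrsim p^{1/q}$ (from $\Sigma_{ll}=\var(Z_{0,l})\ge c_0$, since all $p$ diagonal terms contribute to $\|\Sigma\|_q^q$) and compare $\sum_l\delta_{n,l}^{2(q-c)}$ with $\|\Delta_n\|_q^{2(q-c)}$ by a power-mean/Hölder argument: when $c\le q/2$ the ratio is $O(1)$, and when $c>q/2$ it is $\lesssim p^{(2c-q)/q}$. In both cases, since $c<q$, dividing by the factor $p^{c/q}$ in $\|\Sigma\|_q^c$ tends to $0$.

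The main obstacle is twofold. In (1)--(2) it is the bookkeeping of the moment-cumulant expansion and, above all, recognizing that the natural exponents make the Finner cover condition hold with equality; this is exactly what forces the bound to collapse onto $\|\Delta_n\|_q$ and $\|\Sigma\|_q$ with no stray powers of $p$. In (3) the delicate point is upgrading big-$O$ to little-$o$: this genuinely uses both the weak componentwise dependence (to integrate out one coordinate of $\Sigma^c$) and the divergence $p\to\infty$, and would fail without the gap produced by $\|\Sigma\|_q\gtrsim p^{1/q}$.
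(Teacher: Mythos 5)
Your proposal is correct, but for claims (1)--(2) it takes a genuinely different route from the paper's, while for claim (3) it is in substance the paper's own argument. For (1)--(2) you expand the full $8c$-fold expectation first --- factorizing over time-groups by temporal independence and then applying the moment--cumulant formula --- and you control the resulting multilinear sum in the component indices via Finner's generalized H\"older inequality, whose cover condition you rightly observe holds with equality, $c/q+(q-c)/q=1$. The paper avoids this combinatorial expansion entirely: it rewrites the left-hand side of (1) as $\E\bigl[\prod_{u=1}^{8}Y_u\bigr]$ with $Y_u=\sum_{l}\delta_{n,l}^{q-c}Z_{i_1^{(u)},l}\cdots Z_{i_c^{(u)},l}$, applies H\"older for a product of eight random variables to get the bound $\E[Y_1^{8}]$ (the eight factors are identically distributed because the $Z_i$ are i.i.d.\ and the times within each block are distinct), factorizes over the $c$ distinct times to reduce to the $c$-th power of the single-time moment $\E[Z_{0,l_1}\cdots Z_{0,l_8}]$, and only then invokes a two-exponent H\"older (with conjugate exponents $q/(q-c)$ and $q/c$) together with the moment--cumulant expansion, the $C_r$ inequality, and assumption A.2. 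The paper's ordering (H\"older first, cumulants last) confines the cumulant bookkeeping to one $8$-index moment and needs nothing beyond classical H\"older; your ordering demands the stronger Finner inequality and heavier bookkeeping, but it never exploits the exchangeability of the eight blocks, so it would survive, e.g., when the blocks are not identically distributed. Both arguments ultimately rest on the same bound A.2 derived from Assumption 2.1. For (3), your AM--GM plus uniform-row-sum step produces exactly the intermediate bound $\|\Delta_n^{\circ(q-c)}\|_2^2\max_{l_1}\sum_{l_2}|\Sigma_{l_1l_2}|^{c}$ that the paper reaches by bounding $\sigma_{\max}(\Sigma^{\circ c})$ by its maximal absolute row sum, and the remaining ingredients --- the $\ell_{2(q-c)}$ versus $\ell_q$ comparison split at $c=q/2$, and the lower bound $\|\Sigma\|_q^q\gtrsim p$ coming from $\operatorname{Var}(Z_{0,l})\ge c_0$ --- coincide with the paper's, so there the two proofs differ only in phrasing (elementary versus spectral).
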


\begin{proof}[Proof of Lemma \ref{sumcum}]
Applying the generalized H\"older's Inequality, we obtain
\begin{align*}
&\left|\sum_{l_1,...,l_8 = 1}^p\delta_{n,l_1}^{q-c}\cdots\delta_{n,l_h}^{q-c}\E[Z_{i_1^{(1)},l_1}\cdots Z_{i_c^{(1)},l_1}\cdots Z_{i_1^{(8)},l_8}\cdots Z_{i_c^{(8)},l_8}]\right|=\left|\E\left(\prod_{u = 1}^8\left[\sum_{l_u = 1}^p\delta_{n,l_u}^{q-c}Z_{i_1^{(u)},l_u}\cdots Z_{i_c^{(u)},l_u}\right]\right)\right|\\
\leq&\prod_{u = 1}^8 \left\{\E\left(\left[\sum_{l_u = 1}^p\delta_{n,l_u}^{q-c}Z_{i_1^{(u)},l_u}\cdots Z_{i_c^{(u)},l_u}\right]^8\right)\right\}^{1/8} = \E\left(\left[\sum_{l_1 = 1}^p\delta_{n,l_1}^{q-c}Z_{i_1^{(1)},l_1}\cdots Z_{i_c^{(1)},l_1}\right]^8\right)\\
=&\sum_{l_1,...,l_8=1}^p\delta_{n,l_1}^{q-c}...\delta_{n,l_8}^{q-c}\E\left[Z_{i_1^{(1)},l_1}...Z_{i_1^{(1)},l_8}\cdots Z_{i_c^{(1)},l_1}...Z_{i_c^{(1)},l_8}\right] = \sum_{l_1,...,l_8=1}^p\delta_{n,l_1}^{q-c}...\delta_{n,l_8}^{q-c}\left(\E\left[Z_{i_1^{(1)},l_1}...Z_{i_1^{(1)},l_8}\right]\right)^c,
\end{align*}
since $i_1^{(1)},i_2^{(1)},...,i_{c}^{(1)}$ are all different, and $\{Z_i\}$ are i.i.d. Again by H\"older's Inequality,
\begin{align*}
&\sum_{l_1,...,l_8=1}^p\delta_{n,l_1}^{q-c}...\delta_{n,l_8}^{q-c}\left(\E\left[Z_{i_1^{(1)},l_1}...Z_{i_1^{(1)},l_8}\right]\right)^c \\
\leq& \left\{\sum_{l_1,...,l_8=1}^p(\delta_{n,l_1}^{q-c}...\delta_{n,l_8}^{q-c})^{q/(q-c)}\right\}^{(q-c)/q}\left\{\sum_{l_1,...,l_8=1}^p\left(\E\left[Z_{i_1^{(1)},l_1}...Z_{i_1^{(1)},l_8}\right]\right)^{cq/c}\right\}^{c/q}\\
\lesssim& \|\Delta_n\|_q^{8(q-c)}\left\{\sum_{l_1,...,l_8=1}^p\sum_{ \pi}\prod_{B\in\pi}cum(Z_{0,l_i},i \in B)^q \right\}^{c/q}.
\end{align*}
The last line in the above inequalities is due to the CR inequality and the definition of joint cumulants, where $\pi$ runs through the list of all partitions of $\{1, ...,8\}$, $B$ runs through the list of all blocks of the partition $\pi$. As all blocks in a partition are disjoint, we can further bound it as
\begin{align*}
&\|\Delta_n\|_q^{8(q-c)}\left\{\sum_{l_1,...,l_8=1}^p\sum_{ \pi}\prod_{B\in\pi}cum(Z_{0,l_i},i\in B)^q \right\}^{c/q}\\
=&\|\Delta_n\|_q^{8(q-c)}\left\{\sum_{ \pi}\prod_{B\in\pi}\sum_{l_i = 1, i\in B}^pcum(Z_{0,l_i},i\in B)^q \right\}^{c/q} \lesssim \|\Delta_n\|_q^{8(q-c)}\left\{\sum_{ \pi}\|\Sigma\|_q^{q\sum_{B\in \pi}|B|/2}\right\}^{c/q}\\
 \lesssim& \|\Delta_n\|_q^{8(q-c)}\|\Sigma\|_q^{4c},
\end{align*}
where the first inequality in the above is due to Assumption 6.1, A.2, which is a consequence of Assumption 2.1, and the fact that there are only finite number of distinct partitions over $\{1,...,8\}$. This completes the proof of the first result.

For the second result, we first define $A^{\circ n}$ as the notation for the element-wise $n$-th power of any real matrix $A$, i.e. $A^{\circ n}_{i,j} = A_{i,j}^n$. Then we have
\begin{align*}
\left|\sum_{l_1,l_2 = 1}^p\delta_{n,l_1}^{q-c}\delta_{n,l_2}^{q-c}\Sigma_{l_1,l_2}^{c}\right| = \Delta_n^{{\circ (q-c)}^T}\Sigma^{\circ c}\Delta_n^{\circ (q-c)}\leq \|\Delta_n^{\circ (q-c)}\|_2^2\sigma_{\max}(\Sigma^{\circ c}),
\end{align*}
where $\sigma_{\max}$ is the largest eigenvalue. First observe that
$\|\Delta_n^{\circ (q-c)}\|_2^2 = \sum_{l = 1}^p\delta_{n,l}^{2(q-c)} = \|\Delta_n\|_{2(q-c)}^{2(q-c)}$. By properties of $L_q$ norm, $\|\Delta_n\|_{2(q-c)} \leq \|\Delta_n\|_{q}$, if $q \leq 2(q-c)$, and $\|\Delta_n\|_{2(q-c)} \leq p^{1/2(q-c)-1/q}\|\Delta_n\|_{q}$, if $q > 2(q-c)$. This implies $\|\Delta_n\|_{2(q-c)}^{2(q-c)} \leq \max(p^{(2c-q)/q}\|\Delta_n\|_{q}^{2(q-c)}, \|\Delta_n\|_{q}^{2(q-c)})$.

Next, for any symmetric matrix $A$, $\sigma_{\max}(A)\| \leq \|A\|_{\infty} = \max_{i = 1,...,p}\sum_{j = 1}^p|A_{i,j}|$. This, together with Assumption 2.1 (A.2), implies
\begin{align*}
\sigma_{\max}(\Sigma^{\circ c}) \leq \max_{i = 1,...,p}\sum_{j = 1}^p|\Sigma^{\circ c}_{i,j}| \lesssim \max_{i = 1,...,p}\sum_{j = 1}^p(1 \wedge |i-j|)^{-cr} \leq 1 + \sum_{m = 1}^p m^{-cr} \leq \infty,
\end{align*}
for some $r > 2$. This is equivalent to  $\sigma_{\max}(\Sigma^{\circ c}) = O(1)$. Note that $\|\Sigma\|_q^q \geq tr(\Sigma^{\circ q}) \gtrsim p$, which leads to $p^{c/q} \lesssim \|\Sigma\|_q^c$. So
$$\left|\sum_{l_1,l_2 = 1}^p\delta_{n,l_1}^{q-c}\delta_{n,l_2}^{q-c}\Sigma_{l_1,l_2}^{c}\right| \lesssim \max(p^{(2c-q)/q}\|\Delta_n\|_{q}^{2(q-c)}, \|\Delta_n\|_{q}^{2(q-c)}) = o(\|\Delta_n\|_{q}^{2(q-c)}\|\Sigma\|_q^c),$$
since $(2c-q)/q = c/q + (c-q)/q < c/q$, for $c = 1,2,...,q-1$. This completes the proof for the second result. 
	
\end{proof}

This lemma is a generalization to its counterpart in \cite{hdcp}, in which we only have $q=2$. To prove Theorem \ref{Q}, we need the following lemmas to show tightness and finite dimensional convergence.

\begin{lemma}
Under Assumption \ref{cumr}, for any $c=0,1,2...,q$, and define the 3-dimensional index set\\ $\mathcal{G}_n:=\{(i/n,j/n,k/n):i,j,k=0,1,...,n\},$
$$
\E[a_n^{-8}(S_{n,q,c}(r_1;[a_1,b_1])-S_{n,q,c}(r_2;[a_2,b_2]))^8]\le C\|(a_1,r_1,b_1)-(a_2,r_2,b_2)\|^4,
$$
for some constant $C$, any $(a_1,r_1,b_1),(a_2,r_2,b_2)\in\mathcal{G}_n$ such that $\|(a_1,r_1,b_1)-(a_2,r_2,b_2)\|>\delta/n^4$.
\label{mom8}
\end{lemma}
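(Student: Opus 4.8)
The plan is to treat this as the increment moment bound underpinning tightness of $a_{n,q}^{-1}S_{n,q,c}$ in $\ell_{\infty}([0,1]^3)$ via a Bickel--Wichura / Kolmogorov type criterion (here the exponent $4$ on the metric exceeds the parameter dimension $3$). First I would reduce the general three-coordinate increment to one-coordinate increments. Connecting $(a_1,r_1,b_1)$ and $(a_2,r_2,b_2)$ by a path that changes one coordinate at a time and applying the $c_r$-inequality $\E[(X+Y+Z)^8]\le 3^7(\E X^8+\E Y^8+\E Z^8)$, it suffices to bound the eighth moment of an increment in which only one of $a$, $r$, $b$ varies by $C\,|\Delta|^4$, where $\Delta$ is the corresponding coordinate difference; since $|\Delta|\le\|(a_1,r_1,b_1)-(a_2,r_2,b_2)\|$, summing the three contributions gives the claim.

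Next I would exploit the multilinear structure of $S_{n,q,c}$. For the increment in $a$ (with $r,b$ fixed) the left index set $\{\lf na\rf+1,\dots,\lf nr\rf\}$ changes, so $S_{n,q,c}(r;[a_1,b])-S_{n,q,c}(r;[a_2,b])$ is a finite signed combination of U-statistic type sums, in each of which at least one left index is forced into a boundary strip of cardinality $\asymp n|a_1-a_2|$; the increments in $b$ and in $r$ have the same form (for $r$ the split point $\lf nr\rf$ moves, reclassifying strip indices between the left and right blocks, but still forcing at least one index per term into a strip of size $\asymp n|r_1-r_2|$). Because $\mathcal{G}_n$ has spacing $1/n$, whenever the varying coordinate actually changes its difference is $\ge 1/n$, so with $\epsilon$ the coordinate difference one has $n\epsilon\ge 1$; this is the only place the separation hypothesis is used.

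Then I would expand the eighth moment over eight i.i.d.\ time replicates, producing $8q$ time slots of which at least eight (one per replicate) are strip-constrained. Fixing a configuration of time indices, temporal independence together with $\E Z_0=0$ factorizes the expectation over distinct time values and annihilates any configuration in which some value occupies a single slot; hence only configurations whose every time value is shared by at least two slots survive. For each such fixed time configuration the remaining sum over the component indices $l_1,\dots,l_8$ is bounded uniformly by $\|\Sigma\|_q^{4q}$ through part (2) of Lemma~\ref{sumcum} (the $c=q$ case, valid under the null). It then remains to count the surviving configurations: coloring a time value ``strip'' if any of its slots is strip-constrained and writing $s,t$ for the numbers of strip and non-strip values, the pairing constraint gives $s+t\le 4q$ while the eight strip slots cap the admissible sizes, and the time factor is at most $n^{s+t}\epsilon^{s}=n^{t}(n\epsilon)^{s}$. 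Using $n\epsilon\ge 1$ and $\epsilon\le 1$ one checks this is maximized when the strip slots pair among themselves ($s=4$, $t=4q-4$), giving $n^{4q}\epsilon^4$; combining with the component bound yields $\E[(\text{increment})^8]\lesssim n^{4q}\,\epsilon^4\,\|\Sigma\|_q^{4q}=a_{n,q}^{8}\,\epsilon^4$, and dividing by $a_{n,q}^8$ completes the single-coordinate estimate.

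The main obstacle is this last combinatorial step: one must verify that over every admissible time-matching pattern --- including those in which strip slots pair with non-strip slots, or in which blocks are larger than pairs --- the product of the $n$-power and the $\epsilon$-power never exceeds $n^{4q}\epsilon^4$. I would organize this through the strip/non-strip coloring together with the monotonicity of $n^{t}(n\epsilon)^{s}$ in $s$ and $t$, the decisive point being that a strip-colored value always carries only $O(n\epsilon)$ choices however its slots are matched, so that minimizing $s$ only forces extra slots into each strip value and never increases the count. The $r$-increment requires marginally more bookkeeping because the moving split point reclassifies indices, but the strip-slot counting, and hence the final bound, is unchanged.
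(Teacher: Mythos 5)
Your proposal is correct and follows essentially the same route as the paper's proof: a $c_r$-inequality telescoping into single-coordinate increments, representation of each increment as sums with one strip-constrained index of cardinality $\asymp n\epsilon$, an eighth-moment expansion over eight replicates in which temporal independence and $\E Z_0=0$ force every time value to be repeated, a uniform bound on the component sums via Lemma~\ref{sumcum}-(2), and a count of the surviving time configurations dominated by $n^{4(q-1)}(n\epsilon)^4$. The only difference is cosmetic: you carry out the block-structure maximization (strip/non-strip coloring, using $n\epsilon\ge 1$) explicitly, whereas the paper simply asserts the dominant count and absorbs the grid effect into the additive $n^{-4}$ term.
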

\begin{proof}[Proof of Lemma \ref{mom8}]
By CR-inequality,
\begin{align*}
\E[(S_{n,q,c}(r_1;[a_1,b_1])-S_{n,q,c}(r_2;[a_2,b_2]))^8]\le& C\Big\{ \E[(S_{n,q,c}(r_1;[a_1,b_1])-S_{n,q,c}(r_1;[a_1,b_2]))^8]\\
&+\E[(S_{n,q,c}(r_1;[a_1,b_2])-S_{n,q,c}(r_1;[a_2,b_2]))^8]\\
&+\E[(S_{n,q,c}(r_1;[a_2,b_2])-S_{n,q,c}(r_2;[a_2,b_2]))^8]\Big\}.
\end{align*}
We shall only analyze $\E[(S_{n,q,c}(r;[a,b])-S_{n,q,c}(r;[a,b']))^8]$, and the analysis of the other 2 terms are similar.

Note that for any $a,r,b,b'\in[0,1]$ and $b<b'$, 
\begin{align*}
&\E[(S_{n,q,c}(r;[a,b])-S_{n,q,c}(r;[a,b']))^8]\\
=&\E\left[\left((q-c)\sum_{l=1}^p\sum_{\lf nb\rf+1\le j\le \lf nb'\rf}\sum_{\lf na\rf+1\leq i_1\not=\cdots\not=i_c\leq\lf nr\rf}\sum_{\lf nr\rf+1\leq j_1\not=\cdots\not=j_{q-c-1}\le j-1}\left(\prod_{t=1}^cZ_{i_t,l}\cdot\prod_{s=1}^{q-c-1}Z_{j_s,l}\cdot Z_{j,l}\right)\right)^8\right]\\
=&C\sum_{j^{(\cdot)},i_t^{(\cdot)},j_s^{(\cdot)}}\sum_{l_1,\ldots,l_8=1}^p\prod_{h=1}^8\left(\E\left[\prod_{t=1}^cZ_{i_t^{(h)},l_h}\right]\E\left[\prod_{s=1}^{q-c-1}Z_{j_s^{(h)},l_h}\right]\E\left[Z_{j^{(h)},l_h}\right]\right)\\
\lesssim&n^{4(q-1)}(\lf nb'\rf-\lf nb\rf)^{4}\|\Sigma\|_q^{qh/2}\lesssim n^{4q}\left[(b'-b)^4+\frac{1}{n^4}\right]\|\Sigma\|_q^{4q},
\end{align*}
where we have applied Lemma \ref{sumcum}-(2) to $i_{1}^{(h)},\ldots,i_{c}^{(h)},j_{1}^{(h)},\ldots,j_{q-c-1}^{(h)},j^{(h)}$, and the summation $\sum_{j^{(\cdot)},i_t^{(\cdot)},j_s^{(\cdot)}}$ is over $\lf nb\rf+1\le j^{(h)}\le\lf nb'\rf,\lf na\rf+1\leq i_1^{(h)}\not=\cdots\not=i_c^{(h)}\leq\lf nr\rf,\lf nr\rf+1\leq j_1^{(h)}\not=\cdots\not=j_{q-c-1}^{(h)}\le j^{(h)}-1$ for $h=1,\ldots,8$.
Therefore, we have
$$
a_n^{-8}\E[(S_{n,q,c}(r;[a,b])-S_{n,q,c}(r;[a,b']))^8]\lesssim(b'-b)^4+\frac{1}{n^4}.
$$
\end{proof}

\begin{lemma}
Fix $q, c,$ for any $0 \leq a_{1}<r_{1}<b_{1} \leq 1,0 \leq a_{2}<r_{2}<b_{2},$ any $\alpha_{1}, \alpha_{2} \in \R$, we have
$$
\frac{\alpha_{1}}{a_{n}} S_{n, q, c}\left(r_{1} ;\left[a_{1}, b_{1}\right]\right)+\frac{\alpha_{2}}{a_{n}} S_{n, q, c}\left(r_{2},\left[a_{2}, b_{2}\right]\right) \stackrel{\mathcal{D}}{\longrightarrow} \alpha_{1} Q_{q, c}\left(r_{1} ;\left[a_{1}, b_{1}\right]\right)+\alpha_{2} Q_{q, c}\left(r_{2} ;\left[a_{2}, b_{2}\right]\right),
$$
where
$$ 
\operatorname{cov}\left(Q_{q, c}(r_1;[a_1, b_1]),Q_{q, c}(r_2;[a_2, b_2])\right)=c!(q-c)!(r-A)^c(b-R)^{q-c},
$$
\label{ficonv1}
\end{lemma}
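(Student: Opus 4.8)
The plan is to prove this finite-dimensional (bivariate) statement by the Cram\'er--Wold device that is already built into the arbitrary coefficients $\alpha_1,\alpha_2$, combined with a martingale central limit theorem applied to a decomposition of $S_{n,q,c}$ according to its largest time index. For each argument $(r;[a,b])$ I would write
$$
a_n^{-1}S_{n,q,c}(r;[a,b])=\sum_{k}D_{n,k}(r;[a,b]),
$$
where $D_{n,k}(r;[a,b])$ collects precisely those summands whose maximal time index equals $k$. Since the $c$ indices $i_1,\ldots,i_c$ range over $(\lf na\rf,\lf nr\rf]$ while the $q-c$ indices $j_1,\ldots,j_{q-c}$ range over the disjoint interval $(\lf nr\rf,\lf nb\rf]$, all $q$ time indices appearing in any product are distinct, so $Z_{k,\cdot}$ occurs exactly once in each term of $D_{n,k}$ while the remaining factors are $\mathcal{F}_{k-1}$-measurable, where $\mathcal{F}_k=\sigma(Z_1,\ldots,Z_k)$. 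Because $\E[Z_0]=0$ and the $Z_t$ are independent, this yields $\E[D_{n,k}(r;[a,b])\mid\mathcal{F}_{k-1}]=0$. Setting $\xi_{n,k}=\alpha_1 D_{n,k}(r_1;[a_1,b_1])+\alpha_2 D_{n,k}(r_2;[a_2,b_2])$, the triangular array $\{\xi_{n,k}\}_k$ is a martingale difference sequence with respect to $\{\mathcal{F}_k\}$, and I would conclude $\sum_k\xi_{n,k}\cod N(0,\sigma^2)$ from a standard martingale CLT (e.g.\ Corollary 3.1 of Hall and Heyde, 1980), with
$$
\sigma^2=c!(q-c)!\big[\alpha_1^2(r_1-a_1)^c(b_1-r_1)^{q-c}+\alpha_2^2(r_2-a_2)^c(b_2-r_2)^{q-c}+2\alpha_1\alpha_2(r-A)^c(b-R)^{q-c}\big].
$$

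The heart of the proof is the verification of the two CLT conditions. For the conditional-variance condition I would first evaluate $\sum_k\E[\xi_{n,k}^2]=\var\big(\alpha_1 a_n^{-1}S_{n,q,c}(r_1;[a_1,b_1])+\alpha_2 a_n^{-1}S_{n,q,c}(r_2;[a_2,b_2])\big)$ and show it converges to $\sigma^2$. By temporal independence, a nonvanishing expectation forces the $q$ time indices of one copy to be matched in pairs of equal value with the $q$ time indices of the other copy; each matched pair at component indices $(l_1,l_2)$ contributes a factor $\Sigma_{l_1,l_2}$, and summing the resulting $\Sigma_{l_1,l_2}^q$ over $l_1,l_2$ gives $\|\Sigma\|_q^q=a_n^2/n^q$. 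The dominant ``diagonal'' pairing, in which each $i$-index is matched to an $i$-index and each $j$-index to a $j$-index, produces the combinatorial factor $c!(q-c)!$ and the overlap volumes $(r-A)^c(b-R)^{q-c}$; after division by $a_n^2$ this reproduces the covariance, and the variance terms arise identically. The decisive combinatorial point is that every off-diagonal pairing contributes zero: an $i$-to-$j$ cross-match in one direction must, by the index-count balance, be matched by an equal number of $j$-to-$i$ cross-matches in the opposite direction, yet since each $i$-range lies entirely to the left of its companion $j$-range, the two cross-ranges $(a_1,r_1]\cap(r_2,b_2]$ and $(r_1,b_1]\cap(a_2,r_2]$ are nonempty only under the mutually exclusive conditions $r_1>r_2$ and $r_2>r_1$, so at least one is always empty and no cross-matching survives. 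I would then show that $\sum_k\E[\xi_{n,k}^2\mid\mathcal{F}_{k-1}]$ concentrates at its mean, bounding its variance by Chebyshev's inequality and controlling the resulting fourth-order index sums by $o(a_n^4)$ via the weak-dependence cumulant bounds of Assumption \ref{cumr} as packaged in Lemma \ref{sumcum}.

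For the Lindeberg requirement I would verify the stronger Lyapunov condition $\sum_k\E[\xi_{n,k}^4]\to0$. Each $\xi_{n,k}$ is a degree-$q$ polynomial in the $Z$'s normalized by $a_n$, and its fourth power contains $Z_{k,\cdot}$ four times; bounding the associated degree-$4q$ component sums through the eighth-moment estimate of Lemma \ref{sumcum}(2) together with the cumulant decay yields a bound of negative order in $n$, which tends to zero. I expect the main obstacle to be exactly this moment bookkeeping: isolating the dominant diagonal pairings, rigorously discarding all off-diagonal and not-fully-paired contributions as lower order through Assumption \ref{cumr}, and then pushing these estimates through both the variance-of-the-conditional-variance and the Lyapunov sums. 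The clean covariance formula rests entirely on the left-to-right ordering of the $i$- and $j$-ranges, so the combinatorial vanishing of the cross-matches is the subtle step requiring the most care.
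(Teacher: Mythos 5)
Your proposal follows essentially the same route as the paper's own proof: the paper also decomposes $S_{n,q,c}$ by its largest time index (necessarily a $j$-index, so each summand contains $Z_{i,\cdot}$ exactly once with the remaining factors $\mathcal{F}_{i-1}$-measurable), forms the combined martingale difference array $\widetilde{\xi}_{n,i}$, and applies a martingale CLT whose two conditions are verified exactly as you propose — the conditional Lindeberg condition via a fourth-moment (Lyapunov) bound using the moment estimates of Lemma \ref{sumcum}, and the conditional-variance condition by isolating the dominant diagonal pairings (yielding $c!(q-c)!$ times the overlap volumes) and killing the off-diagonal contributions with the cumulant/H\"older bounds from Assumption \ref{cumr}. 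Your observation that cross-matches between $i$- and $j$-ranges cannot survive because the two cross-overlap intervals require contradictory orderings of $r_1,r_2$ is precisely the combinatorial fact underlying the paper's covariance computation, so the proposal is correct and matches the paper's argument.
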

\begin{proof}[Proof of Lemma \ref{ficonv1}]
WLOG, we can assume $a_1<a_2<r_1<r_2<b_1<b_2$. The other terms are similar. Define 
\begin{align*}
\xi_{1,i}=&\frac{q-c}{a_n}\sum_{l=1}^p\sum^*_{\lf na_1\rf+1\leq i_1,\cdots,i_c\leq\lf nr_1\rf}\sum^*_{\lf nr_1\rf+1\leq j_1,\cdots,j_{q-c-1}\le i-1}\left(\prod_{t=1}^cZ_{i_t,l}\cdot\prod_{s=1}^{q-c-1}Z_{j_s,l}\cdot Z_{i,l}\right)\\
\xi_{2,i}=&\frac{q-c}{a_n}\sum_{l=1}^p\sum^*_{\lf na_2\rf+1\leq i_1,\cdots,i_c\leq\lf nr_2\rf}\sum^*_{\lf nr_2\rf+1\leq j_1,\cdots,j_{q-c-1}\le i-1}\left(\prod_{t=1}^cZ_{i_t,l}\cdot\prod_{s=1}^{q-c-1}Z_{j_s,l}\cdot Z_{i,l}\right),\\
\end{align*}
and 
$$
\widetilde{\xi}_{n, i}=\left\{\begin{array}{cc}{\alpha_{1} \xi_{1, i}} & {\text { if }\left\lfloor n r_{1}\right\rfloor+q-c \leq i \leq\left\lfloor n r_{2}\right\rfloor}+q-c-1 \\ {\alpha_{1} \xi_{1, i}+\alpha_{2} \xi_{2, i}} & {\text { if }\left\lfloor n r_{2}\right\rfloor+q-c \leq i \leq\left\lfloor n b_{1}\right\rfloor} \\ {\alpha_{2} \xi_{2, i}} & {\text { if }\left\lfloor n b_{1}\right\rfloor+1 \leq i \leq\left\lfloor n b_{2}\right\rfloor}\end{array}\right..
$$
Define $\mathcal{F}_{i}=\sigma\left(Z_{i}, Z_{i-1}, \cdots\right)$, we can see that under the null $\E[Z_1]=0$, $\widetilde{\xi}_{n,i}$ is a martingale difference sequence w.r.t. $\mathcal{F}_i$, and 
$$
\frac{\alpha_{1}}{a_{n}} S_{n, q, c}\left(r_{1} ;\left[a_{1}, b_{1}\right]\right)+\frac{\alpha_{2}}{a_{n}} S_{n, q, c}\left(r_{2},\left[a_{2}, b_{2}\right]\right)=\sum_{i=\lf nr_1\rf+q-c}^{\lf nb_2\rf}\widetilde{\xi}_{n,i}.    
$$
To apply the martingale CLT (Theorem 35.12 in Billingsley (2008)), we need to verify the following two conditions\\
$\displaystyle(1)\quad\forall\epsilon>0,\sum_{i=\lf nr_1\rf+q-c}^{\lf nb_2\rf}\mathbb{E}\left[\widetilde{\xi}_{n,i}^{2} \textbf{1}\left\{\left|\widetilde{\xi}_{n,i}\right|>\epsilon\right\} \Big| \mathcal{F}_{i-1}\right] \stackrel{p}{\rightarrow} 0$.\\
$\displaystyle(2)\quad V_n=\sum_{i=\lf nr_1\rf+q-c}^{\lf nb_2\rf}\mathbb{E}\left[\widetilde{\xi}_{n,i}^{2}|\F_{i-1}\right] \stackrel{p}{\rightarrow} \sigma^2$.
To prove (1), it suffices to show that
$$
\sum_{i=\lf nr_1\rf+q-c}^{\lf nb_2\rf}\mathbb{E}\left[\widetilde{\xi}_{n,i}^{4}\right]\rightarrow 0.
$$
Observe that
\begin{align*}
&\sum_{i=\lf nr_1\rf+q-c}^{\lf nb_2\rf}\mathbb{E}\left[\widetilde{\xi}_{n,i}^{4}\right]\\
=&\alpha_{1}^{4} \sum_{i=\lf nr_1\rf+q-c}^{\lf nr_2\rf+q-c-1}\mathbb{E}\left[\xi_{1, i}^{4}\right]+\sum_{i=\lf nr_2\rf+q-c}^{\lf nb_1\rf}\mathbb{E}\left[\left(\alpha_{1} \xi_{1, i}+\alpha_{2} \xi_{2, i}\right)^{4}\right]+\alpha_{2}^{4} \sum_{i=\lf nb_1\rf+1}^{\lf nb_2\rf}\mathbb{E}\left[\xi_{2, i}^{4}\right]\\
\le& 8\alpha_1^4\sum_{i=\lf nr_1\rf+q-c}^{\lf nb_1\rf}\mathbb{E}\left[{\xi}_{1,i}^{4}\right]+8\alpha_2^4\sum_{i=\lf nr_2\rf+q-c}^{\lf nb_2\rf}\mathbb{E}\left[{\xi}_{2,i}^{4}\right].
\end{align*}
Straightforward calculations show that
\begin{align*}
&\mathbb{E}\left[{\xi}_{1,i}^{4}\right]\\
&=\frac{C}{n^{2q}\|\Sigma\|_q^{2q}}\sum_{i_t^{(h)},j_s^{(h)}}\sum_{l_1,l_2,l_3,l_4=1}^p\prod_{h=1}^4\left(\E\left[\prod_{t=1}^cZ_{i_t^{(h)},l_h}\right]\E\left[\prod_{s=1}^{q-c-1}Z_{j_s^{(h)},l_h}\right]\E\left[Z_{i,l_h}\right]\right)\\
&\lesssim\frac{1}{n^{2q}\|\Sigma\|_q^{2q}}n^{2(q-1)}\|\Sigma\|_q^{2q}=O(\frac{1}{n^2}).
\end{align*}
The same result holds for $\xi_{2,i}$.
Therefore,
$$
\sum_{i=\lf nr_1\rf+q-c}^{\lf nb_2\rf}\mathbb{E}\left[\widetilde{\xi}_{n,i}^{4}\right]\lesssim\sum_{i=\lf nr_1\rf+q-c}^{\lf nb_1\rf}\mathbb{E}\left[{\xi}_{1,i}^{4}\right]+\sum_{i=\lf nr_2\rf+q-c}^{\lf nb_2\rf}\mathbb{E}\left[{\xi}_{2,i}^{4}\right]=O(\frac{1}{n})\rightarrow 0.
$$
As regards (2), we decompose $V_n$ as follows,
\begin{align*}
&\sum_{i=\lf nr_1\rf+q-c}^{\lf nb_2\rf}\mathbb{E}\left[\widetilde{\xi}_{n,i}^{2}|\F_{i-1}\right]\\
=&\alpha_{1}^{2} \sum_{i=\lf nr_1\rf+q-c}^{\lf nr_2\rf+q-c-1}\mathbb{E}\left[\xi_{1, i}^{2}|\F_{i-1}\right]+\sum_{i=\lf nr_2\rf+q-c}^{\lf nb_1\rf}\mathbb{E}\left[\left(\alpha_{1} \xi_{1, i}+\alpha_{2} \xi_{2, i}\right)^{2}|\F_{i-1}\right]
+\alpha_{2}^{2} \sum_{i=\lf nb_1\rf+1}^{\lf nb_2\rf}\mathbb{E}\left[\xi_{2, i}^{2}|\F_{i-1}\right]\\
=&\alpha_1^2\sum_{i=\lf nr_1\rf+q-c}^{\lf nb_1\rf}\mathbb{E}\left[{\xi}_{1,i}^{2}|\F_{i-1}\right]+\alpha_{2}^2\sum_{i=\lf nr_2\rf+q-c}^{\lf nb_2\rf}\mathbb{E}\left[{\xi}_{2,i}^{2}|\F_{i-1}\right]+2\alpha_{1}\alpha_{2}\sum_{i=\lf nr_2\rf+q-c}^{\lf nb_1\rf}\mathbb{E}\left[\xi_{1, i}\xi_{2, i}|\F_{i-1}\right]\\
=&:\alpha_1^2 V_{1,n}+\alpha_2^2V_{2,n}+2\alpha_1\alpha_2 V_{3,n}.
\end{align*}
We still focus on the case $a_1<a_2<r_1<r_2<b_1<b_2$. Note that 
\begin{align*}
&\sum_{i=\lf nr_1\rf+q-c}^{\lf nb_1\rf}\mathbb{E}\left[{\xi}_{1,i}^{2}|\F_{i-1}\right]\\
=&\frac{(q-c)^2}{n^{q}\|\Sigma\|_q^{q}}c!(q-c-1)!\sum_{i=\lf nr_1\rf+q-c}^{\lf nb_1\rf}\sum_{i_t^{(h)},j_s^{(h)}}\sum_{l_1,l_2=1}^p\Sigma_{l_1l_2}\prod_{h=1}^2\left(\prod_{t=1}^cZ_{i_t^{(h)},l_h}\cdot\prod_{s=1}^{q-c-1}Z_{j_s^{(h)},l_h}\right)\\
=&\frac{(q-c)^2}{n^{q}\|\Sigma\|_q^{q}}c!(q-c-1)!\sum_{i=\lf nr_1\rf+q-c}^{\lf nb_1\rf}\sum_{i_t^{(h)},j_s^{(h)}}^{(1)}\sum_{l_1,l_2=1}^p\Sigma_{l_1l_2}\prod_{h=1}^2\left(\prod_{t=1}^cZ_{i_t^{(h)},l_h}\cdot\prod_{s=1}^{q-c-1}Z_{j_s^{(h)},l_h}\right)\\
&+\frac{(q-c)^2}{n^{q}\|\Sigma\|_q^{q}}c!(q-c-1)!\sum_{i=\lf nr_1\rf+q-c}^{\lf nb_1\rf}\sum_{i_t^{(h)},j_s^{(h)}}^{(2)}\sum_{l_1,l_2=1}^p\Sigma_{l_1l_2}\prod_{h=1}^2\left(\prod_{t=1}^cZ_{i_t^{(h)},l_h}\cdot\prod_{s=1}^{q-c-1}Z_{j_s^{(h)},l_h}\right)\\
=&:V_{1,n}^{(1)}+V_{1,n}^{(2)},
\end{align*}
where $\displaystyle\sum_{i_t^{(h)},j_s^{(h)}}^{(1)}$ denotes the summation over terms s.t. $i_t^{(1)}=i_t^{(2)},j_s^{(1)}=j_s^{(2)},\forall t,s$, and $\displaystyle\sum_{i_t^{(h)},j_s^{(h)}}^{(2)}$ is over the other terms.

It is straightforward to see that $\E[V_{1,n}^{(2)}]=0$ as $Z_i$'s are independent, and

\begin{align*}
\E[V_{1,n}^{(1)}]&=\frac{(q-c)^2}{n^{q}\|\Sigma\|_q^{q}}c!(q-c-1)!n^c(r_1-a_1)^c\sum_{k=1}^{\lf nb_1\rf-\lf nr_1\rf}k^{q-c-1}\sum_{l_1,l_2=1}^p\Sigma_{l_1l_2}^p+o(1)\\
&= c!(q-c)!(r_1-a_1)^c(b_1-r_1)^{q-c}+o(1).
\end{align*}
Note that
\begin{align*}
\E[(V_{1,n}^{(1)})^2]=&\frac{(q-c)^4}{n^{2q}\|\Sigma\|_q^{2q}}[c!(q-c-1)!]^2\sum_{l_1,l_2,l_3,l_4=1}^p\sum_{i=\lf nr_1\rf+q-c}^{\lf nb_1\rf}\sum_{j=\lf nr_1\rf+q-c}^{\lf nb_1\rf}\sum_{i_t^{(h)},j_s^{(h)}}^{*}\Bigg[\Sigma_{l_1l_2}\Sigma_{l_3l_4}\\
&\prod_{h=1}^4\left(\prod_{t=1}^cZ_{i_t^{(h)},l_h}\cdot\prod_{s=1}^{q-c-1}Z_{j_s^{(h)},l_h}\right)\Bigg]+o(1),
\end{align*}
where the summation $\sum_{i_t^{(h)},j_s^{(h)}}^{*}$ is over the range of $i_t^{(h)},j_s^{(h)},h=1,2,3,4$ s.t. $i_t^{(1)}=i_t^{(2)},j_s^{(1)}=j_s^{(2)},i_t^{(3)}=i_t^{(4)},j_s^{(3)}=j_s^{(4)},\forall t,s.$ Note that RHS can be further decomposed into 2 parts. The first part corresponds to the summation of the terms s.t. $\{i^{(h)}_t,j^{(s)}\}$ for $h=1$ and has no intersection with that for $h=3$, which has order
\begin{align*}
&\frac{(q-c)^4}{n^{2q}\|\Sigma\|_q^{2q}}[c!(q-c-1)!]^2n^{2c}(r_1-a_1)^{2c}\sum_{i=1}^{\lf nb_1\rf-\lf nr_1\rf}i^{q-c-1}\sum_{j=1}^{\lf nb_1\rf-\lf nr_1\rf}j^{q-c-1}\sum_{l_1,l_2,l_3,l_4}^p\Sigma_{l_1l_2}^q\Sigma_{l_3l_4}^q\\
=&[c!(q-c)!(r_1-a_1)^c(b_1-r_1)^{q-c}]^2+o(1)=\E^2[V_{1,n}^{(1)}]+o(1).    
\end{align*}
For the second part, it corresponds to the summation of the terms s.t. $\{i^{(h)}_t,j^{(s)}\}$ for $h=1$ and has at least one intersection with that for $h=3$. Since at least one "degree of freedom" for $n$ is lost, the summation still has the form $\sum_{l_1,l_2,l_3,l_4=1}^{p} \mathbb{E}\left[Z_{i_{1}^{(1)}, l_{1}}\cdots Z_{i_{q}^{(1)}, l_{1}}\cdots Z_{i_{1}^{(h)}, l_{h}}\cdots Z_{i_{q}^{(h)}, l_{h}}\right]$ as in Lemma \ref{sumcum}-(2), which has order $O(\|\Sigma\|_q^{2q})$. We can conclude that the second part has order $O(\frac{1}{n})$, and hence goes to 0.

Therefore, $\limsup\left(\E[(V_{1,n}^{(1)})^2]-\E^2[V_{1,n}^{(1)}]\right)\le 0$, which implies $\lim\var(V_{1,n}^{(1)})=0$. Therefore, we can conclude that $V_{1,n}^{(1)}\stackrel{p}{\rightarrow}\lim\E[V_{1,n}^{(1)}]=c!(q-c)!(r_1-a_1)^c(b_1-r_1)^{(q-c)}$. It remains to show that $V_{1,n}^{(2)}\stackrel{p}{\rightarrow}0$. 

It suffices to show that $\E\left[(V_{1,n}^{(2)})^2\right]\rightarrow0$. Based on the same argument as before, by applying Lemma \ref{sumcum}-(2) we know that every kind of summation has the same order $O(\frac{1}{n})$ no matter how $i_t^{(h)},j_s^{(h)},i,j$ intersects with each other. Therefore, the terms in the expansion of $\E\left[(V_{1,n}^{(2)})^2\right]$ for which $n$ has highest degree of freedom should dominate. For these terms, each index in $i_t^{(h)},j_s^{(h)},i,j$ should have exactly one pair. The number of these terms is of order $O(n^{2q})$. The summation has forms $\sum_{l_1,l_2,l_3,l_4=1}^p(\Sigma_{l_1l_2}^d\Sigma_{l_3l_4}^d\Sigma_{l_1l_4}^d\Sigma_{l_2l_3}^e\Sigma_{l_1l_3}^f\Sigma_{l_2l_4}^f)$, s.t. $d>0,e+f>0$ and $d+e+f=q$. We need to show that it is of order $o(\|\Sigma\|_q^{2q})$ to complete the proof. By symmetry, we can assume $e>0$, and therefore $d,e\le1$. Note that for $q>2$,
\begin{align*}
&\sum_{l_1,l_2,l_3,l_4=1}^p(\Sigma_{l_1l_2}^d\Sigma_{l_3l_4}^d\Sigma_{l_1l_4}^e\Sigma_{l_2l_3}^e\Sigma_{l_1l_3}^f\Sigma_{l_2l_4}^f)\\
=&\sum_{l_1,l_2,l_3,l_4=1}^p(\Sigma_{l_1l_2}\Sigma_{l_2l_3}\Sigma_{l_3l_4}\Sigma_{l_4l_1})(\Sigma_{l_1l_2}^{d-1}\Sigma_{l_3l_4}^{d-1}\Sigma_{l_1l_4}^{e-1}\Sigma_{l_2l_3}^{e-1}\Sigma_{l_1l_3}^f\Sigma_{l_2l_4}^f)\\
\le&\left[\sum_{l_1,l_2,l_3,l_4=1}^p|\Sigma_{l_1l_2}\Sigma_{l_2l_3}\Sigma_{l_3l_4}\Sigma_{l_4l_1}|^{q/2}\right]^{2/q}
\left[\sum_{l_1,l_2,l_3,l_4=1}^p|\Sigma_{l_1l_2}^{d-1}\Sigma_{l_3l_4}^{d-1}\Sigma_{l_1l_4}^{e-1}\Sigma_{l_2l_3}^{e-1}\Sigma_{l_1l_3}^f\Sigma_{l_2l_4}^f|^{q/(q-2)}\right]^{1-2/q}\\
\lesssim&o(\|\Sigma\|_q^{4})\cdot\|\Sigma\|_q^{2q-4}=o(\|\Sigma\|_q^{2q}),
\end{align*}
where we have used H\"{o}lder's inequality, along with A.1 and the fact that 
\begin{align*}
&\sum_{l_1,l_2,l_3,l_4=1}^p|\Sigma_{l_1l_2}^{d-1}\Sigma_{l_3l_4}^{d-1}\Sigma_{l_1l_4}^{e-1}\Sigma_{l_2l_3}^{e-1}\Sigma_{l_1l_3}^f\Sigma_{l_2l_4}^f|^{q/(q-2)}\\
\lesssim & \sum_{l_1,l_2,l_3,l_4=1}^p(\Sigma_{l_1l_2}^{q}\Sigma_{l_3l_4}^{q}+\Sigma_{l_1l_3}^{q}\Sigma_{l_2l_4}^{q}+\Sigma_{l_1l_4}^{q}\Sigma_{l_2l_3}^{q})=3\|\Sigma\|_q^{2q}.
\end{align*}
When $q=2$, it must be the case that $d=e=1$, the term becomes $\sum_{l_1,l_2,l_3,l_4=1}^p|\Sigma_{l_1l_2}\Sigma_{l_2l_3}\Sigma_{l_3l_4}\Sigma_{l_4l_1}|$, and directly applying A.1 can yield the desired order.

We can then conclude that $\E[V_{1,n}^{(2)}]\rightarrow0$ and hence $V_{1,n}^{(2)}\stackrel{p}{\rightarrow}0$. Combining what we have proved so far, we obtain $V_{1,n}\stackrel{p}{\rightarrow}c!(q-c)!(r_1-a_1)^c(b_1-r_1)^{q-c}$.

Similar argument shows that  $$V_{2,n}\stackrel{p}{\rightarrow}c!(q-c)!(r_2-a_2)^c(b_2-r_2)^{q-c},\quad V_{3,n}\stackrel{p}{\rightarrow}c!(q-c)!(r_1-a_2)^c(b_1-r_2)^{q-c}.$$
Therefore, we conclude that
\begin{align*}
V_{n}\stackrel{p}{\rightarrow}&\alpha_1^2c!(q-c)!(r_1-a_1)^{c}(b_1-r_1)^{q-c}+\alpha_2^2c!(q-c)!(r_2-a_2)^{c}(b_2-r_2)^{q-c}\\
&+2\alpha_1\alpha_2c!(q-c)!(r_1-a_2)^{c}(b_1-r_2)^{q-c},    
\end{align*}
which completes the proof.
\end{proof}
We can generalize the above lemma to the case when $c_i,q_i$ are not identical.  
\begin{lemma}
Fix $q_1,c_1,q_2,c_2$ for any $0 \leq a_{1}<r_{1}<b_{1} \leq 1,0 \leq a_{2}<r_{2}<b_{2},$ any $\alpha_{1}, \alpha_{2} \in \R$, we have
$$
\frac{\alpha_{1}}{a_{n}} S_{n, q_1, c_1}\left(r_{1} ;\left[a_{1}, b_{1}\right]\right)+\frac{\alpha_{2}}{a_{n}} S_{n, q_2, c_2}\left(r_{2},\left[a_{2}, b_{2}\right]\right) \stackrel{\mathcal{D}}{\longrightarrow}\alpha_{1} Q_{q_1, c_1}\left(r_{1} ;\left[a_{1}, b_{1}\right]\right)+\alpha_{2} Q_{q_2, c_2}\left(r_{2} ;\left[a_{2}, b_{2}\right]\right),
$$
where $Q_{q_1,r_1}$ and $Q_{q_2,r_2}$ are independent Gaussian processes if $q_1\not=q_2$, or $(c_1-c_2)(r_1-r_2) < 0$ or $r_1=r_2,c_1\not=c_2$. And when $q_1=q_2=q,(c_1-c_2)(r_1-r_2)>=0$, we have
$$ 
\operatorname{cov}\left(Q_{q, c_1}(r_1;[a_1, b_1]),Q_{q, c_2}(r_2;[a_2, b_2])\right)=\binom{C}{c}c!(q-C)!(r-A)^c(R-r)^{C-c}(b-R)^{q-C},
$$
\label{ficonv}
\end{lemma}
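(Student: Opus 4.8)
The plan is to reproduce the martingale central limit theorem argument used for Lemma \ref{ficonv1}, now with the two parameter pairs $(q_1,c_1)$ and $(q_2,c_2)$ allowed to differ, and to concentrate the new work in the cross term of the conditional variance. Exactly as in Lemma \ref{ficonv1}, I would peel off the largest ``right'' index of each statistic and write $a_{n,q_1}^{-1}S_{n,q_1,c_1}(r_1;[a_1,b_1])=\sum_i\xi_{1,i}$ and $a_{n,q_2}^{-1}S_{n,q_2,c_2}(r_2;[a_2,b_2])=\sum_i\xi_{2,i}$, where $\xi_{1,i}$ is the $(q_1-c_1)$-fold symmetric term in which the top index equals $i$ (so $\xi_{1,i}$ carries one factor $Z_{i,l}$, together with $c_1$ factors indexed in $[\lf na_1\rf+1,\lf nr_1\rf]$ and $q_1-c_1-1$ factors indexed in $[\lf nr_1\rf+1,i-1]$), and analogously for $\xi_{2,i}$. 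Defining $\widetilde\xi_{n,i}$ to equal $\alpha_1\xi_{1,i}$, $\alpha_1\xi_{1,i}+\alpha_2\xi_{2,i}$, or $\alpha_2\xi_{2,i}$ according to whether the index $i$ lies in the active range of the first increment only, of both, or of the second only (the ranges being determined by the ordering of $\lf nr_1\rf,\lf nr_2\rf,\lf nb_1\rf,\lf nb_2\rf$), one checks that $\{\widetilde\xi_{n,i}\}$ is a martingale difference array with respect to $\mathcal F_i=\sigma(Z_i,Z_{i-1},\dots)$ and that $\sum_i\widetilde\xi_{n,i}$ is exactly the left-hand side.

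The two hypotheses of the martingale CLT (Theorem 35.12 in Billingsley) are then verified as before. The Lindeberg condition reduces to $\sum_i\E[\widetilde\xi_{n,i}^4]\to0$, which follows from $\E[\xi_{1,i}^4]=O(n^{-2})$ and $\E[\xi_{2,i}^4]=O(n^{-2})$; these bounds are the ones established in Lemma \ref{ficonv1} through Lemma \ref{sumcum}--(2). The conditional variance splits as $V_n=\alpha_1^2V_{1,n}+\alpha_2^2V_{2,n}+2\alpha_1\alpha_2V_{3,n}$, and the diagonal pieces $V_{1,n},V_{2,n}$ converge in probability to $\var[Q_{q_1,c_1}(r_1;[a_1,b_1])]$ and $\var[Q_{q_2,c_2}(r_2;[a_2,b_2])]$ by applying Lemma \ref{ficonv1} to each statistic on its own.

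The crux is the cross term $V_{3,n}=\sum_i\E[\xi_{1,i}\xi_{2,i}\mid\mathcal F_{i-1}]$, the sum running over the overlap $i/n\in[\max(r_1,r_2),\min(b_1,b_2)]$ where both increments are active. Conditioning on $\mathcal F_{i-1}$ contracts the two top factors $Z_{i,l_1}Z_{i,l_2}$ into $\Sigma_{l_1l_2}$ and leaves an $\mathcal F_{i-1}$-measurable product of $q_1-1$ factors from $\xi_{1,i}$ and $q_2-1$ from $\xi_{2,i}$. Since the times within each increment are pairwise distinct and the $Z_t$ are independent and mean zero across $t$, a nonzero expectation forces every occupied time to carry exactly one factor from $\xi_{1,i}$ and one from $\xi_{2,i}$; in particular this can happen only when $q_1=q_2=:q$. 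Granting $q_1=q_2$, I would classify the forced time-matching by location: process-$1$ left factors (times in $[a_1,r_1]$) must meet process-$2$ left factors, process-$1$ right factors lying in $[r_1,r_2]$ must meet the leftover process-$2$ left factors, and the remaining right factors meet right-with-right. A short count shows this matching is realizable precisely when $(c_1-c_2)(r_1-r_2)\ge0$ (and, when $r_1=r_2$, only when $c_1=c_2$, the intermediate block being empty); the numbers of factors assigned to the three blocks are then $c=\min(c_1,c_2)$, $C-c$, and $q-C$, and the free time indices in $[\,A,r\,]$, $[\,r,R\,]$, $[\,R,b\,]$ reproduce the powers $(r-A)^{c}$, $(R-r)^{C-c}$, $(b-R)^{q-C}$, while the attendant $\Sigma$-sums collapse to $\|\Sigma\|_q^{q}$ and are absorbed by $a_{n,q}^{-2}$. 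Tracking the combinatorial multiplicity then yields $\E[V_{3,n}]\to\binom{C}{c}c!(q-C)!(r-A)^c(R-r)^{C-c}(b-R)^{q-C}$, in agreement with Theorem \ref{Q}, and $\var(V_{3,n})\to0$ follows from the same H\"older/condition-A.1 bound (of Assumption \ref{8cum}) that controlled $\E[(V_{1,n}^{(2)})^2]$ in Lemma \ref{ficonv1}, so that $V_{3,n}$ converges in probability to this same constant. In every excluded case --- $q_1\ne q_2$, or $(c_1-c_2)(r_1-r_2)<0$ (which would demand a negative number $c_2-c_1$ of intermediate matches), or $r_1=r_2$ with $c_1\ne c_2$ --- no complete time-matching survives the normalization, so $V_{3,n}\stackrel{p}{\rightarrow}0$; since the limits are jointly Gaussian, vanishing correlation gives the asserted independence.

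The step I expect to be the main obstacle is this cross-term bookkeeping: defining $\widetilde\xi_{n,i}$ consistently across all orderings of the four endpoints so that the martingale-difference property is preserved, and then executing the time-matching count carefully enough both to read off the multiplicity $\binom{C}{c}c!(q-C)!$ and to pin down the sharp vanishing criterion $(c_1-c_2)(r_1-r_2)\ge0$. The Lindeberg verification and the diagonal terms are direct transcriptions of Lemma \ref{ficonv1} and introduce nothing new.
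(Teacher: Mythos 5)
Your proposal takes essentially the same route as the paper's own proof: the identical martingale-difference decomposition and CLT application inherited from Lemma \ref{ficonv1}, with the only genuinely new work being the mean of the cross term $V_{3,n}$ of the conditional variance (which you compute by the time-matching count, precisely how the paper obtains its limit for $E[V_n]$ in the example case $c_1<c_2$, $a_1<a_2<r_1<r_2<b_1<b_2$), the variance of $V_n$ shown to vanish by the same H\"older/A.1 bound, and independence in the excluded cases deduced from a vanishing cross term plus joint asymptotic normality. Your write-up is, if anything, more explicit than the paper's (which simply asserts the cross-term limit and that "the argument there can be directly applied"), and your matching argument for why $q_1\neq q_2$ or $(c_1-c_2)(r_1-r_2)<0$ forces the cross term to vanish is the correct formalization of what the paper calls "straightforward to see."
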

\begin{proof}[Proof of Lemma \ref{ficonv}]
We use the same notations in proving last lemma, as the proof is similar to the previous one and involves applying martingale CLT, where we have decomposed $V_n$ into 2 parts. Since the argument there can be directly applied, the only additional work is about calculating the mean.

To prove the second statement, we take $c_1<c_2,a_1<a_2<r_1<r_2<b_1<b_2$, as the example case, since the proof for other cases are similar. With the same technique we have used, it can be shown that  
\begin{align*}
E[V_n]\rightarrow&\alpha_1^2c_1!(q-c_1)!(r_1-a_1)^{c_1}(b_1-r_1)^{q-c_1}+\alpha_2^2c_2!(q-c_2)!(r_2-a_2)^{c_2}(b_2-r_2)^{q-c_2}\\
&+2\alpha_1\alpha_2\binom{c_2}{c_1}c_1!(q-c_1)!(r_1-a_2)^{c_1}(r_2-r_1)^{c_2-c_1}(b_1-r_2)^{q-c_2}.    
\end{align*}
To derive the convergence in the statement, we can follow the same argument as before to show the variance goes to 0, and therefore, we have the convergence in distribution, with desired covariance structure. 

As for the first statement, it is straightforward to see that the expectation for the crossing term (corresponding to $\alpha_1\alpha_2$) is 0 for each of the cases in the first statement, which implies that the Gaussian processes have to be independent due to asymptotic normality.
\end{proof}

Now we are ready to complete the proof of Theorem \ref{Q}.
\begin{proof}[Proof of Theorem \ref{Q}]
The tightness is guaranteed by Lemma \ref{mom8} and applying Lemma 7.1 in \cite{kley} with $\Phi(x)=x^4,T=T_n,d(u,u')=\|u-u'\|^{3/4},\bar{\eta}=n^{-3/4}/2$. We omit the detailed proof as the argument is similar to the tightness proof in \cite{hdcp}. Lemma \ref{ficonv} has provided finite dimensional convergence of $S_{n,q,c}$, which has asymptotic covariance structure as $Q_{q,c}$ after normalization. Therefore, we have derived desired process convergence. 
\end{proof}

\begin{proof}[Proof Theorem \ref{power}]
Let $(s,k,m)=(\lf an\rf+1,\lf rn\rf,\lf bn\rf)$ and define
$$
D_{n,q}^Z(r;a,b)=\sum_{l=1}^{p} \sum^*_{s \leq i_{1},\ldots,i_{q} \leq k}\sum^*_{k+1 \leq j_{1}, \ldots ,j_{q} \leq m}\left(Z_{i_{1}, l}-Z_{j_{1}, l}\right) \cdots\left(Z_{i_{q}, l}-Z_{j_{q}, l}\right).$$
Recall that Theorem \ref{convT} holds for $D^Z_{n,q}$ since under the null $D_{n,q}^Z=D_{n,q}$. 

Now we are under the alternative, with the location point $k_1=\lf n\tau_1\rf$ and the change of mean equal to $\Delta_n$. Suppose WLOG $s<k_1<k<m$.
\begin{align*}
D_{n,q}(r;a,b)=&\sum_{l=1}^{p} \sum^*_{s \leq i_{1},\ldots, i_{q} \leq k}\sum^*_{k+1 \leq j_{1}, \ldots,j_{q} \leq m}\left(X_{i_{1}, l}-X_{j_{1}, l}\right) \cdots\left(X_{i_{q}, l}-X_{j_{q}, l}\right)\\
=&q!\sum_{l=1}^{p} \sum_{s \leq i_{1} < \ldots< i_{q} \leq k}\sum^*_{k+1 \leq j_{1},\ldots,j_{q} \leq m}\left(X_{i_{1}, l}-X_{j_{1}, l}\right) \cdots\left(X_{i_{q}, l}-X_{j_{q}, l}\right)\\
=&q!\sum_{l=1}^{p} \sum_{s \leq i_{1} < \ldots< i_{q} \leq k_1}\sum^*_{k+1 \leq j_{1},\ldots, j_{q} \leq m}\left(Z_{i_{1}, l}+\delta_{n,l}-Z_{j_{1}, l}\right) \cdots\left(Z_{i_{q}, l}+\delta_{n,l}-Z_{j_{q}, l}\right)\\
&+q!\sum_{l=1}^{p}\sum_{c=1}^{q-1}\Big[\sum_{s \leq i_{1} < \ldots< i_{c} \leq k_1<i_{c+1}<\ldots<i_q\le k}\sum^*_{k+1 \leq j_{1},\ldots, j_{q} \leq m}\\
&\quad(Z_{i_{1}, l}+\delta_{n,l}-Z_{j_{1}, l}) \cdots(Z_{i_{c}, l}+\delta_{n,l}-Z_{j_{c}, l})(Z_{i_{c+1}, l}-Z_{j_{c+1}, l})\cdots(Z_{i_{q}, l}-Z_{j_{q}, l})\Big]\\
&+q!\sum_{l=1}^{p} \sum_{k_1+1 \leq i_{1} < \ldots< i_{q} \leq k}\sum^*_{k+1 \leq j_{1},\ldots, j_{q} \leq m}\left(Z_{i_{1}, l}-Z_{j_{1}, l}\right) \cdots\left(Z_{i_{q}, l}-Z_{j_{q}, l}\right)\\
=&D_{n,q}^Z+P^{k_1-s+1}_qP^{m-k}_q\|\Delta_n\|_q^q+R_{n,q}.\quad\quad(*)
\end{align*}

First suppose $\gamma_{n,q}\rightarrow\gamma\in[0,\infty)$, which is equivalent to $n^{q/2}\left\|\Delta_{n}\right\|_{q}^q\lesssim\|\Sigma\|_{q}^{q / 2}$. It suffices to show that in this case,
$$
\Big\{n^{-q}a_{n,q}^{-1} D_{n,q}(\cdot;[\cdot,\cdot])\Big\} \leadsto\Big\{ G_{q}(\cdot;[\cdot,\cdot])+\gamma J_q(\cdot;[\cdot,\cdot])\Big\}\text{ in }\ell_{\infty}\left([0,1]^{3}\right).
$$

Since $n^{-q}a_{n,q}^{-1} D^Z_{n,q}(r ;[a, b])$ converges to some non-degenerate process,
and
$$n^{-q}a_{n,q}^{-1}P^{k^*-s+1}_qP^{m-k}_q\|\Delta_n\|_q^q= \gamma(r^*-a)^q(b-r)^q+o(1),$$ 
it remains to show that $n^{-q}a_{n,q}^{-1}R_{n,q}\leadsto 0$.

Note that $R_{n,q}$ consists of terms that are each ratio consistent to 
$$Cn^{2(q-c)}\sum_{l=1}^p\delta_{n,l}^{q-c}D_{n,c,l}(r;a,b),$$
for some constant $C$ depending on $q,a,b,r$ and $c=1,...,q-1$, where
$$
D_{n,c,l}(r;a,b)=\sum_{s \leq i_{1} < \ldots< i_{c} \leq k}\sum^*_{k+1 \leq j_{1}, \ldots, j_{c} \leq m}\left(Z_{i_{1}, l}-Z_{j_{1}, l}\right) \cdots\left(Z_{i_{c}, l}-Z_{j_{c}, l}\right),
$$
which can be further decomposed as 
\begin{align*}
D_{n,c,l}(r;a,b)\asymp&\sum_{d=0}^cC_dn^c\sum^*_{s \leq i_{1} \ldots,i_{d} \leq k}\sum^*_{k+1 \leq j_{1}, \ldots,j_{c-d} \leq m}\left(\prod_{t=1}^{d} Z_{i_{t}, l} \prod_{s=1}^{c-d} Z_{j_{s}, l}\right),
\end{align*}
for some constants depending on $d,c,q$. Therefore, it suffices to show
\begin{align*}
&n^{q-c}a_{n,q}^{-1}\sum_{l=1}^p\delta_{n,l}^{q-c}\sum^*_{s \leq i_{1},\ldots,i_{d} \leq k}\sum^*_{k+1 \leq j_{1}, \ldots, j_{c-d} \leq m}\left(\prod_{t=1}^{d} Z_{i_{t}, l} \prod_{s=1}^{c-d} Z_{j_{s}, l}\right)\\
=&n^{q/2-c}\|\Sigma\|_q^{-q/2}\sum_{l=1}^p\delta_{n,l}^{q-c}\sum^*_{s \leq i_{1}, \ldots, i_{d} \leq k}\sum^*_{k+1 \leq j_{1},\ldots, j_{c-d} \leq m}\left(\prod_{t=1}^{d} Z_{i_{t}, l} \prod_{s=1}^{c-d} Z_{j_{s}, l}\right)
\leadsto 0. 
\end{align*}

Similar argument for showing tightness and finite dimensional convergence in proving Theorem \ref{Q} can be applied. More precisely, we can get a similar moment bound as in Lemma \ref{mom8} and follow the argument there to show the tightness, since we have
\begin{align*}
&n^{4q-8c}\|\Sigma\|_q^{-4q}n^{4c}\sum_{l_{1}, \cdots, l_{8}=1}^{p} \mathbb{E}\left[\delta_{n,l_1}^{q-c}\cdots\delta_{n,l_8}^{q-c}Z_{i_{1}^{(1)}, l_{1}}\cdots Z_{i_{c}^{(1)}, l_{1}}\cdots Z_{i_{1}^{(8)}, l_{h}}\cdots Z_{i_{c}^{(8)}, l_{8}}\right]    \\
=&n^{4(q-c)}\|\Sigma\|_q^{-4q}\sum_{l_{1}, \cdots, l_{8}=1}^{p} \mathbb{E}\left[\delta_{n,l_1}^{q-c}\cdots\delta_{n,l_8}^{q-c}Z_{i_{1}^{(1)}, l_{1}}\cdots Z_{i_{c}^{(1)}, l_{1}}\cdots Z_{i_{1}^{(8)}, l_{8}}\cdots Z_{i_{c}^{(8)}, l_{8}}\right]\\
\lesssim&\|\Delta_n\|_q^{-8(q-c)}\|\Sigma\|_q^{-4c}\sum_{l_{1}, \cdots, l_{8}=1}^{p} \mathbb{E}\left[\delta_{n,l_1}^{q-c}\cdots\delta_{n,l_8}^{q-c}Z_{i_{1}^{(1)}, l_{1}}\cdots Z_{i_{c}^{(1)}, l_{1}}\cdots Z_{i_{1}^{(8)}, l_{8}}\cdots Z_{i_{c}^{(8)}, l_{8}}\right]\lesssim 1,
\end{align*}
by Lemma \ref{sumcum}-(1).

Furthermore, following the proof of Lemma \ref{ficonv1}, Lemma \ref{sumcum}-(3) implies finite dimensional convergence to 0, as
\begin{align*}
&n^{q-2c}\|\Sigma\|_q^{-q}n^c\sum_{l_1,l_2=1}^p\delta_{n,l_1}^{q-c}\delta_{n,l_2}^{q-c}\Sigma_{l_1l_2}^c\\
=&n^{q-c}\|\Sigma\|_q^{-q}\sum_{l_1,l_2=1}^p\delta_{n,l_1}^{q-c}\delta_{n,l_2}^{q-c}\Sigma_{l_1l_2}^c\\
\lesssim&\|\Delta_n\|_q^{-2(q-c)}\|\Sigma\|_q^{-c}\sum_{l_1,l_2=1}^p\delta_{n,l_1}^{q-c}\delta_{n,l_2}^{q-c}\Sigma_{l_1l_2}^c\rightarrow0.
\end{align*}
We have the desired process convergence for $\gamma_{n,q}\rightarrow\gamma<\infty$., which along with the continuous mapping theorem further implies the convergence of the statistic.

When $\gamma=+\infty$, note that $\tilde{T}_{n,q}\geq\frac{U_{n,q}(k_1; 1, n)^{2}}{W_{n,q}(k_1 ; 1, n)}$. Since $k_1$ is the location of the change point, the denominator has the same value as the null. On the contrary, it is immediate to see that the numerator diverges to infinity after normalizing (with $n^{-q}a_{n,q}^{-1}$). Therefore, we have $\tilde{T}_{n,q}\rightarrow+\infty$.
\end{proof}

Before we prove the convergence rate for SN-based estimator, we state the following useful propositions.
\begin{proposition}\label{prop:simple}
For any $1 \leq l < k < m \leq n$, $k \geq l+1$ and $m \geq k+2$, we have:
\begin{enumerate}
\item if $k^* < l$ or $k^* \geq m$, $\U(k;l,m) = \U^Z(k;l,m)$;
\item if $l \leq k \leq k^* < m$, 
\begin{align*}
    \U(k;l,m) =& \U^Z(k;l,m) + (k-l+1)(k-l)(m-k^*)(m - k^*-1)\|\Delta_n\|_2^2 \\&- 2(k-l+1)(m-k^*)(m-k)\sum_{i = l}^k\Delta_n^TZ_i 
    + 2(k-l)(k-l+1)(m-k^*)\sum_{i = k+1}^m\Delta_n^TZ_i\\
    &+ 2(k-l)(m-k^*)\sum_{i = l}^k\Delta_n^TZ_i
    - 2(k-l+1)(k-l+1)\sum_{i = k^*+1}^m\Delta_n^TZ_i;
\end{align*}
\item if $l \leq k^* \leq k < m$,
\begin{align*}
    \U(k;l,m) =& \U^Z(k;l,m) + (k^*-l+1)(k^*-l)(m-k)(m - k-1)\|\Delta_n\|_2^2\\
    &- 2(k^*-l+1)(m-k)(m-k-1)\sum_{i = l}^k\Delta_n^TZ_i + 2(m-k-1)(k^*-l+1)(k-l+1)\sum_{i = k+1}^m\Delta_n^TZ_i \\
    &+ 2(m-k-1)(m-k)\sum_{i = l}^{k^*}\Delta_n^TZ_i- 2(m-k-1)(k^*-l+1)\sum_{i = k+1}^m\Delta_n^TZ_i.
\end{align*}
\end{enumerate}
\end{proposition}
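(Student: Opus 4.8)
The plan is to substitute the single change-point model $X_t=\mu_1+\Delta_n\mathbf 1(t>k^*)+Z_t$ coordinatewise into the defining sum of $\U=U_{n,2}$ and expand. Since the baseline mean $\mu_1$ is common to every observation, it cancels in each difference, leaving
\[
X_{i,\cdot}-X_{j,\cdot}=(Z_{i,\cdot}-Z_{j,\cdot})+\delta_{n,\cdot}\big(\mathbf 1(i>k^*)-\mathbf 1(j>k^*)\big).
\]
Inserting this into the product $(X_{i_1,\cdot}-X_{j_1,\cdot})(X_{i_2,\cdot}-X_{j_2,\cdot})$ and multiplying out produces four groups of terms: a pure-$Z$ term, which after summing over components and over the distinct index sets is exactly $\U^Z(k;l,m)$; one term quadratic in the $\delta_{n,\cdot}$'s; and two cross terms, each linear in the $Z$'s and linear in $\delta_{n,\cdot}$. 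The problem then reduces to evaluating the combinatorial prefactors of the quadratic and cross terms in each configuration of $k$ relative to $k^*$.

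Case (1) is immediate: if $k^*<l$ or $k^*\ge m$ then every index in the window lies on one side of $k^*$, so $\mathbf 1(i>k^*)-\mathbf 1(j>k^*)=0$ throughout, the quadratic and cross terms drop out, and $\U=\U^Z$. For the other two cases I would first record how each index block meets $k^*$: in Case (2), $l\le k\le k^*<m$, every $i\in[l,k]$ has $\mathbf 1(i>k^*)=0$ while the block $[k+1,m]$ splits into $m-k^*$ post-change and $k^*-k$ pre-change indices; Case (3), $l\le k^*\le k<m$, is the mirror image, with all $j\in[k+1,m]$ post-change and the $i$-block straddling $k^*$. The quadratic term is then routine: in Case (2) it equals $\|\Delta_n\|_2^2$ times $\sum^*_{i_1,i_2}1=(k-l+1)(k-l)$ times $\sum^*_{j_1,j_2}\mathbf 1(j_1>k^*)\mathbf 1(j_2>k^*)=(m-k^*)(m-k^*-1)$, reproducing the leading correction $(k-l+1)(k-l)(m-k^*)(m-k^*-1)\|\Delta_n\|_2^2$, and Case (3) is the analogous count of ordered distinct pairs among the $k^*-l+1$ pre-change $i$'s.

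The cross terms carry the real bookkeeping, and this is the step I expect to be the main obstacle. One cross term pairs the surviving $Z_{i_2,\cdot}$ with an indicator sitting on a different index block, so it factorizes cleanly into a count times $\sum_{i=l}^k\Delta_n^TZ_i$ (or $\sum_{i=k+1}^m\Delta_n^TZ_i$). The other places the surviving $Z$ and the indicator on two indices drawn from the \emph{same} block under the distinctness constraint $\sum^*$, forcing an inclusion--exclusion that subtracts the diagonal; for instance
\[
\sum^*_{j_1,j_2}\mathbf 1(j_1>k^*)\,Z_{j_2,\cdot}=(m-k^*)\sum_{j=k+1}^mZ_{j,\cdot}-\sum_{j=k^*+1}^mZ_{j,\cdot}.
\]
It is precisely these diagonal corrections that split the single linear-in-$Z$ contribution into several distinct pieces, with sums running over $[l,k]$, $[k+1,m]$ and $[k^*+1,m]$ and prefactors assembled from $(k-l+1)$, $(k-l)$, $(m-k)$ and $(m-k^*)$. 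Carrying out this collection of coefficients, together with the symmetric computation for Case (3), yields the stated identities; the remaining effort is purely the arithmetic of gathering prefactors.
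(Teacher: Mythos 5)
Your proposal is correct and takes essentially the same route as the paper's own proof: substitute the single change-point model (the common baseline mean cancels), expand the order-$(2,2)$ U-statistic $\U(k;l,m)$ into the pure-$Z$ term $\U^Z(k;l,m)$, a term quadratic in $\Delta_n$, and two cross terms, and then evaluate the combinatorial prefactors under the distinct-index constraints case by case. The only cosmetic difference is bookkeeping: the paper resolves the same-block cross term by splitting the $j$-sum at $k^*$ into $[k+1,k^*]$ and $[k^*+1,m]$, whereas you use inclusion--exclusion over the diagonal $j_1=j_2$; the two are equivalent and produce the same coefficients.
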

Let $\epsilon_n = n\gamma_{n,2}^{-1/4 + \kappa}$. We have the following result.
\begin{proposition}\label{prop:UW}
Under Assumption \ref{ass}, 
\begin{enumerate}
    \item $P\left(\sup_{k \in \Omega_n}\U(k;1,n)^2 - \U(k^*;1,n)^2 \geq 0\right)\rightarrow 0$;\\
    \item $P\left(W_{n,2}(k^*;1,n) - \inf_{k \in \Omega_n}W_{n,2}(k;1,n) \geq 0\right) \rightarrow 0$,
\end{enumerate}
where $\Omega_n = \{k: |k - k^*| > \epsilon_n\}$.
\end{proposition}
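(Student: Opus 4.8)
The plan is to use the two statements together to localize $\hat k$: since $\hat\tau=\hat k/n$ maximizes $\U(k;1,n)^2/W_{n,2}(k;1,n)$, combining the two displays forces $\U(k^*;1,n)^2/W_{n,2}(k^*;1,n)$ to strictly exceed the ratio at every $k\in\Omega_n$ with probability tending to one, so that $\hat k\notin\Omega_n$, which is exactly Theorem \ref{thm:consistency}. Both parts rest on the exact algebraic decomposition in Proposition \ref{prop:simple}, which splits $\U(k;1,n)$ (and, for the denominator, each inner statistic $\U(t;s,m)$) into a pure-noise U-statistic $\U^Z$, a deterministic signal term quadratic in $\Delta_n$, and linear-in-$Z$ cross terms.

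For the first statement I would record the deterministic signal $s(k):=\E[\U(k;1,n)]$. Proposition \ref{prop:simple} gives $s(k)=k(k-1)(n-k^*)(n-k^*-1)\|\Delta_n\|_2^2$ for $k\le k^*$ and the symmetric expression for $k\ge k^*$; this is maximized at $k^*$, with $s(k^*)\asymp n^4\|\Delta_n\|_2^2$, and since $\tau^*\in(0,1)$ is bounded away from the endpoints it satisfies the quantitative gap $s(k^*)-s(k)\gtrsim |k-k^*|\,n^3\|\Delta_n\|_2^2$. I would then factor $\U(k^*;1,n)^2-\U(k;1,n)^2=(\U(k^*;1,n)-\U(k;1,n))(\U(k^*;1,n)+\U(k;1,n))$ and argue the second factor is positive w.h.p., because Assumption \ref{ass}(3) makes $\gamma_{n,2}\to\infty$ so the signal dominates all noise. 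It then suffices to show $\inf_{k\in\Omega_n}(\U(k^*;1,n)-\U(k;1,n))>0$ w.h.p. I would bound the two stochastic increments, the linear cross terms and the quadratic $\U^Z(k^*;1,n)-\U^Z(k;1,n)$, uniformly in $k$ via moment inequalities: Lemma \ref{sumcum} controls the variance of the linear terms through $\Delta_n^T\Sigma\Delta_n=o(\|\Delta_n\|_2^2\|\Sigma\|_F)$, while Lemma \ref{mom8}-type eighth-moment bounds control the quadratic increment, combined with a union bound over the $O(n)$ values of $k$. The key observation is that these increments grow like $\sqrt{|k-k^*|}$ whereas the signal gap grows like $|k-k^*|$, so the worst case is the boundary $|k-k^*|=\epsilon_n$; a direct comparison there shows $\epsilon_n=n\gamma_{n,2}^{-1/4+\kappa}$ makes the signal gap strictly dominate, the slack $\kappa>0$ absorbing the loss from the union bound.

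For the second statement I would exploit that at $k=k^*$ both sub-samples $(X_1,\dots,X_{k^*})$ and $(X_{k^*+1},\dots,X_n)$ are homogeneous, so every inner statistic entering $W_{n,2}(k^*;1,n)$ is pure noise and $W_{n,2}(k^*;1,n)=O_p(n^6\|\Sigma\|_F^2)$ by the same moment bounds. For $k\in\Omega_n$, say $k<k^*$, the sub-sample $(X_{k+1},\dots,X_n)$ straddles the change point, so for split points $t$ between $k$ and $k^*$ the inner statistic $\U(t;k+1,n)$ carries a deterministic signal $\asymp (t-k)^2(n-k^*)^2\|\Delta_n\|_2^2$. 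Summing its square over the $\asymp|k-k^*|$ split points near $k^*$ and dividing by $n$ inflates $W_{n,2}(k;1,n)$ by an amount of order $|k-k^*|^5 n^3\|\Delta_n\|_2^4$. I would then show, uniformly over $\Omega_n$, that this signal inflation dominates both the null-order difference between the homogeneous parts of $W_{n,2}(k;1,n)$ and $W_{n,2}(k^*;1,n)$ and the signal$\times$noise cross terms, again via moment/maximal inequalities; the threshold $\epsilon_n$ guarantees $|k-k^*|^5 n^3\|\Delta_n\|_2^4\gg n^6\|\Sigma\|_F^2$, yielding $\inf_{k\in\Omega_n}W_{n,2}(k;1,n)>W_{n,2}(k^*;1,n)$ w.h.p.

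I expect the main obstacle to be the uniform (maximal-inequality) control of the stochastic terms over the $O(n)$ candidate locations $k$, and in the second statement over the additional $O(n)$ inner split points $t$, since Lemmas \ref{sumcum} and \ref{mom8} supply only pointwise moment bounds. Turning these into suprema requires chaining or crude union bounds, and it is precisely this step that degrades the rate relative to what the pointwise estimates suggest, pinning the attainable order at $\epsilon_n=n\gamma_{n,2}^{-1/4+\kappa}$ and forcing the arbitrarily small slack $\kappa>0$.
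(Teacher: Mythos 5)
Your overall architecture does match the paper's: for part 1, the signal gap $s(k^*)-s(k)\gtrsim |k-k^*|\,n^3\|\Delta_n\|_2^2$ from Proposition \ref{prop:simple} together with the factorization of $\sup_k \U(k;1,n)^2-\U(k^*;1,n)^2$ into a factor that is negative w.h.p.\ times a factor that is positive w.h.p.; for part 2, the inflation of $W_{n,2}(k;1,n)$ by deterministic signal at split points straddling $k^*$, against $W_{n,2}(k^*;1,n)=O_p(n^6\|\Sigma\|_F^2)$. However, there are two genuine gaps. First, your fallback mechanism for the suprema --- ``crude union bounds over the $O(n)$ values of $k$, with the slack $\kappa$ absorbing the loss'' --- does not work. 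A union bound with eighth-moment tail bounds loses a polynomial factor in $n$ (e.g.\ $n^{1/8}$ for $\sup_k|\U^Z(k;1,n)|$), and $\kappa$ cannot absorb it: $\kappa$ is an exponent on $\gamma_{n,2}$, and Assumption \ref{ass}(3) only forces $\gamma_{n,2}\to\infty$, possibly arbitrarily slowly relative to any power of $n$. The paper avoids this entirely by invoking the process-level convergence (Theorem 2.1 of the earlier U-statistic paper) to get the lossless bound $\sup_k|\U^Z(k;1,n)|=O_p(n^3\|\Sigma\|_F)$, and the standard partial-sum maximal inequality to get $\sup_{a\le b}\bigl|\sum_{i=a}^b\Delta_n^TZ_i\bigr|=O_p(\sqrt{n}\sqrt{\Delta_n^T\Sigma\Delta_n})=O_p(\sqrt{n\|\Sigma\|_F}\,\|\Delta_n\|_2)$; both noise terms are then uniformly $O_p(n^4\|\Delta_n\|_2^2\gamma_{n,2}^{-1/2})$ with no degradation.

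Second, you misidentify where the exponent $-1/4$ in $\epsilon_n=n\gamma_{n,2}^{-1/4+\kappa}$ comes from; it is not a penalty from maximal inequalities, and it does not come from part 1 at all. Part 1 only needs the signal gap $\epsilon_n n^3\|\Delta_n\|_2^2$ to beat the uniform noise $n^4\|\Delta_n\|_2^2\gamma_{n,2}^{-1/2}$, i.e.\ $\epsilon_n\gg n\gamma_{n,2}^{-1/2}$, which would tolerate a faster rate. The binding constraint sits in part 2, in exactly the step you defer (``show the signal inflation dominates the signal$\times$noise cross terms''). Expanding $W_{n,2}(k;1,n)$ via Proposition \ref{prop:simple} for $k<k^*-\epsilon_n$, the deterministic signal contributes $(k^*-k)^5n^3\|\Delta_n\|_2^4$ from split points $t\in(k,k^*]$ (the only term you account for) plus the larger term $(k^*-k)^4n^4\|\Delta_n\|_2^4$ from $t>k^*$ (which you drop), while the cross terms are of order $(k^*-k)^3n^5\|\Delta_n\|_2^4\gamma_{n,2}^{-1/2}$ and $(k^*-k)^2n^6\|\Delta_n\|_2^4\gamma_{n,2}^{-1/2}$ respectively. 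Either comparison of signal against its cross term forces $(k^*-k)^2\gg n^2\gamma_{n,2}^{-1/2}$, i.e.\ $|k-k^*|\gg n\gamma_{n,2}^{-1/4}$ --- this, and nothing else, dictates the threshold $\epsilon_n$ and hence the rate in Theorem \ref{thm:consistency}. So the comparison you postpone is not a routine uniformity issue; it is the rate-determining step, and without it your argument establishes neither part 2 nor the stated order of $\epsilon_n$.
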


Now we are ready to prove the convergence rate for SN-based statistic $\hat \tau$.
\begin{proof}[Proof of Theorem \ref{thm:consistency}]
Due to the fact that $\hat{k}$ is the global maximizer, we have 
\begin{align*}
    0 &\leq \frac{\U(\hat{k};1,n)^2}{W_{n,2}(\hat{k};1,n)} - \frac{\U(k^*;1,n)^2}{W_{n,2}(k^*;1,n)}\\
    &= \frac{\U(\hat{k};1,n)^2}{W_{n,2}(\hat{k};1,n)} - \frac{\U(k^*;1,n)^2}{W_{n,2}(\hat{k};1,n)} + \frac{\U(k^*;1,n)^2}{W_{n,2}(\hat{k};1,n)} - \frac{\U(k^*;1,n)^2}{W_{n,2}(k^*;1,n)}\\
    &= \frac{1}{W_{n,2}(\hat{k};1,n)}(\U(\hat{k};1,n)^2 - \U(k^*;1,n)^2) + \frac{\U(k^*;1,n)^2}{W_{n,2}(\hat{k};1,n)W_{n,2}(k^*;1,n)}(W_{n,2}(k^*;1,n) - W_{n,2}(\hat{k};1,n)).\\
\end{align*}

Since $\U(k^*;1,n)^2$, $W_{n,2}(\hat{k};1,n)$ and $W_{n,2}(k^*;1,n)$ are all strictly positive almost surely, we can then conclude that $\U(\hat{k};1,n)^2 - \U(k^*;1,n)^2 \geq 0$ or $W_{n,2}(k^*;1,n) - W_{n,2}(\hat{k};1,n) \geq 0$. Define $\Omega_n = \{k: |k - k^*| > \epsilon_n\}$. If $\hat{k} \in \Omega_n$, then there exists at least one $k \in \Omega_n$ such that  $\U(k;1,n)^2 - \U(k^*;1,n)^2 \geq 0$ or $W_{n,2}(k^*;1,n) - W_{n,2}(k;1,n) \geq 0$. This implies
$$P(\hat{k} \in \Omega_n) \leq P\left(\sup_{k \in \Omega_n}\U(k;1,n)^2 - \U(k^*;1,n)^2 \geq 0\right) + P\left(W_{n,2}(k^*;1,n) - \inf_{k \in \Omega_n}W_{n,2}(k;1,n) \geq 0\right).$$

By Proposition \ref{prop:UW}, it is straightforward to see that $P(\hat{k} \in \Omega_n) \rightarrow 0$, and this completes the proof.
\end{proof}


\begin{proof}[Proof of Proposition \ref{prop:simple}]
If $k^* < l$ or $k^* \geq m$, then $\E[X_i]$ are all identical, for $i = l,...,m$. This implies that $\U(k;l,m) = \sum_{l \leq i_1 \neq i_2 \leq k}\sum_{k+1 \leq j_1 \neq j_2 \leq m}(X_{i_1} - X_{j_1})^T(X_{i_1} - X_{j_2}) = \sum_{l \leq i_1 \neq i_2 \leq k}\sum_{k+1 \leq j_1 \neq j_2 \leq m}(Z_{i_1} - Z_{j_1})^T(Z_{i_1} - Z_{j_2}) = \U^Z(k;l,m)$.

When $l \leq k^* < m$, there are two scenarios depending on the value of $k$. If $k \leq k^*$, note that $\E[X_i] = \Delta_n$ for any $i > k^*$ and zero otherwise, then by straightforward calculation we have
\begin{align*}
    &\U(k;l,m) = \sum_{l \leq i_1 \neq i_2 \leq k}\sum_{k+1 \leq j_1 \neq j_2 \leq m}(X_{i_1} - X_{j_1})^T(X_{i_1} - X_{j_2})\\
    =&\sum_{l \leq i_1 \neq i_2 \leq k}\sum_{k+1 \leq j_1 \neq j_2 \leq m}(Z_{i_1} - Z_{j_1} - \E[X_{j_1}])^T(Z_{i_1} - Z_{j_2} - \E[X_{j_2}])\\
    =&\U(k;l,m) + (k-l+1)(k-l)(m-k^*)(m-k^*-1)\|\Delta_n\|_2^2 - 2(k-l)(m-k^*)\sum_{i = l}^k\sum_{j = k+1}^{k^*}\Delta_n^T(Z_{i} - Z_j)\\
    &-2(k-l)(m-k^*-1)\sum_{i = l}^k\sum_{j = k^*+1}^m\Delta_n^T(Z_i - Z_j)\\
    =&\U^Z(k;l,m) + (k-l+1)(k-l)(m-k^*)(m - k^*-1)\|\Delta_n\|_2^2 - 2(k-l)(m-k^*)(m-k)\sum_{i = l}^k\Delta_n^TZ_i \\
    &+ 2(k-l)(m-k^*)(k-l+1)\sum_{i = k+1}^m\Delta_n^TZ_i + 2(k-l)(m-k^*)\sum_{i = l}^k\Delta_n^TZ_i\\
    &- 2(k-l)(k-l+1)\sum_{i = k^*+1}^m\Delta_n^TZ_i.
\end{align*}

Similarly if $k \geq k^*$ we have
\begin{align*}
    &\U(k;l,m) = \sum_{l \leq i_1 \neq i_2 \leq k}\sum_{k+1 \leq j_1 \neq j_2 \leq m}(X_{i_1} - X_{j_1})^T(X_{i_1} - X_{j_2})\\
    =&\sum_{l \leq i_1 \neq i_2 \leq k}\sum_{k+1 \leq j_1 \neq j_2 \leq m}(Z_{i_1} - Z_{j_1} + \E[X_{i_1}] - \Delta_n)^T(Z_{i_1} - Z_{j_2} + \E[X_{i_2}] - \Delta_n)\\
    =&\U(k;l,m) + (k^*-l+1)(k^*-l)(m-k)(m-k-1)\|\Delta_n\|_2^2 - 2(m-k-1)(k^*-l)\sum_{i = l}^{k^*}\sum_{j = k+1}^{m}\Delta_n^T(Z_{i} - Z_j)\\
    &-2(m-k-1)(k^*-l+1)\sum_{i = k^*+1}^k\sum_{j = k+1}^m\Delta_n^T(Z_i - Z_j)\\
    =& \U^Z(k;l,m) + (k^*-l+1)(k^*-l)(m-k)(m - k-1)\|\Delta_n\|_2^2 - 2(k^*-l+1)(m-k)(m-k-1)\sum_{i = l}^k\Delta_n^TZ_i \\
    &+ 2(m-k-1)(k^*-l+1)(k-l+1)\sum_{i = k+1}^m\Delta_n^TZ_i + 2(m-k-1)(m-k)\sum_{i = l}^{k^*}\Delta_n^TZ_i\\
    &- 2(m-k-1)(k^*-l+1)\sum_{i = k+1}^m\Delta_n^TZ_i.
\end{align*}
\end{proof}

\begin{proof}[Proof of Proposition \ref{prop:UW}]
To show the first result, we first assume $k < k^* - \epsilon_n$. Then according to Proposition \ref{prop:simple},
\begin{align*}
    \U(k;1,n) &= \U^Z(k;1,n) + k(k-1)(n-k^*)(n-k^*-1)\|\Delta_n\|_2^2 - 2(k-1)(n-k^*)(n-k)\sum_{i = 1}^k\Delta_n^TZ_i\\
    &+2k(k-1)(n-k^*)\sum_{i = k+1}^n\Delta_n^TZ_i + 2(k-1)(n-k^*)\sum_{i = 1}^k\Delta_n^TZ_i-2k(k-1)\sum_{i = k^* + 1}^n\Delta_n^TZ_i.
\end{align*}
Similarly we have
\begin{align*}
    \U(k^*;1,n) &= \U^Z(k^*;1,n) + k^*(k^*-1)(n-k^*)(n-k^*-1)\|\Delta_n\|_2^2 - 2(k^*-1)(n-k^*)(n-k^*-1)\sum_{i = 1}^{k^*}\Delta_n^TZ_i\\
                &+2k^*(k^*-1)(n-k^*-1)\sum_{i = k^*+1}^n\Delta_n^TZ_i.\\
\end{align*}
It is easy to verify that $\E[\U(k;1,n)] = k(k-1)(n-k^*)(n-k^*-1)\|\Delta_n\|_2^2$, for $k \leq k^*$. Furthermore, by Theorem 2.1 in \cite{hdcp} and the argument therein, we have 
$$\sup_{k = 2,...,n-2}|\U^Z(k;1,n)| = O(n^3\|\Sigma\|_F) = o_p(n^{3.5}\sqrt{\|\Sigma\|_F}\|\Delta_n\|_2),$$
since $\sqrt{\|\Sigma\|_F} = o(\sqrt{n}\|\Delta_n\|_2)$ by Assumption \ref{ass} (3), and 
$$\sup_{1\leq a \leq b \leq n}\left|\sum_{i = a}^b\Delta_n^TZ_i\right| = O_p(\sqrt{n}\sqrt{\Delta_n^T\Sigma\Delta_n}) \leq O_p(\sqrt{n\|\Sigma\|_2}\|\Delta_n\|_2) \leq O_p(\sqrt{n\|\Sigma\|_F}\|\Delta_n\|_2).$$ 
These imply that 
\begin{align*}
\U(k^*;1,n)=&k^*(k^*-1)(n-k^*)(n-k^*-1)\|\Delta_n\|_2^2 + O_p(n^{3.5}\|\Delta_n\|_2\sqrt{\|\Sigma\|_F})\\
=&k^*(k^*-1)(n-k^*)(n-k^*-1)\|\Delta_n\|_2^2 + 
o_p(n^4\|\Delta_n\|_2^2),
\end{align*} 
since $\sqrt{\|\Sigma\|_F} = o(\sqrt{n}\|\Delta_n\|_2)$ by Assumption \ref{ass} (3). Therefore, we have 
$$P(\sup_{k < k^* - \epsilon_n}|\U(k;1,n)| + \U(k^*;1,n) > 0) \rightarrow 1.$$

In addition, 
\begin{align*}
   &\sup_{k < k^* - \epsilon_n}|\U(k;1,n)| -  \U(k^*;1,n)\\
   \leq& \sup_{k < k^* - \epsilon_n}k(k-1)(n-k^*)(n-k^*-1)\|\Delta_n\|_2^2 + O_p(n^{3.5}\|\Delta_n\|_2\sqrt{\|\Sigma\|_F})\\
   &- k^*(k^*-1)(n-k^*)(n-k^*-1)\|\Delta_n\|_2^2 
   - O_p(n^{3.5}\|\Delta_n\|_2\sqrt{\|\Sigma\|_F})\\
   =&  -\epsilon_n(2k^* - \epsilon_n - 1)(n-k^*)(n-k^*-1)\|\Delta_n\|_2^2 + O_p(n^{3.5}\|\Delta_n\|_2\sqrt{\|\Sigma\|_F})\\
   =&-\epsilon_n(2k^* - \epsilon_n - 1)(n-k^*)(n-k^*-1)\|\Delta_n\|_2^2 + O_p(n^{4}\|\Delta_n\|_2^2/\sqrt{\gamma_{n,2}}).
\end{align*}
Since ${n/\sqrt{\gamma_{n,2}}} = o(n\gamma_{n,2}^{-1/4 + \kappa}) = o(\epsilon_n)$, we have 
\begin{align*}
&P(\sup_{k < k^* - \epsilon_n}|\U(k;1,n)|- \U(k^*;1,n) < 0)\\
\geq & P\Big(-\epsilon_n(2k^* - \epsilon_n - 1)(n-k^*)(n-k^*-1)\|\Delta_n\|_2^2 + O_p(n^{4}\|\Delta_n\|_2^2/\sqrt{\gamma_{n,2}}) < 0\Big) \rightarrow 1.
\end{align*}

Finally, it is straightforward to see that 
\begin{align*}
    &\sup_{k \leq k^* - \epsilon_n}\U(k;1,n)^2 - \U(k^*;1,n)^2 \leq \left(\sup_{k \leq k^* - \epsilon_n}|\U(k;1,n)|\right)^2 - \U(k^*;1,n)^2\\
    =&\left(\sup_{k \leq k^* - \epsilon_n}|\U(k;1,n)| - \U(k^*;1,n)\right)\left(\sup_{k \leq k^* - \epsilon_n}|\U(k;1,n)| + \U(k^*;1,n)\right).
\end{align*}
And
\begin{align*}
    &P\left(\left(\sup_{k \leq k^* - \epsilon_n}|\U(k;1,n)| - \U(k^*;1,n)\right)\left(\sup_{k \leq k^* - \epsilon_n}|\U(k;1,n)| + \U(k^*;1,n)\right) <  0\right)\\
    \geq& P\left(\left\{\sup_{k \leq k^* - \epsilon_n}|\U(k;1,n)| - \U(k^*;1,n) < 0\right\}\bigcap\left\{\sup_{k \leq k^* - \epsilon_n}|\U(k;1,n)| + \U(k^*;1,n) >  0\right\}\right) \rightarrow 1,
\end{align*}
since both $P(\sup_{k < k^* - \epsilon_n}|\U(k;1,n)| + \U(k^*;1,n) > 0)$ and $P(\sup_{k < k^* - \epsilon_n}|\U(k;1,n)|- \U(k^*;1,n) < 0)$ converge to 1. This is equivalent to 
$$P\left(\left(\sup_{k \leq k^* - \epsilon_n}|\U(k;1,n)| - \U(k^*;1,n)\right)\left(\sup_{k \leq k^* - \epsilon_n}|\U(k;1,n)| + \U(k^*;1,n)\right) \geq  0\right) \rightarrow 0,$$
and it implies that $P\left(\sup_{k < k^* - \epsilon_n}\U(k;1,n)^2 - \U(k^*;1,n)^2 \geq 0\right)\rightarrow 0$. Similar tactics can be applied to the case $k > k^* + \epsilon_n$ and by combining the two parts we have $P\left(\sup_{k \in \Omega_n}\U(k;1,n)^2 - \U(k^*;1,n)^2 \geq 0\right)\rightarrow 0$. Therefore this completes the proof for the first result.

	It remains to show the second part. Let us again assume $k < k^* - \epsilon_n$ first. By Proposition \ref{prop:simple} we have
    \begin{align*}
        W_{n,2}(k^*;1,n) &= \frac{1}{n}\sum_{t = 2}^{k^*-2}\U(t;1,k^*)^2 + \frac{1}{n}\sum_{t = k^*+2}^{n-2}\U(t;k^*+1,n)^2\\
        &=\frac{1}{n}\sum_{t = 2}^{k^*-2}\U^Z(t;1,k^*)^2 + \frac{1}{n}\sum_{t = k^*+2}^{n-2}\U^Z(t;k^*+1,n)^2,
    \end{align*}
    and
    \begin{align*}
        W_{n,2}(k;1,n) &= \frac{1}{n}\sum_{t = 2}^{k-2}\U(t;1,k)^2 + \frac{1}{n}\sum_{t = k+2}^{n-2}\U(t;k+1,n)^2\\
        &=\frac{1}{n}\sum_{t = 2}^{k-2}\U^Z(t;1,k)^2 + \frac{1}{n}\sum_{t = k+2}^{n-2}\U(t;k+1,n)^2.
    \end{align*}
    
    When $t$ is between $k+2$ and $k^*$, by Proposition \ref{prop:simple} we have
    \begin{align*}
        \U(t;k+1,n) = &\U^Z(t;k+1,n) + (t-k)(t-k-1)(n-k^*)(n - k^*-1)\|\Delta_n\|_2^2\\
        &- 2(t-k-1)(n-k^*)(n-t)\sum_{i = k+1}^t\Delta_n^TZ_i+ 2(t-k-1)(n-k^*)(t-k)\sum_{i = t+1}^n\Delta_n^TZ_i \\
        &+ 2(t-k-1)(n-k^*)\sum_{i = k+1}^t\Delta_n^TZ_i
        - 2(t-k-1)(t-k)\sum_{i = k^*+1}^n\Delta_n^TZ_i,
    \end{align*}
    and from the above decomposition we observe that $\E[\U(t;k+1,n)] = (t-k)(t-k-1)(n-k^*)(n - k^*-1)\|\Delta_n\|_2^2$, which is the second term in the above equality. Then
    \begin{align*}
        \U(t;k+1,n)^2 &= (\U(t;k+1,n) - \E[\U(t;k+1,n)] + \E[\U(t;k+1,n)])^2\\
        &\geq\E[\U(t;k+1,n)]^2 + 2\E[\U(t;k+1,n)](\U(t;k+1,n) - \E[\U(t;k+1,n)])\\
        &\geq\E[\U(t;k+1,n)]^2 - 2\E[\U(t;k+1,n)]\sup_{t = k+2,...,n-2}|\U(t;k+1,n) - \E[\U(t;k+1,n)]|,\\
        \end{align*}
    since $\E[\U(t;k+1,n)] > 0$. Furthermore, 
    \begin{align*}
        &\sup_{t = k+2,...,n-2}|\U(t;k+1,n) - \E[\U(t;k+1,n)]|\\ \leq & \sup_{t = k+2,...,n-2}|\U^Z(t;k+1,n)| + 8n^3\sup_{a < b, a,b = 1,...,n}\left|\sum_{i = a}^b\Delta_n^TZ_i\right|\\
        =&O_p(n^3\|\Sigma\|_F) + O_p(n^{3.5}\sqrt{\Delta_n^T\Sigma\Delta_n}) = o_p(n^4\|\Delta_n\|^2/\sqrt{a_n}),
    \end{align*}
    due to Assumption \ref{ass}, Theorem 2.1 and the argument in \cite{hdcp}. 
    
Similarly when $t$ is between $k^*$ and $n-2$, we have
\begin{align*}
    \U(t;k+1,n)^2 \geq\E[\U(t;k+1,n)]^2 - 2\E[\U(t;k+1,n)]\sup_{t = k+2,...,n-2}|\U(t;k+1,n) - \E[\U(t;k+1,n)]|,
\end{align*}
where $\E[\U(t;k+1,n)] = (k^*-k)(k^*-k-1)(n-t)(n-t-1)\|\Delta_n\|_2^2 > 0$, and 
\begin{align*}
 &\sup_{t = k+2,...,n-2}|\U(t;k+1,n) - \E[\U(t;k+1,n)]| \\
 \leq& O_p(n^3\|\Sigma\|_F) + O_p(n^{3.5}\sqrt{\Delta_n^T\Sigma\Delta_n}) = O_p(n^4\|\Delta_n\|^2/\sqrt{a_n})  
\end{align*}

Therefore by combining the above results we obtain that
\begin{align*}
    &W_{n,2}(k;1,n) \\
    \geq& \frac{1}{n}\sum_{t = 2}^{k-2}\U^Z(t;1,k)^2 + \frac{1}{n}\sum_{t = k+2}^{k^*}\E[\U(t;k+1,n)]^2 + \frac{1}{n}\sum_{t = k^*+1}^{n-2}\E[\U(t;k+1,n)]^2\\
    &- \frac{2}{n}\sup_{t = k+2,...,n-2}|\U(t;k+1,n) - \E[\U(t;k+1,n)]|\sum_{t = k+2}^{k^*}\E[\U(t;k+1,n)]\\
    &- \frac{2}{n}\sup_{t = k+2,...,n-2}|\U(t;k+1,n) - \E[\U(t;k+1,n)]|\sum_{t = k^*+1}^{n-2}\E[\U(t;k+1,n)]\\
    \gtrsim& (k^*-k)^5n^3\|\Delta_n\|_2^4  - (k^*-k)^3n\|\Delta_n\|_2^2\sup_{t = k+2,...,n-2}|\U(t;k+1,n) - \E[\U(t;k+1,n)]|\\
    &+(k^*-k)^4n^4\|\Delta_n\|_2^4 - (k^*-k)^2n^2\|\Delta_n\|_2^2\sup_{t = k+2,...,n-2}|\U(t;k+1,n) - \E[\U(t;k+1,n)]|\\ 
    &- \left(\sup_{k}\sup_{t = 2,...,k-2}|\U^Z(t;1,k)|\right)^2\\
    =& (k^*-k)^3n^3\|\Delta_n\|_2^4[(k^*-k)^2 - o_p(n^2/\sqrt{\gamma_{n,2}})] + (k^*-k)^2n^4\|\Delta_n\|_2^4[(k^*-k)^2 - o_p(n^2/\sqrt{\gamma_{n,2}})] - O_p(n^6\|\Sigma\|_F^2)\\
    \geq&(k^*-k)^3n^3\|\Delta_n\|_2^4[\epsilon_n^2 - o_p(n^2/\sqrt{\gamma_{n,2}})] + (k^*-k)^2n^4\|\Delta_n\|_2^4[\epsilon_n^2 - o_p(n^2/\sqrt{\gamma_{n,2}})] - O_p(n^6\|\Sigma\|_F^2)\\
    =&((k^*-k)^3n^3 + (k^*-k)^2n^4)\|\Delta_n\|_2^4\epsilon_n^2(1 - o_p(1))- O_p(n^6\|\Sigma\|_F^2),
\end{align*}
since $\epsilon_n = na_n^{-1/4 + \kappa}$. And
\begin{align*}
    \inf_{k < k^* - \epsilon}W_{n,2}(k;1,n) \gtrsim (\epsilon_n^3n^3 + \epsilon_n^2n^4)\|\Delta_n\|_2^4\epsilon_n^2(1 - o_p(1))- O_p(n^6\|\Sigma\|_F^2) = \epsilon_n^4n^4\|\Delta_n\|^4(1 - o_p(1)),
\end{align*}
since $\epsilon_n = o(n)$ and $\epsilon_n^4n^4\|\Delta_n\|^4/(n^6\|\Sigma\|_F^2) = \gamma_{n,2}^{1 + 4\kappa} \rightarrow \infty$. By very similar arguments, we can obtain the same bound for $\inf_{k > k^* + \epsilon}W_{n,2}(k;1,n)$, and hence $\inf_{k \in \Omega_n}W_{n,2}(k;1,n) \gtrsim  \epsilon_n^4n^4\|\Delta_n\|^4(1 - o_p(1))$. On the other hand, Theorem 2.1 implies that $W_{n,2}(k^*;1,n) = \frac{1}{n}\sum_{t = 2}^{k^*-2}\U^Z(t;1,k^*)^2 + \frac{1}{n}\sum_{t = k^*+2}^{n-2}\U^Z(t;k^*+1,n)^2 = O_p(n^6\|\Sigma\|_F^2)$. This indicates that $W_{n,2}(k^*;1,n) = \epsilon_n^4n^4\|\Delta_n\|^4o_p(1)$, and consequently, 
$$P\left(W_{n,2}(k^*;1,n) - \inf_{k \in \Omega_n}W_{n,2}(k;1,n) \geq 0\right) \leq P\Big(\epsilon_n^4n^4\|\Delta_n\|^4o_p(1) - \epsilon_n^4n^4\|\Delta_n\|^4(1 - o_p(1)) \geq 0\Big) \rightarrow 0.$$ This completes the whole proof.
\end{proof}

\section{Application to network change-point detection}
\label{sec:network}

Our change-point testing and estimation methods are applicable to network change-point detection in the following sense. Suppose we observe $n$ independent networks $\{A_t\}_{t=1}^n$ over time with $m$ nodes. Here $A_t$ is the $m\times m$ adjacency matrix at time $t$. We assume the edges in each network are generated from Bernoulli random variables and are un-directed. That is, 
\[A_{ij,t}=1~ \mbox{if nodes $i$ and $j$ are connected at time $t$ and} ~ 0 ~\mbox{otherwise}.\]
Let $A_t=(A_{ij,t})_{i,j=1}^{m}$ and assume $E(A_{ij,t})=p_{ij,t}$. Let $E(A_t)=\Theta_t=(p_{ij,t})_{i,j=1}^{m}$.

Suppose that  we are interested in testing 
\[H_0:\Theta_1=\cdots=\Theta_n\]
versus certain change point alternatives. Here we can convert the adjacency matrix into a high-dimensional vector, and apply our test and estimation procedures. Note that  a mean shift in $vech(\Theta_t)$ implies a shift in variance matrix of $vech(A_t)$, so the variance matrix is not constant under the alternative.
However, the asymptotic distribution of our SN-based test statistics still holds under the null,  and our change-point detection method is applicable. Note that our method allows the edges to be weakly dependent, which can be satisfied by many popular network models; see \cite{wang2018opt}.


To examine the finite sample performance of our change-point testing and estimation in the network framework, we consider the following stochastic block model as in \cite{wang2018opt}. We generate $A_t$ as a matrix with entries being  i.i.d. Bernoulli variables with mean matrix $\Theta_t=\mu_t ZQZ^T-\diag(\mu_t ZQZ^T)$ where $Z\in \R^{m\times r}$ is the membership matrix and  $Q\in [0, 1]^{r\times r}$ is the connectivity matrix. We set $Z$ to be the first $r$ columns of identity matrix $I_m$ so that $\operatorname{rank}(Z)=r$, and $Q=\bm{1}_r\cdot\bm{1}_{r}^T$ be a matrix of ones.

Table \ref{sizeonenw} presents the size with 1000 Monte Carlo repetitions. We take $r=cm,\mu_t\equiv0.1/c$ with $c=0.2,1$.

\begin{table}[H]
\scriptsize
\centering
\begin{tabular}{|c|c|c|c|c|c|c|c|c|c|c|c|}
  \hline
  {DGP} & \multirow{2}{*}{$(n,m)$}  &  \multicolumn{5}{c|}{ $\mathcal{H}_0$,5\%} &  \multicolumn{5}{c|}{ $\mathcal{H}_0$,10\%} \\
  \cline{3-12}
  $c$ & & $q=2$ & $q=4$ & $q=6$ & $q=2,4$ & $q=2,6$ & $q=2$ & $q=4$& $q=6$& $q=2,4$ & $q=2,6$ \\
   \hline
  \multirow{2}{*}{1} & (200,10) & 0.035 & 0.096 & 0.068 & 0.08 & 0.048 & 0.075 & 0.152  & 0.135 & 0.124 & 0.096   \\
  \cline{2-12}
  & (400,20) & 0.054 & 0.084 & 0.049 & 0.071 & 0.048 & 0.097 & 0.142 & 0.094 & 0.135 & 0.099\\
  \hline
    \multirow{2}{*}{0.2} & (200,10) & 0.065 & 0.117 & 0.08 & 0.116 & 0.062 & 0.095 & 0.153  & 0.151 & 0.147 & 0.121   \\
  \cline{2-12}
  & (400,20)  & 0.05 & 0.101 & 0.043 & 0.09 & 0.047 & 0.099 & 0.153 & 0.096 & 0.137 & 0.083\\
  \hline
\end{tabular}
\caption{Size for testing one change point of network time series}
\label{sizeonenw}
\end{table}

As regards the power simulation, we generate the network data with a change point located at $\lf n/2\rf$, which leads to $\mu_t=\mu+\delta\ind(t>n/2)\cdot\mu$. We take $\mu=0.1/c, r=cm$ with $c=0.2,1$ and $\delta=0.2,0.5$. We obtain the empirical power based on 1000 Monte Carlo repetitions.

\begin{table}[H]
\scriptsize
\centering
\begin{tabular}{|c|c|c|c|c|c|c|c|c|c|c|c|}
  \hline
  DGP & \multirow{2}{*}{$(n,m)$}  &  \multicolumn{5}{c|}{ $\mathcal{H}_0$,5\%} &  \multicolumn{5}{c|}{ $\mathcal{H}_0$,10\%} \\
  \cline{3-12}
   ($\delta,c$) & & $q=2$ & $q=4$ & $q=6$ & $q=2,4$ & $q=2,6$  & $q=2$ & $q=4$& $q=6$& $q=2,4$ & $q=2,6$  \\
   \hline
  \multirow{2}{*}{(0.2,1)} & (200,10) & 0.152 & 0.172 & 0.116 & 0.19 & 0.145 & 0.223 & 0.254 & 0.225 & 0.265 & 0.222  \\
  \cline{2-12}
  & (400,20)  & 0.83 & 0.309 & 0.238 & 0.787 & 0.775 & 0.908 & 0.411 & 0.364 & 0.865 & 0.85  \\
  \hline
    \multirow{2}{*}{(0.5,1)} & (200,10) & 0.93 & 0.628 & 0.527 & 0.917 & 0.904 & 0.963 & 0.723 & 0.666 & 0.952 & 0.937  \\
  \cline{2-12}
  & (400,20)  & 1 & 0.995 & 0.97 & 1 & 1 & 1 & 0.997 & 0.99 & 1 & 1  \\
  \hline
    \multirow{2}{*}{(0.2,0.2)} & (200,10) & 0.804 & 0.677 & 0.61 & 0.798 & 0.755  & 0.866 & 0.75 & 0.708 & 0.86 & 0.829  \\
  \cline{2-12}
  & (400,20)  & 1 & 0.994 & 0.991 & 1 & 1 & 1 & 0.997 & 0.999 & 1 & 1 \\
  \hline
    \multirow{2}{*}{(0.5,0.2)} & (200,10) & 1 & 1 & 1 & 1 & 1 & 1 & 1 & 1 & 1 & 1   \\
  \cline{2-12}
  & (400,20)  & 1 & 1 & 1 & 1 & 1 & 1 & 1 & 1 & 1 & 1  \\
  \hline
\end{tabular}
\caption{Power for testing one change point of network time series}
\label{poweronenw}
\end{table}
We can see that our method exhibits similar size behavior as compared to the setting for Gaussian distributed data in Section~\ref{sec:sim1}.  The power also appears to be quite good and increases when the signal increases. Unfortunately, we are not aware of any particular testing method tailored for single network change-point so we did not include any other method into the comparison. 

To estimate the change-points in the network time series, we also combine our method with WBS. We generate 100 samples of networks with connection probability $\mu_t$ and sparsity parameter $r$. The 3 change points are located at $30, 60$ and $90$. We take $\mu_t=\mu+\delta\cdot\ind(30<t\le60\text{ or }t>90)\cdot\mu$. We report the MSE and ARI of 100 Monte Carlo simulations as before. We compare our method with modified neighborhood smoothing (MNBS) algorithm in \cite{zhao2019} and the graph-based test in \cite{chen2015graph} combined with the binary segmentation (denoted as CZ). We do not include a comparison with \cite{wang2018opt} as their method requires two iid samples. We can see that CZ performs worse than the other two methods as our simulation involves non-monotonic changes in the mean that does not favor binary segmentation. When the network becomes sparse, i.e. $c=0.3$, our method also has better performance than MNBS. Overall the performance of our method (e.g., WBS-SN(2), WBS-SN(2,6)) seem quite stable. Of course, the scope of this simulation is quite limited, and we leave a more in-depth investigation of network change-point estimation to near future.

\begin{table}[H]
    \centering
    \begin{tabular}{c|c|ccccccc|c|c}
    \hline
    \multirow{2}{*}{$(\mu,\delta,c)$}  &  \multirow{2}{*}{} & \multicolumn{7}{c|}{$\hat{N}-N$} & \multirow{2}{*}{MSE} & \multirow{2}{*}{ARI}  \\
    \cline{3-9}
     & & -3 & -2 & -1 & 0 & 1 & 2 & 3 & & \\
     \hline
     \multirow{6}{*}{(0.2, 1,1)} & WBS-SN(2) & 0 & 1 & 14 & 74 & 10 & 1 & 0 & 0.32 & 0.865 \\
     \cline{2-11}
      & WBS-SN(4)  & 90 & 9 & 1 & 0 & 0 & 0 & 0 & 8.47 & 0.0373 \\
        \cline{2-11}
      & WBS-SN(6)  & 32 & 23 & 24 & 16 & 4 & 1 & 0 & 4.12 & 0.278 \\
      \cline{2-11}
      & WBS-SN(2,6)  & 1 & 2 & 18 & 39 & 32 & 8 & 0 & 0.99 & 0.728 \\
     \cline{2-11}
      & CZ & 46 & 50 & 4 & 0 & 0 & 0 & 0 & 6.18 & 0.165 \\
      \cline{2-11}
      & MNBS & 0 & 2 & 17 & 55 & 23 & 3 & 0 & 0.6 & 0.847  \\
     \hline
          \multirow{6}{*}{(0.1, 1,0.3)} & WBS-SN(2) & 0 & 0 & 4 & 82 & 14 & 0 & 0 & 0.18 & 0.893 \\
     \cline{2-11}
      & WBS-SN(4)  & 12 & 17 & 38 & 33 & 0 & 0 & 0 & 2.14 & 0.604 \\
        \cline{2-11}
      & WBS-SN(6)  & 28 & 27 & 27 & 14 & 4 & 0 & 0 & 3.91 & 0.383 \\
      \cline{2-11}
      & WBS-SN(2,6)  & 0 & 1 & 8 & 60 & 29 & 2 & 0 & 0.49 & 0.852 \\
     \cline{2-11}
      & CZ & 55 & 33 & 6 & 4 & 1 & 1 & 0 & 6.38 & 0.156 \\
      \cline{2-11}
      & MNBS & 97 & 0 & 2 & 1& 0 & 0 & 0 & 8.75 & 0.019 \\
     \hline
    \end{tabular}
    \caption{Multiple change point location estimations for network time series}
    \label{simwbsnw}
\end{table}

\end{document}